\documentclass[12pt, draftclsnofoot, onecolumn]{IEEEtran}

\def\outputtype{0} 
\usepackage{ifthen}
\ifcase\outputtype
\or
   \usepackage[1-50]{pagesel} 
\or
   \usepackage[51-]{pagesel} 
\fi

\usepackage{amsmath, amsfonts}     
\usepackage{amssymb}               
\usepackage{cases}                 
\usepackage{relsize}               

\allowdisplaybreaks[4]

\makeatletter
\newcommand{\biggg}{\bBigg@{3}}

\newcommand{\Biggg}{\bBigg@{3.5}}

\newcommand{\bigggg}{\bBigg@{4}}

\newcommand{\Bigggg}{\bBigg@{4.5}}

\makeatother

\usepackage{algorithm}            
\usepackage{algorithmic}          

\usepackage{graphicx}             
\graphicspath{{./figs/}}          

\usepackage{adjustbox}            

\usepackage{makecell}            
\usepackage{array}               
\usepackage{threeparttable}      
\usepackage{stfloats}            

\usepackage[caption=false,font=normalsize]{subfig}

\usepackage{hyperref}            
\usepackage{xcolor}              
\usepackage{color}               
\definecolor{StandardBlue}{rgb}{0,0,1}

\usepackage{textcomp}            
\usepackage{url}                 
\usepackage{verbatim}            
\usepackage{cite}                
\usepackage{enumitem}            

\newcommand{\Rx}{R_{\xv}}

\newcommand{\xv}{{\bf X}}

\newcommand{\xvh}{{\widehat {\bf X}}}

\newcommand{\xvhs}{{\widehat {\bf X}}^{\star}}

\newcommand{\zv}{{\bf Z}}
\newcommand{\zvs}{{\bf Z}^{\star}}

\newcommand{\Kxh}{\mathsf{K}_{\xvh}}

\newcommand{\Kxs}{\mathsf{K}_{\xvh^{\star}}}

\newcommand{\dv}{{\sf d}}
\newcommand{\ev}{{\sf e}}
\newcommand{\av}{{\sf a}}

\newcommand{\vv}{{\sf v}}
\newcommand{\uv}{{\sf u}}

\newcommand{\Gammam}{\mathsf{\Gamma}}
\newcommand{\dsv}{\mathsf{d}^{\star}}
\newcommand{\Kz}{\mathsf{K}_{\zv}}
\newcommand{\Kzs}{\mathsf{K}_{\zv^{\star}}}

\newcommand{\Dm}{\mathsf{D}}

\newcommand{\Em}{{\sf E}}
\newcommand{\Dsm}{\mathsf{D}^{\star}}
\newcommand{\Am}{\mathsf{A}}
\newcommand{\Bm}{\mathsf{B}}
\newcommand{\Cm}{\mathsf{C}}

\newcommand{\Rm}{\mathsf{K}}

\newcommand{\tr}{\mathrm{tr}}
\newcommand{\rank}{\mathrm{r}}

\renewcommand{\det}{\mathrm{det}}
\newcommand{\diag}{\mathrm{diag}}

\newcommand{\zerov}{{\sf 0}}

\newcommand{\iso}{{\rm iso}}

\newcommand{\dd}{{\, \rm{d}}}

\newcommand{\T}{\mathrm{T}}

\newcommand{\lambdav}{\hbox{\boldmath$\lambda$}}

\newcommand{\Pim}{\hbox{\boldmath$\Pi$}}

\def\mindex#1{\index{#1}}

\def\sq{\hbox{\rlap{$\sqcap$}$\sqcup$}}
\def\qed{\ifmmode\sq\else{\unskip\nobreak\hfil
\penalty50\hskip1em\null\nobreak\hfil\sq
\parfillskip=0pt\finalhyphendemerits=0\endgraf}\fi\medskip}

\long\def\defbox#1{\framebox[.9\hsize][c]{\parbox{.85\hsize}{%
\parindent=0pt
\baselineskip=12pt plus .1pt      
\parskip=6pt plus 1.5pt minus 1pt 
#1}}}

\long\def\beginbox#1\endbox{\subsection*{}%
\hbox{\hspace{.05\hsize}\defbox{\medskip#1\bigskip}}%
\subsection*{}}
\def\endbox{}

\newsavebox{\junk}
\savebox{\junk}[1.6mm]{\hbox{$|\!|\!|$}}

\def\limsup{\mathop{\rm lim\ sup}}

\def\bfmath#1{{\mathchoice{\mbox{\boldmath$#1$}}%
{\mbox{\boldmath$#1$}}%
{\mbox{\boldmath$\scriptstyle#1$}}%
{\mbox{\boldmath$\scriptscriptstyle#1$}}}}

\def\bfmY{\bfmath{Y}}

\def\bfmhhaY{\bfmath{\hhaY}} 
\def\bfmhhaY{\hbox to 0pt{$\widehat{\bfmY}$\hss}\widehat{\phantom{\raise 1.25pt\hbox{$\bfmY$}}}}

\def\til={{\widetilde =}}

\def\FRAC#1#2#3{\genfrac{}{}{}{#1}{#2}{#3}}

\def\ddtp{{\mathchoice{\FRAC{1}{d^{\hbox to 2pt{\rm\tiny +\hss}}}{dt}}%
{\FRAC{1}{d^{\hbox to 2pt{\rm\tiny +\hss}}}{dt}}%
{\FRAC{3}{d^{\hbox to 2pt{\rm\tiny +\hss}}}{dt}}%
{\FRAC{3}{d^{\hbox to 2pt{\rm\tiny +\hss}}}{dt}}}}

\def\average#1,#2,{{1\over #2} \sum_{#1}^{#2}}

\def\eye(#1){{\bf(#1)}\quad}

\usepackage{amsthm}

\newtheoremstyle{mybold}
      {3pt}
      {3pt}
      {\itshape}
      {}
      {\bfseries}
      {.}
      { }
      {}

\newtheoremstyle{myremark}
      {3pt}
      {3pt}
      {\normalfont}
      {}
      {\bfseries}
      {.}
      { }
      {}

\theoremstyle{myremark}
\newtheorem{remark_}{Remark}  

\newenvironment{remark}
      {\begin{remark_}}
      {\hfill$\lozenge$\end{remark_}}

\theoremstyle{mybold}
\newtheorem{theorem}{Theorem}
\newtheorem{lemma}{Lemma}
\newtheorem{definition}{Definition}
\newtheorem{corollary}{Corollary}
\newtheorem{proposition}{Proposition}

\def\eq#1/{(\ref{eq:#1})}

\newcommand{\beqn}[1]{\notes{#1}%
\begin{eqnarray} \elabel{#1}}

\newcommand{\eeqn}{\end{eqnarray} }

\newcommand{\beq}[1]{\notes{#1}%
\begin{equation}\elabel{#1}}

\newcommand{\eeq}{\end{equation}}

\def\bdes{\begin{description}}
\def\edes{\end{description}}

\newcounter{rmnum}

\newcounter{anum}

{\end{list}}

\def\ass(#1:#2){(#1\ref{#1:#2})}

\def\ritem#1{
\item[{\sf \ass(\current_model:#1)}]
}

\newenvironment{recall-ass}[1]{%
\begin{description}
\def\current_model{#1}}{
\end{description}
}

\newcounter{problem}
\newcounter{save@equation}
\newcounter{save@problem}
\makeatletter
\newenvironment{problem}
{\setcounter{save@equation}{\value{equation}}%
   \setcounter{save@problem}{\value{problem}}%
   \setcounter{problem}{\value{equation}}%
   \let\c@equation\c@problem%
   \renewcommand{\theequation}{\arabic{equation}}%
   \subequations}
{\endsubequations%
   \setcounter{save@problem}{\value{equation}}%
   \setcounter{equation}{\value{save@equation}}}
\makeatother

\begin{document}

\title{On the Gaussian-Quadratic Rate-Distortion Function for Vector Sources with Individual Distortion Constraints}

\author{
Shuao Chen, Junyuan Gao, Yuxuan Shi, Yongpeng Wu, Giuseppe Caire, \\
H. Vincent Poor, and Wenjun Zhang
\thanks{
   S. Chen, Y. Wu, and W. Zhang are with the Department of Electronic Engineering, Shanghai Jiao Tong
   University, Shanghai 200240, China (e-mails:
   \{shuao.chen, yongpeng.wu, zhangwenjun\}@sjtu.edu.cn)
   (Corresponding author: Yongpeng Wu).
}
\thanks{
   J. Gao is with the Department of Electrical and Electronic Engineering, The Hong Kong Polytechnic University, Hong Kong SAR, China (e-mail: junyuan.gao@polyu.edu.hk).
}
\thanks{
   Y. Shi is with the Department of Networked Intelligence, Pengcheng Laboratory, Shenzhen 410083, China (e-mail: shiyx01@pcl.ac.cn). 
}
\thanks{
   G. Caire is with the Communications and Information Theory Group, Technische Universit{\"a}t Berlin, Berlin 10587, Germany (e-mail: caire@tu-berlin.de).
}
\thanks{
   H. V. Poor is with the Department of Electrical and Computer Engineering, Princeton University, Princeton, NJ 08544, USA (e-mail: poor@princeton.edu).
}
\thanks{Parts of this work were presented at the 2025 IEEE Global Communications Conference~\cite{chen2025joint}.
}
}

\maketitle

\vspace*{-8mm}
\begin{abstract}
This paper investigates the Gaussian-quadratic lossy compression with arbitrary source length under individual distortion constraints.
The rate-distortion function (RDF) is lower-bounded by a Hadamard inequality-based rate, which is tight if and only if the semidefinite condition (SDC) holds. Otherwise, this bound becomes loose, and analytical results are lacking. 
Moreover, the fundamental quantitative relationship between source correlations and the RDF remains incomplete. 
In this paper, we provide new theoretical results under different source covariance matrices and distortion constraints.
First, under arbitrary covariance and distortion constraints, we obtain the spectral properties of the optimal source reconstruction achieving the RDF, and a stronger scalar inequality version of the SDC. 
We propose a class of source covariance matrices based on hierarchical correlations and show that studying the two-type correlation (2-TC) model is sufficient to establish the analytical foundation for the broader class.
Under this covariance, we obtain the RDF with source correlations explicitly incorporated when the SDC holds, and analyze the SDC from the perspectives of distortion constraints and source correlations.
Next, under the 2-TC covariance and two-type distortion (2-TD) constraints, we establish the complete RDFs over seven regions on a distortion plane, with the optimal distortion (rate) allocations determined in each region.
It is revealed that the essence of pursuing the complete RDF lies in thoroughly analyzing the correlations between the optimal distortions.
Finally, under isotropic correlation and identical constraints, we provide the per-component compression rate and show that exploiting correlations can significantly reduce compression costs.
\end{abstract}

\begin{IEEEkeywords}
   Distortion criterion, Hadamard inequality, lossy compression, rate-distortion function, vector source.
\end{IEEEkeywords}

\section{Introduction}

\IEEEPARstart{T}{he} rapid growth of artificial intelligence (AI), Internet of Things (IoT), and edge computing has led to the generation of vast datasets containing components with heterogeneous significance, in which mission-critical elements coexist with non-essential data~\cite{baccour2022pervasive,shi2020communication}. 
The next generation of communication systems is expected to confront unprecedented challenges in handling the massive creation, transmission, and processing of data~\cite{agiwal2016next,chafii2023twelve}.
Given the potentially stringent requirements for processing latency, transmission efficiency, and storage limitations, effective compression of these heterogeneous data streams will be indispensable in enhancing overall system efficiency.
From an information-theoretic perspective, the fundamental trade-off in lossy compression lies between the compression rate and the fidelity of the reconstructed source~\cite{shannon1959coding}, and is mathematically characterized by the rate-distortion function (RDF)~\cite{berger1971rate}.

The choice of distortion constraint criterion is critical for establishing the RDF.
For an \(N\)-dimensional source \(\xv \in \mathbb{R}^N\) and its reconstruction \(\xvh \in \mathbb{R}^N\), the distortion matrix is defined as \(\Dm \triangleq \mathbb{E}[(\xv - \xvh)(\xv - \xvh)^{\T}]\), whose diagonal entries \(\{d_i\}_{i=1}^N\) correspond to the component-wise distortions.
The distortion constraint matrix is given by \(\Em=\diag(e_{1},\cdots,e_{N})\), where \(e_{i}\) specifies the permissible distortion for the \(i\)-th component of the source \(\mathbf{X}\). 
Two mostly adopted criteria are as follows~\cite{gray1973new,oohama2012distributed,oohama2014indirect}:

\begin{itemize}
\item \textbf{Sum distortion criterion:} This criterion constrains the sum distortion between the vector source and its reconstruction across all components,
\begin{align}
   \sum_{i=1}^{N} d_{i} \leq \sum_{i=1}^{N} e_{i}.
\end{align}
For a vector Gaussian source \(\xv\) with per-component quadratic distortions \(d_i = \mathbb{E}[|X_i - \widehat{X}_i|^2]\), the RDF is given in closed form and the optimal allocation of \(\{d_i\}_{i=1}^N\) is obtained via the well-known reverse water-filling principle~\cite{kolmogorov1956shannon,thomas2006elements}. 
\item \textbf{Individual distortion criteria:}
The core idea behind these criteria is to impose potentially different reconstruction distortion constraints on each component of the vector source,
\begin{align} \label{eq:individual constraints}
d_{i} \leq e_{i}, \quad \forall \, 1 \leq i \leq N.
\end{align}
Within information-theoretic research, these criteria have been extensively used in network coding, including distributed coding~\cite{berger1978multiterminal,tung1978multiterminal} and multiple description coding~\cite{gamal1982achievable,equitz1991successive}. 
These criteria are particularly useful for sources with heterogeneous fidelity requirements across components with different levels of importance. However, the RDF under these more general distortion constraints lacks a complete and closed-form expression, even in the case of Gaussian sources with quadratic distortion. Therefore, this paper focuses on a thorough investigation of the RDF and the explicit quantitative role of source correlation in the Gaussian-quadratic case.
\end{itemize}

Some existing works have explored the RDF for the Gaussian-quadratic lossy compression under individual distortion constraints. Specifically, for a vector source of length \(N=2\), Xiao et al.~\cite[Theorem 6]{xiao2005compression} derived a closed-form RDF. Later, Lapidoth~\cite[Theorem 3.1]{lapidoth2010sending} and Tinguely~\cite[Theorem 2.2]{tinguely2008transmitting} obtained the same result using an alternative approach.
For arbitrary \(N\), the current best-known results on the RDF use the Hadamard inequality to provide a lower bound~\cite{xiao2005compression}. The Hadamard inequality states that for any positive semidefinite (PSD) matrix \(\Dm \succeq \mathsf{0}\), the following holds~\cite{horn2012matrix}:
\begin{align} \label{eq:hadamard inequality}
\det(\Dm) \leq \prod_{i=1}^N d_i.
\end{align}
This leads to a Hadamard lower bound on the RDF
\begin{align} \label{eq:hadamard bound}
\Rx([N], \ev) \geq R^{\ell}_{\xv}([N], \ev),
\end{align}
where we define the rate as
\begin{align} \label{eq:hadamard bound value}
R^{\ell}_{\xv}([N], \ev) \triangleq \frac{1}{2} \log \frac{\det(\Rm)}{\det(\Em)},
\end{align}
and \(\Rm\) denotes the covariance matrix of the vector Gaussian source.
Equality in the Hadamard bound holds if and only if the semidefinite condition (SDC) defined by the Loewner order~\cite{horn2012matrix}
\begin{align} \label{eq:SDC}
\Rm \succeq \Em
\end{align}
is satisfied, where the optimal distortion matrix is \(\Dsm = \Em\).

Current research on Gaussian-quadratic lossy compression under individual distortion constraints still faces several issues.
First, when the SDC holds, the relationship between the distortion constraints and source correlation has not been fully characterized at the element-wise level. Such a characterization is especially important for revealing how the SDC can be satisfied in practice.
In particular, it remains to determine the set of diagonal matrices \(\Em\) (i.e., the individual distortion constraints) that ensure the SDC for a given Gaussian source covariance matrix \(\Rm\), and vice versa, to identify the set of PSD matrices \(\Rm\) that ensure the SDC for a given \(\Em\).
Second, a major limitation arises when the SDC is not satisfied, as the Hadamard bound in Eq.~\eqref{eq:hadamard bound} can become overly loose and provide no useful information about the RDF.
This issue becomes particularly severe when \(\det(\Em)>\det(\Rm)\), i.e., the source correlations are too strong or the imposed individual distortion constraints are too mild, under which the corresponding Hadamard rate in Eq.~\eqref{eq:hadamard bound value} turns negative.
Despite this, it is noteworthy that existing work often neglects the non-SDC case, resulting in a lack of analytical results, including closed-form expressions for the RDF.
Although numerical solutions for the RDF can be obtained, they are typically limited to point-wise evaluations and become computationally impractical as the source dimension increases.
More importantly, such methods are unable to provide analytical expressions for the RDF or the optimal distortion allocation, and thus fail to explicitly characterize the influence of source correlations and individual distortion constraints on the optimal compression.

To address these problems, this paper establishes new general performance bounds for the non-SDC case, provides an \textit{element-wise} characterization of SDC, and derives a \textit{complete} and \textit{closed-form} RDF under specific conditions.
Here, completeness refers to full coverage of all individual distortion constraints, and closed form refers to an explicit analytical expression rather than the solution of an optimization problem.
Specifically, we first consider the case where both the source covariance matrix and the individual distortion constraints are arbitrary. We obtain the spectral properties of the optimal source reconstruction achieving the RDF. A stronger version of the SDC, expressed as a scalar inequality, is also derived.
To make further progress toward an explicit RDF expression, we introduce a class of covariance matrices with hierarchical \(n\)-type correlation (\(n\)-TC), where \(1 \leq n \leq N-1\) and \(N\) is the number of source components.
We show that the methods and results developed for the 2-TC case naturally extend to this broader \((N-1)\)-TC class. 
Under a 2-TC source covariance and arbitrary distortion constraints, when the SDC holds, we derive the RDF that explicitly incorporates source correlations and distortion constraints. We then equivalently transform the SDC defined by the Loewner partial order into multiple scalar inequalities, enabling precise characterization of the SDC region from the perspectives of distortion constraints and source correlations. From the distortion-constraint perspective, we provide the probability that the SDC holds when the distortion constraints are independently and identically distributed. From the source-correlation perspective, we determine the range of inter-component correlations of a 2-TC source under which the SDC holds.
We next consider the case of 2-type distortion (2-TD) constraints under a 2-TC covariance matrix. We derive complete and closed-form RDFs for all seven regions of the distortion plane, with boundaries that ensure optimal distortion (rate) allocations. We then analyze the compression problem under extreme correlation and in the asymptotic regime of large source length. Finally, we provide the RDF under isotropic correlation and identical constraints.
In summary, our theoretical results refine the characterization of SDC, develop a framework for analyzing the non-SDC case, and quantify the impact of correlations on achievable compression, offering valuable insights for real-world system design and optimization.

We provide theoretical results based on different source covariances and imposed distortion constraints. The contributions of this paper are summarized as follows.
\begin{itemize}
   \item When the source covariance and distortion constraints are both arbitrary, we obtain the spectral properties of the optimal reconstruction and provide a stronger version of the SDC.
Specifically, Theorem~\ref{thm:det(Rm-Dsm)} shows that the SDC is satisfied and inactive if and only if the optimal source reconstruction is non-degenerate.
Furthermore, Theorem~\ref{thm:inertia relation} provides a general upper bound on the number of independent components in the optimal reconstruction. On one hand, it reveals how the imposed individual distortion constraints affect the structure of the optimal reconstruction. On the other hand, it suggests that when the SDC does not hold, a viable strategy for effective compression is to seek reconstructions with fewer independent components.
Moreover, the proof technique for Theorem~\ref{thm:inertia relation} offers a general framework for analyzing rank properties in constrained Max-Det optimization problems by translating the algebraic properties of the optimal solution into the geometric space of the dual variables.
Proposition~\ref{prop:weyl bound} provides a more intuitive interpretation of the SDC: the variances of the equivalent parallel sources from the original vector source should all be no smaller than the mildest distortion constraint among the components. In this case, allocating the full distortion to each original source component is optimal.
   \item We first propose a class of Gaussian source covariance matrices with arbitrarily hierarchical \(n\)-type correlations.
Theorem~\ref{thm:2TC_extension} shows that focusing on the 2-TC case is sufficient to establish the analytical foundation for Gaussian-quadratic compression under general \((N-1)\)-TC covariance.
Under the 2-TC covariance and arbitrary distortion constraints, we derive the relationship between correlations, distortion constraints, and the optimal compression rate when SDC holds, and further characterize the SDC region from both distortion-constraint and source-correlation perspectives.
Specifically, in Theorem~\ref{thm:RE_psd_conditions}, we obtain an element-wise characterization of the SDC and a closed-form RDF that incorporates source correlations.
From the distortion-constraint perspective, when individual distortion constraints are i.i.d., Theorem~\ref{thm:sdc exponential asymp} provides both an upper bound and an asymptotic approximation for the probability that the SDC holds. A key insight is that this probability decays exponentially with the source length, at a rate determined by the source correlations. This underscores the relevance of deriving the RDF when the SDC does not hold.
From the source-correlation perspective, Theorem~\ref{thm:max rho0} identifies the component-wise correlation for a 2-TC source to satisfy the SDC, revealing an elegant trade-off between source correlation and distortion constraints.
   \item Under the 2-TC covariance and 2-TD constraints, we provide complete and closed-form RDFs, refine our partial results for clarity, and analyze the compression problem under extreme correlation and in the asymptotic regime of infinite source length.
Specifically, we establish the structure of the optimal distortion matrix for this setting in Theorem~\ref{thm:dsm model}.
Theorem~\ref{thm:seven regions main} presents the RDF, by partitioning the entire distortion plane into seven distinct regions, with the optimal distortion allocations in each region rigorously established.
In Theorem~\ref{thm:iso tighter}, we derive a more tractable achievable upper bound for the RDF in a region, and show that it provably outperforms the known Hadamard bound.
An important insight from our theoretical results is that, under individual distortion criteria, the optimality of distortion allocation is achieved through the existence of correlations between the distortions of different components, which are overlooked by the Hadamard rate.
Moreover, the intricate relationships between distortion allocations across different regions highlight the critical importance of partitioning the distortion space to derive the RDF in general settings.
Interestingly, even when source components are positively correlated, their reconstruction distortions can be negatively correlated, suggesting that an overestimation in one component is likely to be accompanied by an underestimation in another.
The Gaussian-quadratic lossy compression problem we study, under extreme correlations, relates to some typical scenarios of interest, and in the asymptotic regime, it extends the existing two-component source compression problem.
   \item Under the isotropic correlation and identical constraints, Corollary~\ref{coro:alliso rd} shows that the RDF under individual distortion criteria is identical to that under the sum distortion criterion.
In the limit of large source length, an asymptotic expression for the average compression rate per component is provided.
When the SDC is satisfied, i.e., the source correlation is not sufficiently strong, the average rate converges to a non-zero quantity, jointly determined by the correlation and the distortion constraints.
In contrast, when the correlation is strong enough to violate the SDC, the average rate vanishes at a faster rate.
Our theoretical results quantitatively show how source correlations reduce storage and processing overhead while maintaining the desired reconstruction quality.
\end{itemize}

\subsubsection*{Organization}

The rest of the paper is organized as follows.
In Section~\ref{sec:System Model}, we introduce our problem setup and define the RDFs under both sum and individual distortion criteria.
In Section~\ref{sec:Prior Results}, we review the existing results on RDFs under both criteria.
Section~\ref{sec:Source Compression with Arbitrary Covariance and Distortion Constraints} covers the results under the arbitrary covariance and arbitrary distortion constraints.
Section~\ref{sec:Source Compression with 2-TC Covariance and Arbitrary Distortion Constraints} covers the results under the 2-TC covariance and arbitrary distortion constraints.
Section~\ref{sec:Source Compression with 2-TC Covariance and 2-TD Constraints} covers the results under the 2-TC covariance and 2-TD constraints.
Section~\ref{sec:Source Compression with Isotropic Correlation and Identical Constraints} covers the results under the isotropic correlation and identical constraints.
Section~\ref{sec:Conclusion} concludes the paper.

\subsubsection*{Notation}

Bold uppercase letters denote random vectors. Uppercase and lowercase sans-serif letters denote deterministic matrices and vectors, respectively. Calligraphic letters denote sets.
\(\rank(\av)\) denotes the number of non-zero entries in a vector \(\av\). We use \(\tr(\Am)\), \(\diag(\Am)\), \(\det(\Am)\), and \(\rank(\Am)\) to denote the trace, diagonal, determinant, and rank of a matrix \(\Am\), respectively.
\((\cdot)^{\T}\) denotes the transpose of a vector or matrix.
We denote \([x]^+ = \max\{x, 0\}\). Unless stated otherwise, all logarithms and exponentiations use base 2.
\(\mathbf{1}\) and \(\mathsf{I}\) represent the all-ones vector and the identity matrix, respectively.
For an integer \(k > 0\), let \([k]=\{1, 2, \cdots, k\}\).
We use \(\mathbb{E}[\cdot]\) to denote expectation and \(\mathcal{N}(\cdot, \cdot)\) to denote the Gaussian distribution.
For a real symmetric matrix \(\mathsf{A}\), let \(\mathsf{A}[\mathcal{I}]\) be the principal submatrix indexed by \(\mathcal{I} \subseteq [N]\).
We use \(\cdot \backslash \cdot\) to denote set subtraction.
For functions \(f(x)\) and \(g(x)\), \( f(x) = O(g(x)) \) means \( \limsup_{x \to \infty} \left|f(x)/g(x) \right| < \infty \), and \( f(x) = \Theta(g(x)) \) means \( \limsup_{x \to \infty} \left|f(x)/g(x)\right| = c \) where \( 0 < c < \infty \).
Let \(P_m(x) = \sum_{i=0}^m a_i x^i\) be a polynomial of degree \(m \in \mathbb{N}\) with coefficients \(a_i \in \mathbb{R}\) and \(a_m \neq 0\).

\section{System Model} \label{sec:System Model}

In this section, we first introduce the setup for the Gaussian-quadratic lossy compression problem, and then provide the definitions of the rate-distortion functions under the sum distortion criterion and individual distortion criteria, respectively.

\subsection{Setup} \label{subsec:Setup}

Consider an \(N\)-length vector Gaussian source \(\xv \sim \mathcal{N}(\mathsf{0}, \Rm)\) defined over the alphabet \(\mathcal{X}^N\), where each component of the source \(X_i\) is variance-normalized for \(i \in [N]\).
The eigenvalue decomposition (EVD) of the covariance matrix \(\Rm \succ \zerov\) is given by \(\Rm = \mathsf{U} \mathsf{\Lambda} \mathsf{U}^{\T}\), where \(\mathsf{\Lambda} = \diag(\lambda_1, \lambda_2, \cdots, \lambda_N)\) contains the eigenvalues of \(\Rm\) and \(\mathsf{U}\) is an orthogonal matrix whose columns are the corresponding eigenvectors. 
In Gaussian-quadratic compression, distortion is quantified by the matrix
\begin{align} \label{eq:distortion matrix definition}
   \Dm = \mathbb{E}\!\left[(\xv - \widehat{\xv})(\xv - \widehat{\xv})^{\T}\right],
\end{align}
where \(\widehat{\xv}\) is the reconstruction of \(\xv\).
For each source component \(X_{i}\), the distortion \(d_i\) (i.e., the \(i\)-th diagonal entry of \(\Dm\)) satisfies \(d_i \leq e_i\) with the distortion constraint \(e_i \in (0,1]\).
The vector \(\ev = [e_1, e_2, \cdots, e_N]^{\T}\) represents the ordered sequence of individual distortion constraints imposed on the components.
\begin{remark}[Justification of variance normalization]
   Notice that the component variance normalization does not involve any loss of generality.
Let \(\widetilde{\xv} \sim \mathcal{N}(\mathsf{0}, \mathsf{\Sigma}_{\widetilde{\xv}})\) be any non-degenerate \(N\)-length source with the covariance \(\mathsf{\Sigma}_{\widetilde{\xv}} = \mathsf{V}^{\frac{1}{2}} \Rm \mathsf{V}^{\frac{1}{2}} \succ \zerov\), where \(\mathsf{V} = \diag(\sigma_1^2, \sigma_2^2, \cdots, \sigma_N^2)\) with \(\sigma_i^2 = \mathbb{E}[\widetilde{X}_i^2]\). 
For any \(\widetilde{d}_i > 0\) with \(i \in [N]\), 
consider the distortion constraint \(\mathbb{E}[(\widetilde{X}_i - \widehat{\widetilde{X}}_i)^2] \le \widetilde{d}_i\) on each component, 
where \(\widehat{\widetilde{X}}_i\) is the reconstruction of \(\widetilde{X}_i\). Define the distortion-variance ratio as
\begin{align} \label{eq:e_def}
   e_i \triangleq \frac{\widetilde{d}_i}{\sigma_i^2}.
\end{align}
Then the original distortion constraint is equivalent to the constraint \(\mathbb{E}[(X_i - \widehat{X}_i)^2] \le e_i\), where we define \(X_i \triangleq e_i^{\frac{1}{2}} \widetilde{X}_i/\widetilde{d}_i^{\;\frac{1}{2}}\) and \(\widehat{X}_i \triangleq e_i^{\frac{1}{2}} \widehat{\widetilde{X}}_i/\widetilde{d}_i^{\;\frac{1}{2}}\). Due to \(e_i\) in Eq.~\eqref{eq:e_def}, we have \(X_i \sim \mathcal{N}(0,1)\) for any \(i \in [N]\). 
\end{remark}

\subsection{Definition of RDF} \label{subsec:Definition of RDF}

Let \(\{\xv(t)\}_{t=1}^\infty\) be an independent and identically distributed (i.i.d.) vector-valued Gaussian source, with each \(\xv(t) \sim \mathcal{N}(\zerov, \Rm)\) for all \(t \geq 1\). 
We consider \(k\) i.i.d. realizations of the \(i\)-th source component as \(\xv_i^k \triangleq [X_i(1), \cdots, X_i(k)]^{\T}\) for \(i \in [N]\), and define the concatenated source and reconstruction matrices across all \(N\) components as \(\xv^k \triangleq [\xv_1^k, \cdots, \xv_N^k] \in \mathcal{X}^{k \times N}\) and \(\widehat{\xv}^k \triangleq [\widehat{\xv}_1^k, \cdots, \widehat{\xv}_N^k] \in \widehat{\mathcal{X}}^{k \times N}\), respectively.
An \((N, k, 2^{kR})\) lossy code consists of an encoder \(f: \mathcal{X}^{k \times N} \mapsto \{1,2, \cdots, \lfloor 2^{kR} \rfloor\}\) and a decoder \(g: \{1,2, \cdots, \lfloor 2^{kR} \rfloor\} \mapsto \widehat{\mathcal{X}}^{k \times N}\).
The block-wise average distortion is defined as
\begin{align}
d (\xv_{i}^{k}, \widehat{\xv}_{i}^{k}) \triangleq \frac{1}{k} \sum_{t=1}^k \!\left(X_{i}(t) - \widehat{X}_i(t)\right)^{2}.
\end{align}

\begin{definition}[RDF under sum distortion criterion]
The pair \((R^{\mathrm{sum}}, d_{0})\) is said to be achievable if there exists a sequence of \((N, k, 2^{kR^{\mathrm{sum}}})\) codes such that for all \(\varepsilon^{\mathrm{sum}} > 0\) and all sufficiently large \(k\),
\begin{align} \label{eq:scalar criterion}
   \sum_{i=1}^{N} \mathbb{E}\!\left[ d(\xv_{i}^{k}, \widehat{\xv}_{i}^{k}) \right] \leq d_0 + \varepsilon^{\mathrm{sum}}
\end{align}
holds. The rate-distortion function under sum distortion criterion is defined as
\begin{align} \label{eq:scalar_rdf_definition}
   R^{\mathrm{sum}}([N], d_0) \triangleq \inf\{ R^{\mathrm{sum}} : (R^{\mathrm{sum}}, d_0) \text{ is achievable} \}.
\end{align}
\end{definition}

\begin{definition}[RDF under individual distortion criteria]
The pair \((R^{\mathrm{ind}}, \ev)\), where \(\ev = [e_1, e_2, \cdots, e_N]^{\T}\), is said to be achievable if there exists a sequence of \((N, k, 2^{kR^{\mathrm{ind}}})\) codes such that for all \(\varepsilon_i^{\mathrm{ind}} > 0\) with \(i \in [N]\) and all sufficiently large \(k\),
\begin{align} \label{eq:vector criterion}
\mathbb{E}\!\left[d(\xv_{i}^{k}, \widehat{\xv}_{i}^{k})\right] \leq e_{i} + \varepsilon_i^{\mathrm{ind}}
\end{align}
holds. The rate-distortion function under individual distortion criteria is defined as
\begin{align} \label{eq:vector_rdf_definition}
   R^{\mathrm{ind}}([N], \ev) \triangleq \inf\{ R^{\mathrm{ind}} : (R^{\mathrm{ind}}, \ev) \text{ is achievable} \}.
\end{align}
\end{definition}

We denote the distortion matrix \(\Dm\) in Eq.~\eqref{eq:distortion matrix definition} as \(\Dsm\) when the RDF is achieved.
It is evident that Eq.~\eqref{eq:vector criterion} imposes an individual distortion constraint on each component, thereby reducing the feasible set of the distortion (rate) allocations compared to the sum distortion case in Eq.~\eqref{eq:scalar criterion}, i.e.,
\begin{align} \label{eq:rdf vector scalar}
   \Rx^{\mathrm{ind}}([N], \ev)
   \geq
   \Rx^{\mathrm{sum}}\Big([N], {\textstyle \sum\limits_{i \in [N]}} e_i\Big).
\end{align}

\section{Prior Results} \label{sec:Prior Results}

In this section, we review the existing results on the rate-distortion functions for the vector Gaussian source under both the sum distortion criterion and individual distortion criteria, respectively.

\subsection{RDF under Sum Distortion Criterion} \label{subsec:RDF under Sum Distortion Criterion}


\begin{lemma}[RDF under sum distortion criterion,~\cite{thomas2006elements}] \label{lemma:trace rdf}
Let \(\xv \sim \mathcal{N}(\mathsf{0}, \Rm)\) and \(\Rm\) has eigenvalues \(\lambda_1 \ge \cdots \ge \lambda_N > 0\). \(\Rx([N], d_{0})\) in Eq.~\eqref{eq:scalar_rdf_definition} is
\begin{align} \label{eq:trace rdf}
\Rx^{\mathrm{sum}}([N], d_0) = \frac{1}{2} \sum_{i=1}^N \log \frac{\lambda_i}{d_{i}},
\end{align}
where \(d_i = \min\{\delta, \lambda_i\}\), and \(\delta = (d_0 - \sum_{i=n+1}^N \lambda_i)/n\), with \(n \in [N]\) being the largest index such that \(\lambda_n \ge \delta\).
\end{lemma}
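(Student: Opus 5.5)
The plan is the standard reverse water-filling argument: reduce the sum-distortion problem to independent parallel Gaussian sources, then solve the resulting convex allocation problem through its KKT conditions.

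\emph{Step 1: operational RDF as a mutual-information minimization.} By the rate-distortion theorem for i.i.d.\ vector sources, the operational quantity in Eq.~\eqref{eq:scalar_rdf_definition} equals
\[
\min \{ I(\xv;\xvh) : \tr(\Dm)\le d_0 \},
\]
where the minimum is over test channels $P_{\xvh|\xv}$. This holds because the sum criterion is a single additive per-letter distortion $\|\xv-\xvh\|^2$. Continuity in $d_0$ absorbs the slack $\varepsilon^{\mathrm{sum}}$.

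\emph{Step 2: rotate to the eigenbasis.} Set $\mathbf{Y}=\mathsf{U}^{\T}\xv\sim\mathcal N(\zerov,\mathsf{\Lambda})$ and $\widehat{\mathbf{Y}}=\mathsf{U}^{\T}\xvh$. Because $\mathsf{U}$ is orthogonal, both the mutual information and the trace of the distortion matrix are preserved. The problem therefore becomes one for $N$ independent Gaussian components with variances $\lambda_i$.

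\emph{Step 3: converse.} Chain the following bounds:
\[
I(\mathbf{Y};\widehat{\mathbf{Y}})\ge\sum_i I(Y_i;\widehat{Y}_i)\ge\sum_i \tfrac12\log^+\frac{\lambda_i}{d_i},
\]
where $d_i=\mathbb E[(Y_i-\widehat Y_i)^2]$.
\begin{itemize}
\item The first inequality uses independence of the $Y_i$ together with the chain rule and the fact that conditioning reduces entropy.
\item The second inequality is the scalar Gaussian RDF, which follows from the max-entropy bound $h(Y_i|\widehat Y_i)\le h(Y_i-\widehat Y_i)\le\frac12\log(2\pi e\, d_i)$.
\end{itemize}
Since $\sum_i d_i\le d_0$, this gives the lower bound
\[
\min_{\sum_i d_i\le d_0}\ \sum_i\tfrac12\log^+(\lambda_i/d_i).
\]

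\emph{Step 4: achievability.} For each $i$, choose independent Gaussian test channels $Y_i=\widehat Y_i+W_i$ with $W_i\sim\mathcal N(0,d_i)$ when $d_i<\lambda_i$, and set $\widehat Y_i=0$ when $d_i=\lambda_i$. Rotating back with $\mathsf{U}$ gives a valid $\xvh$ for $\xv$, and this choice meets every step of the converse with equality.

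\emph{Step 5: the allocation problem (main obstacle).} What remains is the convex program
\[
\min \sum_i \tfrac12\log(\lambda_i/d_i)\quad\text{subject to}\quad \sum_i d_i\le d_0,\ \ 0<d_i\le\lambda_i.
\]
Capping $d_i$ at $\lambda_i$ is without loss of generality, since extra distortion beyond $\lambda_i$ earns no rate reduction.

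The KKT conditions with multiplier $\mu$ read $1/(2d_i)=\mu-\nu_i$, with $\nu_i\ge0$ the multiplier for the cap $d_i\le\lambda_i$ and $\nu_i=0$ whenever $d_i<\lambda_i$. They yield $d_i=\min\{\delta,\lambda_i\}$ with a common water level $\delta$, fixed by the active constraint $\sum_i\min\{\delta,\lambda_i\}=d_0$ (assume $d_0<\sum_i\lambda_i$; otherwise the rate is zero).

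The delicate part is to show this water level coincides with the lemma's formula. Order the eigenvalues as $\lambda_1\ge\cdots\ge\lambda_N$. The map $\delta\mapsto\sum_i\min\{\delta,\lambda_i\}$ is continuous and strictly increasing on $(0,\lambda_1]$, so the solution $\delta$ is unique. If exactly the first $n$ eigenvalues satisfy $\lambda_i\ge\delta$, the constraint becomes $n\delta+\sum_{i>n}\lambda_i=d_0$, which gives $\delta=(d_0-\sum_{i=n+1}^N\lambda_i)/n$.

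It remains to check that the lemma's $n$, the \emph{largest} index with $\lambda_n\ge\delta$ where $\delta$ is computed from $n$ itself, picks out this same solution. I would verify consistency by a monotonicity argument on the candidate $\delta_n$. From the identity $n\delta_n=(n-1)\delta_{n-1}+\lambda_n$ one gets
\[
\delta_n-\lambda_n=\tfrac{n-1}{n}\,(\delta_{n-1}-\lambda_n).
\]
Hence $\lambda_n\ge\delta_n$ if and only if $\lambda_n\ge\delta_{n-1}$, and the comparisons are monotone in $n$. The consistent index is therefore unique and equals the largest index satisfying the condition.

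Finally, substituting $d_i=\min\{\delta,\lambda_i\}$ into the objective gives Eq.~\eqref{eq:trace rdf}, with the log terms for $i>n$ vanishing.
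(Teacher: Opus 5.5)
Your proposal is correct and follows the route the paper relies on. The paper does not prove this lemma itself: it cites the textbook reverse water-filling result for independent parallel Gaussian sources and reduces the correlated source to it through the EVD $\Rm=\mathsf{U}\mathsf{\Lambda}\mathsf{U}^{\T}$, which is exactly your Steps 2--5.

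One small slip needs fixing. Stationarity of the Lagrangian gives $1/(2d_i)=\mu+\nu_i$, not $\mu-\nu_i$. It is the plus sign that forces $\lambda_i\le\delta$ on capped components, and hence $d_i=\min\{\delta,\lambda_i\}$.
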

For the parallel source \(\xv^{p} \sim \mathcal{N}(\zerov, \mathsf{\Lambda})\), the optimal distortion allocation \(\Dsm_p = \diag(d_1, \cdots, d_N) \preceq \mathsf{\Lambda}\) follows the classical reverse water-filling principle: equal distortion \(\delta\) is allocated to components with variances above \(\delta\). For the original source \(\xv \sim \mathcal{N}(\zerov, \Rm)\) with \(\Rm = \mathsf{U} \mathsf{\Lambda} \mathsf{U}^{\T}\), the optimal distortion matrix is \(\Dsm = \mathsf{U} \Dsm_p \mathsf{U}^{\T} \preceq \Rm\). Under sum distortion constraint, the RDF of \(\xv\) is obtained from that of \(\xv^p\) via EVD, which converts source correlations into variance differences across parallel components.

\subsection{RDF under Individual Distortion Criteria} \label{subsec:RDF under Individual Distortion Criteria}

The following lemma provides an optimization formulation of the RDF under individual distortion criteria.
\begin{lemma}[Formulation of RDF under individual distortion criteria,~\cite{xiao2005compression}] \label{lemma:maxdet}
For the vector source \(\xv \sim \mathcal{N}(\zerov, \Rm)\) with \(\Rm \succ \zerov\), \(\Rx([N], \ev)\) in Eq.~\eqref{eq:vector_rdf_definition} is given by the solution to\footnote{In subsequent contexts where it is clear, the superscript on the RDF that denotes individual distortion criteria is omitted for simplicity.}
\begin{problem} \label{problem:maxdet}
\begin{align}
\min_{\Dm=\Dm^{\T}} \quad & \frac{1}{2} \log \det (\Dm^{-1} \Rm), \label{eq:maxdet obj} \\
\mathrm{s.t.} \quad & \dv \le \ev, \label{eq:maxdet c1} \\
& \zerov \prec \Dm \preceq \Rm, \label{eq:maxdet c2}
\end{align}
\end{problem}
where \(\dv\) is the vector of diagonal entries of \(\Dm\) in Eq.~\eqref{eq:distortion matrix definition}, and \(\ev = [e_1, \cdots, e_N]^{\T}\) is the normalized distortion constraint vector. \(\dv \le \ev\) is shorthand for \(d_i \le e_i\) for all \(i\in[N]\).
\end{lemma}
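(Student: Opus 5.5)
Since $\xv$ is jointly Gaussian and the per-component distortions are quadratic, I would reduce Lemma~\ref{lemma:maxdet} to the information rate-distortion characterization and then show that Gaussian test channels are optimal. By the standard source coding theorem for i.i.d.\ vector sources under several separable distortion measures, $\Rx([N],\ev)$ equals
\begin{align}
\min_{P_{\xvh|\xv}} \; I(\xv;\xvh) \quad \mathrm{s.t.}\quad \mathbb{E}[(X_i-\widehat{X}_i)^2]\le e_i,\ \forall\, i\in[N].
\end{align}
Multiple per-letter constraints only change the typicality/covering step of the achievability argument. Here $\ev$ is read as in Lemma~\ref{lemma:maxdet}: the constraints are already normalized, with $0<e_i\le 1$.

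\textbf{Converse (lower bound).} Fix any feasible test channel and set $\Dm=\mathbb{E}[(\xv-\xvh)(\xv-\xvh)^{\T}]$. Then
\begin{align}
I(\xv;\xvh)=h(\xv)-h(\xv|\xvh)\ge h(\xv)-h(\xv-\xvh)\ge \tfrac12\log\det(\Rm)-\tfrac12\log\det(\Dm),
\end{align}
using that conditioning reduces entropy and that the Gaussian maximizes entropy for a given covariance. If $\Dm$ is singular the bound is $+\infty$, and such a test channel is never optimal. One subtlety: $\Dm$ need not satisfy $\Dm\preceq\Rm$. The MMSE estimate $\mathbb{E}[\xv|\xvh]$, however, does no worse in every diagonal entry, and its error covariance $\Dm'$ satisfies $\Dm'\preceq\Dm$ and $\Dm'\preceq\Rm$. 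Therefore $\det(\Dm')\le\det(\Dm)$, which only increases the lower bound, and $\Dm'$ is feasible for \eqref{eq:maxdet c1}--\eqref{eq:maxdet c2}. Taking the infimum over feasible test channels, the rate is at least the optimal value of the problem. Closedness of the feasible set and continuity of $\log\det$ on $\Dm\succ\zerov$ ensure that the minimum is attained.

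\textbf{Achievability.} Given any $\Dm$ with $\zerov\prec\Dm\preceq\Rm$ and $\dv\le\ev$, take $\xvh\sim\mathcal{N}(\zerov,\Rm-\Dm)$ and $\zv\sim\mathcal{N}(\zerov,\Dm)$ independent, and set $\xv=\xvh+\zv$. This reverse channel reproduces the law $\mathcal{N}(\zerov,\Rm)$ and has error covariance exactly $\Dm$. It gives $I(\xv;\xvh)=\tfrac12\log\det(\Rm)-\tfrac12\log\det(\Dm)$, so the objective \eqref{eq:maxdet obj} is achievable. This is the step where $\Dm\preceq\Rm$ is genuinely needed, since $\Rm-\Dm$ must be a valid covariance.

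The main obstacle is the MMSE reduction in the converse, i.e., justifying the restriction to $\Dm\preceq\Rm$ without loss of optimality. The coding theorem with multiple constraints is routine, and I would cite it. I would handle the reduction via the orthogonality principle: the estimator error covariance decomposes as $\Dm=\Dm'+\mathbb{E}[(\mathbb{E}[\xv|\xvh]-\xvh)(\mathbb{E}[\xv|\xvh]-\xvh)^{\T}]\succeq\Dm'$. Also, $\Rm-\Dm'=\mathrm{Cov}(\mathbb{E}[\xv|\xvh])\succeq\zerov$.
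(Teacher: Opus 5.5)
The paper gives no proof of this lemma. It cites Xiao et al.\ and later writes down the backward test channel in Eqs.~\eqref{eq:backward test channel}--\eqref{eq:test channel cov}, which is exactly your achievability construction. Your converse-plus-achievability argument is the standard proof behind that citation and is correct in substance. You rightly identify the MMSE reduction as the step that yields the constraint \(\Dm\preceq\Rm\).

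One inference in the converse needs restating. From \(I(\xv;\xvh)\ge\frac12\log\frac{\det\Rm}{\det\Dm}\) and \(\det\Dm'\le\det\Dm\), you cannot conclude \(I(\xv;\xvh)\ge\frac12\log\frac{\det\Rm}{\det\Dm'}\): a larger number is not automatically a lower bound. Instead, run the chain with the MMSE error. Since \(\mathbb{E}[\xv\mid\xvh]\) is a function of \(\xvh\),
\[
h(\xv\mid\xvh)=h\bigl(\xv-\mathbb{E}[\xv\mid\xvh]\bigm|\xvh\bigr)\le h\bigl(\xv-\mathbb{E}[\xv\mid\xvh]\bigr)\le\tfrac12\log\bigl((2\pi e)^N\det\Dm'\bigr),
\]
which gives the bound with \(\Dm'\) directly.

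With this, \(\Dm'\preceq\Dm\) is used only through its diagonal, to get \(\dv'\le\ev\). The error \(\xv-\mathbb{E}[\xv\mid\xvh]\) has zero mean, so this also avoids the fact that your \(\Dm\) is a second-moment matrix rather than a covariance. That issue was harmless anyway, since the covariance is \(\preceq\Dm\).

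Two smaller points:
\begin{itemize}
\item If \(\Dm'\) is singular, some \(\mathsf{a}^{\T}\xv\) is almost surely a function of \(\xvh\). Then \(I(\xv;\xvh)=\infty\) and there is nothing to prove.
\item The feasible set is not closed, because of \(\zerov\prec\Dm\). Attainment still holds: the objective diverges as \(\det\Dm\to 0\), so its sublevel sets are compact subsets of \(\{\zerov\prec\Dm\preceq\Rm,\ \dv\le\ev\}\).
\end{itemize}
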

\stepcounter{equation}

The matrix constraint in Eq.~\eqref{eq:maxdet c2} ensures the achievability of the RDF.
We denote by \(\dsv\) the vector formed by the diagonal entries of \(\Dsm\), where \(\Dsm\) is the optimal solution to the Max-Det problem in Eq.~\eqref{problem:maxdet}.

\begin{remark}[Sensitivity of individual distortion criteria] \label{remark:sensitivity_individual_constraint}
Compared with a sum distortion constraint, individual distortion constraints impose restrictions on the distortion of each component. This finer granularity renders the resulting RDF more sensitive to source correlations and thus motivates a more explicit characterization of how source correlations affect the compression rate.
Consider two vector sources \(\xv_1\) and \(\xv_2\) with covariance matrices
\begin{align}
   \Rm_1 = \begin{pmatrix}  
   1 & 0.3 & 0.2 \\  
   0.3 & 1 & 0.7 \\  
   0.2 & 0.7 & 1  
   \end{pmatrix}, \quad  
   \Rm_2 = \begin{pmatrix}  
   1.000 & 0.272 & 0.706 \\  
   0.272 & 1.000 & 0.218 \\  
   0.706 & 0.218 & 1.000  
   \end{pmatrix},
\end{align}   
which both have eigenvalues  
\(\lambda \approx \{0.292, 0.860, 1.848\}\).
Under individual distortion constraint \(\Em = \diag(0.60,0.22,0.18)\), the RDFs for the two sources are 2.144 bits/symbol and 2.294 bits/symbol, respectively.  
However, under sum distortion constraint with \(d_0 = \sum_{i} e_i = 1\) (cf. Eq.~\eqref{eq:scalar criterion}), the RDFs for both are 1.832 bits/symbol, despite the two sources having different component-wise correlations.
\end{remark}

For a vector Gaussian source of length \(N = 2\), the Max-Det problem admits an analytical solution~\cite{xiao2005compression,nayak2010successive}. For any \(N \geq 2\), the Hadamard inequality provides a simple lower bound for the RDF in Eqs.~\eqref{eq:hadamard bound} and~\eqref{eq:hadamard bound value}.
When the SDC in Eq.~\eqref{eq:SDC} holds, the exact RDF is obtained and the distortion for individual components is fully allocated, i.e., every component's distortion constraint is active.
However, when the SDC is not satisfied, the Hadamard rate in Eq.~\eqref{eq:hadamard bound value} becomes unachievable. Specifically, there may be the case where \(\det(\Em) > \det(\Rm)\), resulting in a non-positive rate that renders the Hadamard bound in Eq.~\eqref{eq:hadamard bound} meaningless. As we later show, the SDC generally fails to hold, significantly diminishing the utility of the Hadamard bound. 
To our knowledge, existing literature has not provided further results for arbitrary \(N\) when the SDC is not satisfied, including both the closed-form RDF and the optimal distortion allocation. 
In this paper, we provide a comprehensive analysis of this open problem and offer solutions to these unresolved issues.

\setcounter{theorem}{0}

\section{Source Compression with Arbitrary Covariance and Distortion Constraints} \label{sec:Source Compression with Arbitrary Covariance and Distortion Constraints}

In this section, we consider the case where both the covariance matrix \(\Rm\) and the distortion constraint matrix \(\Em\) (resp. vector \(\ev\)) are arbitrary. We obtain the spectral properties of optimal source reconstruction achieving the RDF and establish a stronger version of the SDC expressed as a scalar inequality.

\subsection{Spectral Properties of Optimal Reconstruction} \label{subsec:Spectral Properties of Optimal Reconstruction}

For the Gaussian-quadratic lossy compression under individual distortion criteria,
the source \(\xv\) can be modeled through a backward test channel as~\cite{xiao2005compression,nayak2010successive}
\begin{align} \label{eq:backward test channel}
   \xv = \xvh + \zv,
\end{align}
where the reconstruction \(\xvh \sim \mathcal{N}(\mathsf{0}, \Kxh)\) and the noise \(\zv \sim \mathcal{N}(\mathsf{0}, \Kz)\) are independent, and their covariance matrices satisfy
\begin{align} \label{eq:test channel cov}
\Rm = \Kxh + \Kz.
\end{align}
When the RDF is achieved, we denote \(\xvh = \xvhs\) and \(\zv = \zvs\), with the positive definite (PD) covariance matrix \(\Kzs = \Dsm\). Therefore, the covariance of the optimal reconstruction \(\xvhs\) is \(\Kxs = \Rm - \Dsm\). 

In this work, we take the counts of positive, negative, and zero eigenvalues of \(\Kxs\) as the properties of its spectrum. We use \(n_+\), \(n_0\), and \(n_-\) to represent these counts, including multiplicities. After the EVD, the vector \(\xvhs\) can be regarded as composed of \(n_+(\Kxs)\) independent components together with \(n_0(\Kxs)\) components that are linear combinations of them, which we refer to as trivial components. The reconstructed vector \(\xvhs\) is non-degenerate if \(n_0(\Kxs)=0\), i.e., its covariance matrix is PD.
In fact, we always have \(n_-(\Kxs) = 0\) to ensure the physical existence of \(\xvhs\).
The following theorem relates the SDC in Eq.~\eqref{eq:SDC} to the number of independent components of the optimal reconstruction.
\begin{theorem} \label{thm:det(Rm-Dsm)}
   For an \(N\)-dimensional Gaussian source \(\xv \sim \mathcal{N}(\zerov, \Rm)\) with \(\Rm \succ \zerov\), and a diagonal distortion constraint \(\Em = \diag(\ev)\), the optimal source reconstruction \(\xvhs\) that achieves the rate-distortion function has a covariance matrix \(\Kxs\) that satisfies
\begin{align} \label{eq:det(Rm-Dsm)}
   n_{+}(\Kxs)
   \begin{cases}
   = N, & \Rm \succ \Em, \\
   < N, & \text{otherwise}.
   \end{cases}
\end{align}
\end{theorem}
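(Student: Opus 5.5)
The statement has two directions, and I will handle both through the KKT conditions of the Max-Det problem in Lemma~\ref{lemma:maxdet}. The direction ``\(\Rm \succ \Em \Rightarrow n_+(\Kxs)=N\)'' is the easier one. If \(\Rm \succeq \Em\), then \(\Dm=\Em\) is feasible. By the Hadamard bound in Eqs.~\eqref{eq:hadamard bound}--\eqref{eq:hadamard bound value}, any feasible \(\Dm\) satisfies \(\det \Dm \le \prod_i d_i \le \prod_i e_i\), so \(\Em\) attains the lower bound and is optimal. The objective \(-\log\det\Dm\) is strictly convex over the convex feasible set, so the optimizer is unique and \(\Dsm=\Em\). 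Then \(\Kxs=\Rm-\Em\succ\zerov\), which gives \(n_+(\Kxs)=N\).

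For the converse, I will assume \(\Rm\succ\Em\) fails and show \(\Kxs=\Rm-\Dsm\) is singular. Since \(\Dsm\preceq\Rm\) always holds, \(n_-(\Kxs)=0\), so it suffices to show \(\det(\Rm-\Dsm)=0\). I argue by contradiction: suppose \(\Rm-\Dsm\succ\zerov\). Then the constraint \(\Dm\preceq\Rm\) is inactive at \(\Dsm\). Slater's condition holds (take \(\Dm=\epsilon\mathsf{I}\) with small \(\epsilon\)), so the KKT conditions are necessary and the Lagrange multiplier of the PSD constraint must vanish. The Lagrangian is \(-\log\det\Dm+\sum_i\mu_i(d_i-e_i)+\tr(\Pim(\Dm-\Rm))\) with \(\mu\ge 0\) and \(\Pim\succeq\zerov\), and complementary slackness requires \(\tr(\Pim(\Rm-\Dsm))=0\). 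With \(\Rm-\Dsm\succ\zerov\), this forces \(\Pim=\zerov\). Stationarity then gives \((\Dsm)^{-1}=\diag(\mu)\).

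Because \(\Dsm\succ\zerov\), the inverse \((\Dsm)^{-1}\) is PD, so every \(\mu_i>0\). Thus \(\Dsm\) is diagonal with all distortion constraints active, which means \(\Dsm=\Em\). But then \(\Rm-\Em=\Rm-\Dsm\succ\zerov\), contradicting the assumption. Hence \(\Rm-\Dsm\) is singular and \(n_+(\Kxs)<N\). An alternative without multipliers is a perturbation argument. If \(\Rm-\Dsm\succ\zerov\) and some constraint \(d_i<e_i\) were slack, one could increase \(\Dsm\) slightly in a direction that raises \(\det\) while staying feasible, namely along \(\Dsm\) itself scaled by \(1+t\) restricted to slack coordinates. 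Making this precise for nondiagonal \(\Dsm\) is exactly what the KKT argument handles cleanly.

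The main obstacle is justifying the KKT step rigorously: strong duality for this Max-Det problem, and the deduction that \(\Pim=\zerov\) from complementary slackness together with \(\Rm-\Dsm\succ\zerov\). Both follow from Slater's condition and from the fact that \(\tr(\Pim \Am)=0\) with \(\Pim\succeq\zerov\) and \(\Am\succ\zerov\) forces \(\Pim=\zerov\). I will also record uniqueness of \(\Dsm\), so that ``the'' optimal reconstruction is well defined. Finally, I note that the boundary case \(\Rm\succeq\Em\) with \(\Rm-\Em\) singular falls under ``otherwise'', and this is consistent with the argument: there \(\Dsm=\Em\) and \(n_+<N\).
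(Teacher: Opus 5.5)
Your proof is correct and essentially matches the paper's. The converse is the same KKT contradiction: assuming $\Rm-\Dsm\succ\zerov$ forces $\mathsf{P}^\star=\zerov$, so $\Dsm$ is diagonal with every distortion constraint active, hence $\Dsm=\Em$; you apply this to the boundary and violated SDC cases together, whereas the paper treats $\det(\Rm-\Em)=0$ separately. For the forward direction you use the Hadamard bound with strict concavity instead of the paper's explicit dual certificate $(\Em,\zerov,\Em^{-1})$, which is equally valid and avoids duality in that step.
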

\begin{IEEEproof}[Proof Sketch]
We first show that strong duality holds for the RDF optimization problem in Lemma~\ref{lemma:maxdet}, and that the Karush-Kuhn-Tucker (KKT) conditions guarantee the optimality of our results.
In the first case of Eq.~\eqref{eq:det(Rm-Dsm)}, we prove that the distortion matrix \(\Dm=\Em\) is optimal.
In the second case, we further distinguish whether the SDC is satisfied or violated.
If the SDC holds on the boundary, we show that \(\Dm=\Em\) remains optimal but leads to a rank-deficient reconstruction covariance.
If the SDC is violated, we establish the same conclusion by contradiction.
See Appendix~\ref{proof:thm:det(Rm-Dsm)} for the complete proof.
\end{IEEEproof}

In this theorem, we show that the SDC is satisfied and inactive, i.e., \(\Rm \succ \Em\) if and only if the optimal source reconstruction is non-degenerate, 
in which case the Hadamard rate in Eq.~\eqref{eq:hadamard bound value} exactly equals the RDF.
Conversely, when the SDC holds on the boundary or not satisfied, 
fewer than \(N\) independent components are sufficient to reconstruct a source with \(N\) independent components.
The following theorem provides more refined upper bounds on \(n_{+}(\Kxs)\).
Before formally stating this theorem, we first introduce a reconstruction vector \(\xvh_{\mathrm{H}}\) with nominal covariance \(\Rm - \Em\), which achieves the Hadamard rate in Eq.~\eqref{eq:hadamard bound value} under the assumed distortion allocation \(\Dm = \Em\). Note that when the SDC is not satisfied, we have \(n_{-}(\Rm - \Em) > 0\).

\begin{theorem} \label{thm:inertia relation}
   For an \(N\)-dimensional Gaussian source \(\xv \sim \mathcal{N}(\zerov, \Rm)\) with \(\Rm \succ \zerov\), and a diagonal distortion constraint \(\Em = \diag(\ev)\), the optimal source reconstruction \(\xvhs\) that achieves the rate-distortion function has a covariance matrix \(\Kxs\), and the achieved distortion vector is \(\dsv\). We have
\begin{align} \label{eq:inertia relation}
   n_+(\Kxs) &\leq \min\{ N - \rank(\ev - \dsv), n_+(\Rm - \Em)\},
\end{align}
where \(\rank(\ev - \dsv)\) denotes the number of non-zero entries in the vector \(\ev - \dsv\).
\end{theorem}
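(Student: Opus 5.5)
The plan is to work entirely with the KKT system of the Max-Det problem in Lemma~\ref{lemma:maxdet}. Strong duality was already established in the proof of Theorem~\ref{thm:det(Rm-Dsm)}, so I take it as given. Introduce a multiplier vector $\mu \geq 0$ for $\dv \le \ev$, written $\mathsf{M} = \diag(\mu)$, and a PSD multiplier $\Gammam \succeq \zerov$ for $\Dm \preceq \Rm$. Stationarity of $-\log\det\Dm + \tr(\mathsf{M}(\Dm-\Em)) + \tr(\Gammam(\Dm - \Rm))$ then gives
\begin{align}
(\Dsm)^{-1} = \mathsf{M} + \Gammam, \qquad \Gammam \Kxs = \zerov, \qquad \mu_i (e_i - d^{\star}_i) = 0 .
\end{align}
Two consequences will be used throughout. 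First, $\mathrm{range}(\Kxs) \subseteq \ker(\Gammam)$, so $n_+(\Kxs) = \rank(\Kxs) \le N - \rank(\Gammam)$. Second, every $\vv \in \ker(\Gammam)$ satisfies $(\Dsm)^{-1}\vv = \mathsf{M}\vv$.

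For the first bound, let $\mathcal{S} = \{i : d^{\star}_i < e_i\}$, so that $|\mathcal{S}| = \rank(\ev - \dsv)$. Complementary slackness gives $\mu_i = 0$ for every $i \in \mathcal{S}$. Hence the principal submatrix $\Gammam[\mathcal{S}]$ equals $(\Dsm)^{-1}[\mathcal{S}]$, which is PD because it is a principal submatrix of a PD matrix. A PSD matrix with a nonsingular $|\mathcal{S}|\times|\mathcal{S}|$ principal submatrix has rank at least $|\mathcal{S}|$. Combining this with the range inclusion yields $n_+(\Kxs) \le N - \rank(\ev - \dsv)$.

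For the second bound, $n_+(\Kxs) \le n_+(\Rm - \Em)$, I will use Courant--Fischer (Sylvester inertia). It suffices to exhibit a subspace of dimension $p = n_+(\Kxs)$ on which the quadratic form of $\Rm - \Em$ is positive definite. The natural candidate is $\mathcal{U} = (\Dsm)^{-1}\mathcal{W}$ with $\mathcal{W} = \mathrm{range}(\Kxs)$, which has dimension $p$ since $(\Dsm)^{-1}$ is invertible. For $\uv = (\Dsm)^{-1}\vv = \mathsf{M}\vv$ with $\vv \in \mathcal{W}$, the form splits as
\begin{align}
\uv^{\T}(\Rm-\Em)\uv = \uv^{\T}\Kxs\uv + \vv^{\T}(\Dsm)^{-1}\vv - \vv^{\T}\mathsf{M}\Em\mathsf{M}\vv .
\end{align}
Complementary slackness lets me replace $\mathsf{M}\Em\mathsf{M}$ by $\mathsf{M}\,\diag(\dsv)\,\mathsf{M}$. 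What remains is to show that $\vv^{\T}\mathsf{M}\vv \ge \vv^{\T}\mathsf{M}\diag(\dsv)\mathsf{M}\vv$ on $\mathcal{W}$, with strict positivity of the total supplied by the $\Kxs$ term. As a fallback, I would test directly with $\vv \in \mathcal{W}$, writing $\vv^{\T}(\Rm-\Em)\vv = \vv^{\T}\Kxs\vv + \vv^{\T}(\Dsm - \Em)\vv$, and bound the second term using the invariance $(\Dsm)^{-1}\vv = \mathsf{M}\vv$.

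The main obstacle is this second inequality. The diagonal comparison $\dsv \le \ev$ does not by itself control the off-diagonal part of $\Dsm - \Em$, and the inequality $\mathsf{M} \preceq (\Dsm)^{-1}$ points the wrong way for a naive bound. If the direct test vectors fail, I will switch to a perturbation argument. Replace the active constraints by the reduced problem on the index set where $\mu_i > 0$, at which $\Dsm$ is also optimal. Then compare inertias through a continuity path $\Em_t = (1-t)\Dsm + t\Em$ along which no eigenvalue of $\Rm - \Em_t$ can cross zero from above, in the spirit of the ``dual-geometry'' translation described after the theorem.
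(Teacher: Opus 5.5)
Your first bound is correct, and it is a small variant of the paper's argument. You get $\rank(\Kxs)\le N-\rank(\Gammam)$ and then $\rank(\Gammam)\ge|\mathcal S|$ from $\Gammam[\mathcal S]=(\Dsm)^{-1}[\mathcal S]\succ\zerov$. The paper instead uses $\Kxs=\Dsm\mathsf Q^\star\Kxs$ to obtain $\rank(\Kxs)\le\rank(\mathsf Q^\star)\le N-\rank(\ev-\dsv)$.

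For the second bound your subspace $(\Dsm)^{-1}\mathrm{range}(\Kxs)=\mathsf M\,\mathrm{range}(\Kxs)$ is exactly the paper's $\mathcal W$. Your reduction is also right: for $\mathsf u=\mathsf M\mathsf v$ with $\mathsf v\in\mathrm{range}(\Kxs)$,
\[
\mathsf u^{\T}(\Rm-\Em)\mathsf u=\mathsf u^{\T}\Kxs\mathsf u+\sum_{i:\mu_i>0}\mu_i\bigl(1-\mu_i d_i^{\star}\bigr)v_i^2 .
\]
\textbf{The gap.} You stop exactly at the step that carries the proof, namely $\mu_i d_i^{\star}\le 1$ whenever $\mu_i>0$. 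Your reason for doubting it is mistaken: $\mathsf M\preceq(\Dsm)^{-1}$ points the right way.
\begin{itemize}
\item Congruence of $\Gammam=(\Dsm)^{-1}-\mathsf M\succeq\zerov$ by $\Dsm$ gives $\Dsm-\Dsm\mathsf M\Dsm\succeq\zerov$.
\item Its $i$-th diagonal entry reads $d_i^{\star}\ge\sum_j\mu_j(\Dsm_{ji})^2\ge\mu_i(d_i^{\star})^2$, i.e.\ $\mu_i d_i^{\star}\le 1$.
\end{itemize}
This is the paper's key inequality $q_i^\star\Dsm_{ii}\le 1$, obtained by testing $\mathsf P^\star\succeq\zerov$ with $\Dsm\mathsf b_i$. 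The off-diagonal part of $\Dsm-\Em$ that worried you never needs separate control. On this subspace $\mathsf u^{\T}\Dsm\mathsf u=\mathsf v^{\T}(\Dsm)^{-1}\mathsf v=\mathsf v^{\T}\mathsf M\mathsf v$, which is already diagonal in $\mathsf v$.

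\textbf{Strict positivity.} You assert but do not show that $\mathsf u^{\T}\Kxs\mathsf u>0$ for nonzero $\mathsf u$ in the subspace. This needs the subspace to meet $\ker(\Kxs)$ trivially. Since $\mathsf v=\Dsm\mathsf u\in\mathrm{range}(\Kxs)=\ker(\Kxs)^{\perp}$, the condition $\Kxs\mathsf u=\zerov$ would force $0=\mathsf u^{\T}\mathsf v=\mathsf u^{\T}\Dsm\mathsf u$, hence $\mathsf u=\zerov$.

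\textbf{The fallback does not bypass the gap.} Along $\Em_t=(1-t)\Dsm+t\Em$, the derivative of $\Rm-\Em_t$ is $\Dsm-\Em$. Its diagonal is $\dsv-\ev\le\zerov$, so it is never PSD unless $\Dsm=\Em$, and nothing a priori stops eigenvalues from crossing zero from above. The only available control is positivity of $\Dsm-\Em$ on the specific subspace $\mathsf M\,\mathrm{range}(\Kxs)$. That is the same missing inequality $\mu_i d_i^{\star}\le 1$. With that line and the kernel argument added, your proof coincides with the paper's.
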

\begin{IEEEproof}[Proof Sketch]
We first derive a fundamental identity for \(\Kxs\) from the KKT conditions.
This identity relates \(n_+(\Kxs)\) to the support size of the dual variables associated with the individual distortion constraints, which establishes the first upper bound.
To prove the second upper bound, we establish a component-wise upper bound on \(\dsv\).
We then construct an elegant subspace whose dimension equals \(n_+(\Kxs)\), and show that \(\Rm-\Em \succ \zerov\) on this subspace using the component-wise upper bound.
Finally, by invoking the variational characterization of the number of positive eigenvalues of a matrix, the second bound is derived.
See Appendix~\ref{proof:thm:inertia relation} for the complete proof.
\end{IEEEproof}

Theorem~\ref{thm:inertia relation} shows that \(n_{+}(\Kxs)\), the number of independent components in the optimal reconstruction, is upper-bounded by two quantities:
\(N-\rank(\ev-\dsv)\) and \(n_+(\Rm-\Em)\). 
\(N-\rank(\ev-\dsv)\) represents the number of components with active distortion constraints in the optimal reconstruction, i.e., the number of constraints in Eq.~\eqref{eq:maxdet c1} that hold with equality.
\(n_+(\Rm-\Em)\) represents the number of physically meaningful independent components in the reconstruction \(\xvh_{\mathrm{H}}\).
Each of the two upper bounds in Eq.~\eqref{eq:inertia relation} is analyzed in turn below.

The first upper bound links the imposed distortion constraints to the independent components in the optimal reconstruction.
Intuitively, an increase in the number of mild constraints (i.e., the number of large components in \(\ev\) increases) allows more distortion to be passively tolerated, thereby reducing the number of independent components in the optimal reconstruction.
   Moreover, in contrast to the sum distortion criterion, the individual distortion criteria exhibit quite intricate behavior regarding the strictness of distortion constraints.
Specifically, under sum distortion criterion, for a non-zero RDF and a bounded distortion constraint, the condition \(\sum_{i=1}^{N} d_{i} \leq \sum_{i=1}^{N} e_{i}\) can be replaced by equality~\cite[Corollary 8.19]{yeung2008information}.
However, under individual distortion criteria, such a replacement is clearly not possible in general for every constraint \(d_i \le e_i\) in Eq.~\eqref{eq:maxdet c1}. Still, the first upper bound implies that at least one distortion constraint for a source component is strict.
The reason is as follows. When the RDF is non-zero, we have \(n_{+}(\Kxs) \ge 1\), and it can be deduced that \(\rank(\ev - \dsv) \leq N-1\) holds, which means that at least one distortion constraint achieves equality, i.e., \(d_i = e_i\), for some \(i \in [N]\).

The second upper bound of \(n_{+}(\Kxs)\) is then given by \(n_+(\Rm-\Em)\).
When the SDC holds, \(\Dsm = \Em\) and \(\dsv = \ev\) follow, and this bound is achieved with equality.
Recall that we set \(\Dm=\Em\) to obtain the reconstruction \(\xvh_{\mathrm{H}}\) with the nominal covariance \(\Rm-\Em\).
When the SDC is not satisfied, i.e., \(\Rm - \Em \nsucceq \zerov\), \(\xvh_{\mathrm{H}}\) becomes physically unrealizable and is no longer the optimal reconstruction \(\xvhs\). However, it is revealed that the number of independent components in \(\xvhs\) does not exceed that of the physically meaningful components in \(\xvh_{\mathrm{H}}\).
It is straightforward to deduce that one way to achieve efficient compression is to find a reconstruction with fewer independent components when the SDC does not hold. These reconstruction components can simultaneously represent multiple source components within certain distortion constraints at a lower compression rate.

\subsection{A Stronger Version of SDC} \label{subsec:A Stronger Version of SDC}

The SDC in Eq.~\eqref{eq:SDC}, expressed using the Loewner partial order, has a key limitation: at the element-wise level, it fails to reflect the relationship between the variance of each component and its corresponding individual distortion constraint. To address this, we reformulate the SDC under a total order.\footnote{A partial order allows comparison between some pairs of elements but not necessarily all, while a total order ensures that every pair of elements is comparable.} A stronger version of the SDC is presented as a scalar inequality, which closely resembles the classical reverse water-filling principle.

\begin{proposition} \label{prop:weyl bound}
   Let the source covariance matrix \(\Rm\) and the diagonal distortion constraint matrix \(\Em\) have eigenvalues \(\{\lambda_i\}_{i=1}^N\) and \(\{e_i\}_{i=1}^N\) arranged in non-increasing order. A stronger version of the SDC \(\Rm - \Em \succeq \zerov\) is
\begin{align}  \label{eq:sufficient cond SDC1}
e_1 \leq \lambda_N.
\end{align}
\end{proposition}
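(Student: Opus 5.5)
We need to show that \(e_1 \le \lambda_N\), with \(e_1 = \max_i e_i\) and \(\lambda_N = \lambda_{\min}(\Rm)\), implies \(\Rm - \Em \succeq \zerov\). We also need to explain in what sense this condition is stronger, namely that it is sufficient but not necessary.

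The plan is to use a Weyl-type eigenvalue inequality for the sum of two symmetric matrices. Write \(\Rm - \Em = \Rm + (-\Em)\). Weyl's inequality gives
\begin{align*}
\lambda_{\min}(\Rm - \Em) \ge \lambda_{\min}(\Rm) + \lambda_{\min}(-\Em) = \lambda_N - e_1.
\end{align*}
Equivalently, this follows from the Rayleigh quotient. For any unit vector \(\mathsf{x}\),
\begin{align*}
\mathsf{x}^{\T}(\Rm-\Em)\mathsf{x} = \mathsf{x}^{\T}\Rm\mathsf{x} - \sum_i e_i x_i^2 \ge \lambda_N - e_1 \sum_i x_i^2 = \lambda_N - e_1.
\end{align*}
Hence \(e_1 \le \lambda_N\) gives \(\mathsf{x}^{\T}(\Rm-\Em)\mathsf{x}\ge 0\) for all \(\mathsf{x}\), which is the SDC. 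The same computation shows that the strict inequality \(e_1<\lambda_N\) yields \(\Rm\succ\Em\). By Theorem~\ref{thm:det(Rm-Dsm)}, the optimal reconstruction is then non-degenerate, and the Hadamard rate equals the RDF.

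To justify calling the condition strictly stronger, I would give a simple counterexample showing the converse fails. One choice is \(N=2\), \(\Rm\) with off-diagonal \(\rho\), so that \(\lambda_2 = 1-\rho\), and \(\Em=\diag(e_1,e_2)\) with \(e_2\) very small. Then \(\Rm-\Em \succeq \zerov\) iff \((1-e_1)(1-e_2)\ge\rho^2\). This can hold with \(e_1 > 1-\rho\); for instance, take \(e_2 \to 0\) and \(e_1\) slightly below \(1-\rho^2\), which exceeds \(1-\rho\) for \(\rho\in(0,1)\).

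I would also add the interpretation mentioned in the contribution list. Under the EVD \(\Rm=\mathsf{U}\mathsf{\Lambda}\mathsf{U}^{\T}\), the condition says that every equivalent parallel source variance \(\lambda_i\) is at least the mildest constraint \(e_1\). This mirrors reverse water-filling with water level \(e_1\), so every component receives its full distortion and \(\Dsm=\Em\).

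There is no real obstacle here; the only point needing care is the direction of the Weyl bound. The bound \(\lambda_{\min}(A+B)\ge\lambda_{\min}(A)+\lambda_{\min}(B)\) must be applied with \(B=-\Em\), whose smallest eigenvalue is \(-e_1\), the largest constraint.
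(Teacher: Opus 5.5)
Your proof is correct and is essentially the paper's argument: both apply Weyl's inequality to $\Rm + (-\Em)$ at the smallest eigenvalues to obtain $\lambda_{\min}(\Rm-\Em) \ge \lambda_N - e_1$. Your Rayleigh-quotient derivation and your $N=2$ example (with $1-\rho < e_1 < 1-\rho^2$), which shows the condition is sufficient but not necessary, are valid additions that the paper does not spell out.
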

\begin{IEEEproof}
By Weyl's inequality~\cite{horn2012matrix}, for any integers \(i\) and \(j\) satisfying \(1 \le i, j \le N\) and \(N + 1 \le i + j\), the eigenvalues of \(\Rm - \Em\) satisfy \(\lambdav_{i+j-N}(\Rm - \Em) \ge \lambdav_i(\Rm) + \lambdav_j(-\Em)\), where \(\lambdav_i(\Am)\) denotes the \(i\)-th largest eigenvalue of \(\Am\).
We set \(i = N\) and \(j = N\) to obtain the smallest eigenvalues of \(\Rm\) and \(-\Em\), which are \(\lambda_N\) and \(-e_1\), respectively. When Eq.~\eqref{eq:sufficient cond SDC1} is satisfied, it follows that all eigenvalues of \(\Rm - \Em\) are non-negative and the SDC is satisfied.
\end{IEEEproof}

Proposition~\ref{prop:weyl bound} provides a rough interpretation of the SDC: the individual distortion constraints should not be too mild.
Specifically, the mildest distortion constraint across all components is no greater than the smallest eigenvalue of the covariance matrix.
When this condition holds, the optimal distortion allocation is \(\Dsm = \Em\), and the Hadamard rate in Eq.~\eqref{eq:hadamard bound value} becomes the RDF. 
Interestingly, from this perspective, this condition can be seen as a counterpart to the reverse water-filling principle in Lemma~\ref{lemma:trace rdf} for the small distortion regime, where the distortion allocation follows a simple rule.
Concretely, under sum distortion criterion, small distortion means that the equally allocated distortion does not exceed the smallest variance among the parallel sources, i.e., \(\delta \leq \lambda_N\) in Lemma~\ref{lemma:trace rdf}. For small distortions, the optimal allocation is equal distortion under sum distortion criterion, and full distortion per component under individual distortion criteria.

\section{Source Compression with 2-TC Covariance and Arbitrary Distortion Constraints} \label{sec:Source Compression with 2-TC Covariance and Arbitrary Distortion Constraints}

In this section, to make progress on the analytical investigation of the RDF with individual distortion constraints, we consider a particular class of covariance matrices characterized by two types of correlation, referred to as the 2-TC class. Under this class of covariance matrices, when the SDC is satisfied, 
a closed-form RDF explicitly incorporating correlations is derived.
Furthermore, we obtain an equivalent form of the SDC, and examine the SDC region from the perspectives of both distortion constraints and source correlations.

\subsection{2-TC Class for Covariance Modeling} \label{subsec:2-TC Class for Covariance Modeling}

To quantify the component-wise correlations of the source and develop the tractable and closed-form RDF, we consider a length-\(N\) vector Gaussian source with hierarchical \(n\)-type correlations, referred to as the \(n\)-TC class for all \(n \in [N-1]\).
\begin{definition}[\(n\)-TC class of covariance matrices] \label{def:nTC cov}
Let the source be ordered as \(\xv = [X_1, X_2, \cdots, X_N]\). For any \(n \in [N-1]\), define its variance-normalized covariance matrix as
\begin{align} \label{eq:nTC cov}
\Rm_n(i,j) \triangleq 
   \begin{cases}
   1, & \text{if } i = j, \\
   \rho_{\min\{i,j\}}, & \text{if } i \neq j \text{ and } (i < n \text{ or } j < n), \\
   \rho_{n}, & \text{if } i \neq j \text{ and } i,j \geq n,
   \end{cases}
\end{align}
where \(\Rm_n(i,j)\) denotes the \((i,j)\)-th entry of \(\Rm_n\), and \(\rho_1, \rho_2, \cdots, \rho_n \in [-1,1]\) are the component-wise correlations.
The \(n\)-type correlation class of covariance matrices is defined as
\(\mathcal{K}_n \triangleq \left\{ \Rm_n \in \mathbb{R}^{N \times N} : \Rm_n \succ \zerov \text{ has the structure in Eq.~} \eqref{eq:nTC cov}\right\}\).
\end{definition}

We consider a particular class of covariance matrices \(\mathcal{K}_2\), referred to as the \textit{2-TC class}. In this model, \(X_1\) is the \textit{central component}, and the rest are \textit{peripheral components}, with \(\rho_1\) denoting the central-to-peripheral correlation and \(\rho_2\) the correlation among peripheral components.
For the quadratic lossy compression of a vector Gaussian source with a 2-TC covariance \(\Rm_{2} \in \mathcal{K}_2\) under individual distortion criteria, the following lemma is essential for obtaining analytically tractable results.

\begin{remark}
Covariance matrices of the 2-TC form have also appeared extensively in the replica-symmetry (RS) analysis of high-dimensional inference and communication systems in statistical physics. In particular, under the RS assumption, the so-called dominant shell in the large-deviation rate function admits precisely this type of correlation structure due to symmetry considerations. Such covariance models arise, for example, in the asymptotic analysis of randomly spread code-division multiple access (CDMA) systems~\cite[Eqs.~(115a)-(115b)]{guo2005randomly}.
\end{remark}

\begin{lemma}[Parameterized determinant for 2-TC covariance class] \label{lemma:determinant}
   Let \(\Rm_2 \in \mathcal{K}_{2}\) and \(\Gammam = \diag(\gamma_1, \gamma_2, \cdots, \gamma_N)\) with \(\gamma_i \in [0,1]\).
The determinant of \(\Rm_{2} - \Gammam\), written as \(\Delta(\rho_1, \rho_2; \Gammam)\), is given by
\begin{align} \label{eq:Delta determinant}
   \Delta(\rho_1, \rho_2; \Gammam) = & \left( 1 - \gamma_1 + \sum_{i=2}^{N} \frac{\rho_2 - \rho_1^2 - \rho_2 \gamma_1}{1 - \rho_2 - \gamma_i} \right) 
\cdot \prod_{i=2}^{N} (1 - \rho_2 - \gamma_i).
\end{align}
\end{lemma}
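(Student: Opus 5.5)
The plan is to write \(\Rm_2 - \Gammam\) as a diagonal matrix plus a low-rank correction and then apply the matrix determinant lemma. Let \(\mathsf{P}\) be the \((N-1)\times(N-1)\) lower-right block of \(\Rm_2-\Gammam\). This block corresponds to the peripheral components. Since all its off-diagonal entries equal \(\rho_2\), it can be written as
\begin{align*}
\mathsf{P} = \mathsf{B} + \rho_2 \mathbf{1}\mathbf{1}^{\T}, \qquad \mathsf{B} \triangleq \diag(1-\rho_2-\gamma_2,\cdots,1-\rho_2-\gamma_N).
\end{align*}
The first row of \(\Rm_2-\Gammam\) is \((1-\gamma_1,\ \rho_1\mathbf{1}^{\T})\), and the first column is its transpose.

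I will first carry out the computation assuming every \(b_i \triangleq 1-\rho_2-\gamma_i\) is nonzero, so that \(\mathsf{B}\) is invertible. By the Sherman--Morrison formula,
\begin{align*}
\det \mathsf{P} = \Big(1+\rho_2 s\Big)\prod_{i=2}^N b_i, \qquad s \triangleq \sum_{i=2}^N \frac{1}{b_i},
\end{align*}
and
\begin{align*}
\mathsf{P}^{-1}\mathbf{1} = \frac{\mathsf{B}^{-1}\mathbf{1}}{1+\rho_2 s},
\end{align*}
which gives \(\mathbf{1}^{\T}\mathsf{P}^{-1}\mathbf{1} = s/(1+\rho_2 s)\). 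The Schur complement then yields
\begin{align*}
\det(\Rm_2-\Gammam) = \det\mathsf{P}\,\Big(1-\gamma_1 - \rho_1^2 \tfrac{s}{1+\rho_2 s}\Big) = \prod_{i=2}^N b_i \,\Big((1-\gamma_1)(1+\rho_2 s) - \rho_1^2 s\Big).
\end{align*}
Expanding the bracket gives
\begin{align*}
1-\gamma_1 + s\,(\rho_2 - \rho_2\gamma_1 - \rho_1^2),
\end{align*}
which is exactly the bracket in Eq.~\eqref{eq:Delta determinant}.

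One could worry that the Schur complement requires \(\mathsf{P}\) to be invertible. This is avoided by not dividing by \(\det\mathsf{P}\): the identity \(\det\mathsf{P}\,(a - \rho_1^2\mathbf{1}^{\T}\mathsf{P}^{-1}\mathbf{1}) = a\det\mathsf{P} - \rho_1^2\,\mathbf{1}^{\T}\mathrm{adj}(\mathsf{P})\mathbf{1}\) is polynomial in the entries. Equivalently, I will simply state the whole computation as an identity of rational functions in \(\gamma_i,\rho_1,\rho_2\). Both sides, once \(\prod b_i\) is distributed into the bracket, are polynomials. They agree on the dense set where all \(b_i\neq0\) and \(1+\rho_2 s\neq 0\), hence they agree everywhere.

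This continuity/polynomial-identity step is the only delicate point. Without it, the formula in Eq.~\eqref{eq:Delta determinant} must be read with the product distributed into the sum whenever some \(b_i=0\). I will remark that this is the intended interpretation.
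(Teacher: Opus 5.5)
Your proof is correct, and it takes a different route from the paper's. The paper subtracts $\rho_2\mathbf{1}\mathbf{1}^{\T}$, with $\mathbf{1}$ of length $N$, from the whole matrix. This leaves an arrowhead matrix $\Bm$ with corner $1-\rho_2-\gamma_1$, border entries $\rho_1-\rho_2$, and diagonal block $\Cm=\diag(b_2,\dots,b_N)$. The paper then uses the arrowhead determinant $\det\Bm=\xi\prod_{i\ge2}b_i$, with $\xi=1-\rho_2-\gamma_1-(\rho_1-\rho_2)^2 s$, together with the explicit arrowhead inverse, and finishes with the matrix determinant lemma for $\Bm+\rho_2\mathbf{1}\mathbf{1}^{\T}$.

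You instead take the Schur complement with respect to the peripheral block, so Sherman--Morrison is needed only on that $(N-1)\times(N-1)$ diagonal-plus-rank-one block. Both routes reach the same bracket, but yours is shorter. In the paper's route the substitution gives $\prod_{i\ge2}b_i\,\bigl[\xi(1+\rho_2 s)+\rho_2\bigl(1-(\rho_1-\rho_2)s\bigr)^2\bigr]$, and one must check that the $s^2$ terms and the $\rho_1\rho_2 s$ cross terms cancel. Your expression $(1-\gamma_1)(1+\rho_2 s)-\rho_1^2 s$ expands directly. Your mechanism (Schur complement against a lower block that is diagonal plus rank one) is also exactly what the paper itself uses later for the $n$-TC recursion in Theorem~\ref{thm:2TC_extension}, so it fits that generalization naturally.

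The paper's argument tacitly needs $\Bm$ invertible, i.e.\ all $b_i\neq0$ and $\xi\neq0$, and never addresses the degenerate cases. Your bordered-determinant/adjugate identity and polynomial-continuation argument handle them. Your remark that the formula must be read with $\prod b_i$ distributed into the sum when some $b_i=0$ is a genuine clarification: the paper later applies the lemma with $\gamma_i=e_i$, where $1-\rho_2-e_i$ can vanish or be negative.
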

\begin{IEEEproof}
   See Appendix~\ref{proof:lemma:determinant}.
\end{IEEEproof}

\(\Delta(\rho_1, \rho_2; \Gammam)\) reduces to \(\det(\Rm_{2})\) when \(\Gammam = \zerov\). If \(\Gammam = \gamma \mathsf{I}\), \(\Delta(\rho_1, \rho_2; \Gammam) = 0\) is the characteristic equation of \(\Rm_{2}\), whose zeros with respect to the variable \(\gamma\) yield the eigenvalues of \(\Rm_{2}\). 
The eigenvalues of \(\Rm_{2}\) are given by \(\lambda_1 = 1 - \rho_2\) with multiplicity \(N-2\), and \(\lambda_{2,3} = 1 + \frac{1}{2}(N-2) \rho_2 \pm \frac{1}{2}\sqrt{(N-2)^2 \rho_2^2 + 4 (N-1) \rho_1^2}\).
Thus, \(\Rm_{2} \succ \zerov\) if and only if \(\nu > 0\) holds, where
\begin{align} \label{eq:nu def}
   \nu = (N-2) \rho_2 + 1 - (N-1)\rho_1^2.
\end{align}
The isotropic correlation class of covariance matrices considered in~\cite{wagner2008rate,floor2015joint,xu2015lossy,saidutta2020vae} is a special case of our 2-TC covariance class.
Theoretical results for the \((N-1)\)-TC covariance class in Eq.~\eqref{eq:nTC cov} are generally intractable and elusive. The techniques developed in this paper yield concise results and offer fundamental insights under the 2-TC covariance class, which, importantly, can be generalized to the \((N-1)\)-TC covariance class.
Specifically, in the 2-TC covariance class, the \(N-1\) peripheral source components can be divided into a new central component and remaining peripheral components.  
This results in two central types and one peripheral type, giving rise to three levels of correlation and thus forming the 3-TC class (see \(n=3\) in Eq.~\eqref{eq:nTC cov}). Recursively applying this process extends the 2-TC class to the \((N-1)\)-TC class with \(N-1\) hierarchical correlation levels, where each higher-level component is equally correlated with all lower-level ones.

We now formally establish the feasibility of this recursive construction process.
With a slight abuse of notation, we use \(\Gammam\) and \(\overline{\Gammam}\) to denote the diagonal matrices associated with \(\Rm_n \in \mathcal{K}_{n}\) and \(\Rm_{n-1} \in \mathcal{K}_{n-1}\), respectively, whose entries \(\{\gamma_i\}_{i=1}^N\) and \(\{\overline{\gamma}_i\}_{i=1}^N\) lie in \([0,1]\).
The covariance matrix \(\Rm_n\) involves \(n\) correlation parameters \(\{\rho_i\}_{i=1}^n\), while \(\Rm_{n-1}\) involves \(n-1\) parameters \(\{\overline{\rho}_i\}_{i=1}^{n-1}\).
This extension fundamentally relies on deriving an explicit expression for \(\det(\Rm_n - \Gammam)\) using the already known explicit form of \(\det(\Rm_{n-1} - \overline{\Gammam})\) for arbitrary \(n \in [N-1]\). 
The following theorem establishes the relation from \(\det(\Rm_{n-1} - \overline{\Gammam})\) to \(\det(\Rm_n - \Gammam)\).

\begin{theorem}[The generalization of \(n\)-TC class] \label{thm:2TC_extension}
For \(n \in [N-1]\), let \(\Rm_{n-1} \in \mathcal{K}_{n-1}\) with \(\{\overline{\rho}_i\}_{i=1}^{n-1}\) and \(\overline{\Gammam} = \diag(\overline{\gamma}_1, \overline{\gamma}_2, \cdots, \overline{\gamma}_N)\) where \(\overline{\gamma}_i \in [0,1]\) for all \(i \in [N]\). Suppose \(\det(\Rm_{n-1} - \overline{\Gammam})\) is explicitly written as \(\Delta_{n-1}(\overline{\rho}_1, \cdots, \allowbreak \overline{\rho}_{n-1}; \overline{a}_1, \cdots, \overline{a}_N)\) with \(\overline{a}_i = 1 - \overline{\gamma}_i \in [0,1]\).
For any \(\Rm_n \in \mathcal{K}_n\) with \(\{\rho_i\}_{i=1}^n\) and \(\Gammam = \diag(\gamma_1, \gamma_2, \cdots, \gamma_N)\) where \(\gamma_i \in [0,1]\), \(\det(\Rm_n - \Gammam)\) can be computed from \(\det(\Rm_{n-1} - \overline{\Gammam})\) as follows.

We first set
\begin{align} 
\overline{\rho}_1 = \rho_1, \quad
\overline{\rho}_2 = \rho_2, \quad
\cdots, \quad
\overline{\rho}_{n-2} = \rho_{n-2}, \label{eq:extension T rho cond1} \\
\overline{a}_1 = a_1, \quad
\overline{a}_2 = a_2, \quad
\cdots, \quad
\overline{a}_{n-2} = a_{n-2}. \label{eq:extension T rho cond2}
\end{align}

The remaining parameters \(\overline{\rho}_{n-1}, \overline{a}_{n-1}, \cdots, \overline{a}_N\) are then determined through the equation
\begin{align} 
   &\sum_{j=1}^{N-n+2} \sum_{i=1}^{N-n+2} \overline{\mathsf{Q}}^{-1}_{ij} \notag \\
   &= \left(\frac{\rho_n}{\rho_{n-1}}\right)^2 \mathsf{Q}^{-1}_{11} + \left(\frac{\rho_n}{\rho_{n-1}}\right) \sum_{i=2}^{N-n+2} \mathsf{Q}^{-1}_{i1} + \left(\frac{\rho_n}{\rho_{n-1}}\right) \sum_{j=2}^{N-n+2} \mathsf{Q}^{-1}_{1j} + \sum_{j=2}^{N-n+2} \sum_{i=2}^{N-n+2} \mathsf{Q}^{-1}_{ij}. \label{eq:trans equival cond entry}
\end{align}
where \(\overline{\mathsf{Q}}^{-1}\) is given by
\begin{align} \label{eq:Q bar inverse entry}
\overline{\mathsf{Q}}^{-1}_{ij} = \delta_{ij} \frac{1}{\overline{a}_{n-2+i} - \overline{\rho}_{n-1}} - \frac{1}{1 + \overline{\rho}_{n-1} \sum_{k=n-1}^N \frac{1}{\overline{a}_k-\overline{\rho}_{n-1}}} \cdot \frac{\overline{\rho}_{n-1}}{(\overline{a}_{n-2+i}-\overline{\rho}_{n-1})(\overline{a}_{n-2+j}-\overline{\rho}_{n-1})},
\end{align}
where \(\delta_{ij}\) denotes the Kronecker delta, which equals \(1\) if \(i=j\) and \(0\) otherwise, with \(1 \leq i,j \leq N-n+2\). \(\mathsf{Q}^{-1}\) is given by
\begin{numcases}{} 
   \mathsf{Q}^{-1}_{11} = \frac{1}{a_{n-1}\left(\frac{\rho_{n}}{\rho_{n-1}}\right)^{2} - \rho_n} - \frac{\rho_n}{G\left[a_{n-1}\left(\frac{\rho_{n}}{\rho_{n-1}}\right)^{2}-\rho_n\right]^{2}}, \label{eq:Q inverse entry1} \\
   \mathsf{Q}^{-1}_{i1} = - \frac{\rho_n}{G\left(a_{n-2+i}-\rho_n\right)\left[a_{n-1}\left(\frac{\rho_{n}}{\rho_{n-1}}\right)^{2}-\rho_n\right]}, \quad \text{ for } 2 \leq i \leq N-n+2, \label{eq:Q inverse entry2} \\
   \mathsf{Q}^{-1}_{1j} = - \frac{\rho_n}{G\left[a_{n-1}\left(\frac{\rho_{n}}{\rho_{n-1}}\right)^{2}-\rho_n\right]\left(a_{n-2+j}-\rho_n\right)}, \quad \text{ for } 2 \leq j \leq N-n+2, \label{eq:Q inverse entry3} \\
   \mathsf{Q}^{-1}_{ij} = \delta_{ij} \frac{1}{a_{n-2+i} - \rho_{n}} - \frac{\rho_n}{G\left(a_{n-2+i}-\rho_n\right)\left(a_{n-2+j}-\rho_n\right)}, \quad \text{ for } 2 \leq i,j \leq N-n+2, \label{eq:Q inverse entry4}
\end{numcases}
where \(G = 1 + \rho_n \Big[ \Big(a_{n-1}\big(\frac{\rho_{n}}{\rho_{n-1}}\big)^{2}-\rho_n\Big)^{-1} + \sum_{k=n}^N \frac{1}{a_k-\rho_n}\Big]\).

Finally, \(\det(\Rm_n - \Gammam)\) is written as \(\Delta_n(\rho_1, \cdots, \rho_n; a_1, \cdots, a_N)\), given by
\begin{align}
   \Delta_n(\rho_1, \cdots, \rho_n; a_1, \cdots, a_N) = \left( \frac{\rho_{n-1}}{\rho_n} \right)^2 \frac{\det(\mathsf{Q})}{\det(\overline{\mathsf{Q}})} \Delta_{n-1}(\overline{\rho}_1, \cdots, \overline{\rho}_{n-1}; \overline{a}_1, \cdots, \overline{a}_N),
\end{align}
where
\begin{align} 
   \det(\overline{\mathsf{Q}}) &= \left( \overline{a}_{n-1} + \overline{\rho}_{n-1} \sum_{i=n}^{N} \frac{\overline{a}_{n-1} - \overline{\rho}_{n-1}}{\overline{a}_i - \overline{\rho}_{n-1}} \right) \cdot \prod_{i=n}^{N} (\overline{a}_i - \overline{\rho}_{n-1}), \label{eq:extension Q bar det} \\
   \det(\mathsf{Q}) &= \left( a_{n-1}\left(\frac{\rho_{n}}{\rho_{n-1}}\right)^{2} + \rho_{n} \sum_{i=n}^{N} \frac{a_{n-1}\left(\frac{\rho_{n}}{\rho_{n-1}}\right)^{2} - \rho_{n}}{a_i - \rho_{n}} \right) \cdot \prod_{i=n}^{N} (a_i - \rho_{n}).
\end{align} 
\end{theorem}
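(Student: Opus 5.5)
The plan is a block Schur-complement argument. Split the index set into the leading block $\mathcal{I}=\{1,\dots,n-2\}$ and the trailing block $\mathcal{J}=\{n-1,\dots,N\}$. By Definition~\ref{def:nTC cov}, the principal submatrix $(\Rm_n-\Gammam)[\mathcal{I}]$ has entries $a_i$ on the diagonal and $\rho_{\min\{i,j\}}$ off the diagonal. For $i\in\mathcal{I}$ and $j\in\mathcal{J}$ the entry equals $\rho_i$, which does not depend on $j$. Hence the off-diagonal block factors as the rank-one matrix $\mathsf{b}\mathbf{1}^{\T}$ with $\mathsf{b}=[\rho_1,\dots,\rho_{n-2}]^{\T}$. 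The same holds for $\Rm_{n-1}-\overline{\Gammam}$. Under \eqref{eq:extension T rho cond1}--\eqref{eq:extension T rho cond2} its leading block and its coupling vector $\mathsf{b}$ coincide with those of $\Rm_n-\Gammam$. The only difference between the two matrices is therefore the trailing block.

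In $\Rm_{n-1}-\overline{\Gammam}$, the trailing block is the compound-symmetric matrix $\overline{\mathsf{Q}}=\diag(\overline{a}_{n-1}-\overline{\rho}_{n-1},\dots,\overline{a}_N-\overline{\rho}_{n-1})+\overline{\rho}_{n-1}\mathbf{1}\mathbf{1}^{\T}$. In $\Rm_n-\Gammam$ it is not compound-symmetric: its first row and column carry $\rho_{n-1}$, while all other off-diagonal entries equal $\rho_n$.

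The key step is a diagonal congruence that restores the compound-symmetric form. Let $\mathsf{S}$ be the identity except for the entry $\rho_n/\rho_{n-1}$ at position $n-1$, which assumes $\rho_{n-1}\neq0$. Conjugating by $\mathsf{S}$ maps the trailing block to $\mathsf{Q}=\diag\big(a_{n-1}(\rho_n/\rho_{n-1})^2-\rho_n,\,a_n-\rho_n,\dots,a_N-\rho_n\big)+\rho_n\mathbf{1}\mathbf{1}^{\T}$. It also maps the coupling block to $\mathsf{b}\mathsf{w}^{\T}$ with $\mathsf{w}=[\rho_n/\rho_{n-1},1,\dots,1]^{\T}$, and multiplies the determinant by $(\rho_n/\rho_{n-1})^2$.

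Taking Schur complements with respect to the trailing blocks gives
\begin{align*}
\det(\Rm_n-\Gammam)&=\Big(\tfrac{\rho_{n-1}}{\rho_n}\Big)^2\det(\mathsf{Q})\,\det\!\big(\mathsf{A}-(\mathsf{w}^{\T}\mathsf{Q}^{-1}\mathsf{w})\,\mathsf{b}\mathsf{b}^{\T}\big),\\
\Delta_{n-1}&=\det(\overline{\mathsf{Q}})\,\det\!\big(\mathsf{A}-(\mathbf{1}^{\T}\overline{\mathsf{Q}}^{-1}\mathbf{1})\,\mathsf{b}\mathsf{b}^{\T}\big),
\end{align*}
where $\mathsf{A}$ is the common leading block. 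The trailing blocks thus enter only through one scalar each. Condition \eqref{eq:trans equival cond entry} says exactly that $\mathbf{1}^{\T}\overline{\mathsf{Q}}^{-1}\mathbf{1}=\mathsf{w}^{\T}\mathsf{Q}^{-1}\mathsf{w}$: expanding $\mathsf{w}^{\T}\mathsf{Q}^{-1}\mathsf{w}$ entrywise produces the $(\rho_n/\rho_{n-1})^2$, $(\rho_n/\rho_{n-1})$ and constant-weighted sums on its right-hand side. Under this condition the two Schur-complement determinants coincide. Dividing the first identity by the second yields the claimed formula for $\Delta_n$.

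The explicit entries \eqref{eq:Q bar inverse entry} and \eqref{eq:Q inverse entry1}--\eqref{eq:Q inverse entry4} follow from the Sherman--Morrison formula applied to "diagonal plus $\rho\mathbf{1}\mathbf{1}^{\T}$". The determinants \eqref{eq:extension Q bar det} and $\det(\mathsf{Q})$ follow from the matrix determinant lemma, exactly as in Lemma~\ref{lemma:determinant}.

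I expect the main obstacle to be legitimacy rather than algebra. First, the trailing diagonal matrices must be invertible. Second, $\overline{\mathsf{Q}}$ and $\mathsf{Q}$ must be invertible so that the Schur complement is taken in the trailing block. Third, there must exist parameters $\overline{\rho}_{n-1},\overline{a}_{n-1},\dots,\overline{a}_N$, with $\overline{a}_i\in[0,1]$ and $\Rm_{n-1}\in\mathcal{K}_{n-1}$, that satisfy the single scalar equation \eqref{eq:trans equival cond entry}. I would handle the first two by assuming genericity, namely $\rho_{n-1}\neq0$, $a_i\neq\rho_n$ and $G\neq0$, and then extend to the remaining cases by continuity of both sides in the parameters. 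For existence, there are many free parameters and only one constraint. My first attempt would be to fix $\overline{a}_i=a_i$ for $i\ge n-1$ and tune $\overline{\rho}_{n-1}$ alone, using continuity and monotonicity of $\mathbf{1}^{\T}\overline{\mathsf{Q}}^{-1}\mathbf{1}$ in $\overline{\rho}_{n-1}$. Failing that, I would note that the identity holds for any solution, and existence only affects how useful the recursion is.
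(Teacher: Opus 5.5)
Your proposal is correct and follows essentially the same route as the paper: rescale row and column $n-1$ by $\rho_n/\rho_{n-1}$, take the Schur complement with respect to the trailing block so that the rank-one coupling $\boldsymbol{\rho}\mathsf{v}^{\T}$ reduces everything to the scalar $\mathsf{v}^{\T}\mathsf{Q}^{-1}\mathsf{v}$, match this scalar with $\mathbf{1}^{\T}\overline{\mathsf{Q}}^{-1}\mathbf{1}$ via Eq.~\eqref{eq:trans equival cond entry}, and evaluate the inverses and determinants with Sherman--Morrison and the matrix determinant lemma. Two small fixes: invertibility of your congruence also needs $\rho_n\neq 0$, which the statement already assumes implicitly; and if you substitute the second identity into the first instead of dividing, you only need $\det(\overline{\mathsf{Q}})\neq 0$.
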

\begin{IEEEproof}[Proof Sketch]
   We utilize the block structures of \(\Rm_{n-1} - \overline{\Gammam}\) and \(\Rm_{n} - \Gammam\) to establish a relationship between their determinants by applying the Sherman-Morrison-Woodbury formula.  
   We then derive a sufficient condition for determining the associated parameters of \(\Rm_{n-1} - \overline{\Gammam}\) based on the given parameters of \(\Rm_{n} - \Gammam\).  
   See Appendix~\ref{proof:thm:2TC_extension} for the complete proof.
\end{IEEEproof}
In view of Theorem~\ref{thm:2TC_extension}, we now only need to focus on the covariance matrix \(\Rm_2\) in the 2-TC covariance class \(\mathcal{K}_2\). When the context is clear, the subscript of \(\Rm_2\) is omitted for simplicity.
All correlations are assumed to be non-negative to avoid tedious mathematical discussions, but the results readily extend to cases with negative correlations~\cite{lapidoth2010sending}.
One advantage of using individual distortion constraints is that different constraints can be imposed on source components according to their practical relevance. 
For the source components \(X_2, X_3, \cdots, X_N\) that share the same statistics, we assume without loss of generality that their distortion constraints satisfy \(e_2 \geq e_3 \geq \cdots \geq e_N\).

\subsection{Characterization of SDC Region} \label{subsec:Characterization of SDC Region}

When the source covariance matrix \(\Rm \in \mathcal{K}_{2}\) and the distortion constraints \(\Em = \diag(e_{1}, e_{2}, \cdots, e_{N})\) are arbitrary, we will delve into the relationship between correlations, distortion constraints, and the rate-distortion function within the SDC region, thereby providing new results and insights into the Hadamard lower bound in Eq.~\eqref{eq:hadamard bound} and the corresponding rate in Eq.~\eqref{eq:hadamard bound value}.
Specifically, the following theorem reformulates the SDC in Eq.~\eqref{eq:SDC} as scalar inequalities and offers a closed-form RDF, explicitly incorporating both the source correlations and the distortion constraints.

\begin{theorem} \label{thm:RE_psd_conditions}
For a 2-TC source covariance matrix \(\Rm \in \mathcal{K}_2\) and an arbitrary distortion constraint matrix \(\Em\), assuming the entries of \(\Em\) satisfy \(e_{2} \geq \cdots \geq e_{N}\), the SDC \(\Rm - \Em \succeq \zerov\) holds if and only if
\begin{align}
   e_3 &\leq 1 - \rho_{2}, \label{eq:e3_condition} \\
   e_2 &\leq 1-\frac{\chi_{3}\rho_{2}^{2}}{1+\chi_{3}\rho_{2}}, \label{eq:e2_condition} \\
   e_1 &\leq 1-\frac{\chi_{2}\rho_{1}^{2}}{1+\chi_{2}\rho_{2}} \label{eq:e1_condition}
\end{align}
hold, where \(\chi_{j}=\sum_{i=j}^{N} \frac{1}{1 - \rho_{2} - e_i}\). The corresponding rate-distortion function is given by
\begin{align}
\Rx([N], \ev) &= \sum_{i=2}^{N} \frac{1}{2} \log \frac{1-\rho_{2}}{e_i} + \frac{1}{2} \log \frac{1 + (N-1)\rho_{\times}}{e_1}, \label{eq:rd sdc1}
\end{align}
where \(\rho_{\times} = \frac{\rho_{2} - \rho_1^2}{1 - \rho_{2}}\). When the correlation is isotropic, i.e., \(\rho=\rho_{1}=\rho_{2}\), the above rate-distortion function simplifies to
\begin{align} \label{eq:rd sdc iso}
\Rx([N], \ev) = \sum_{i=1}^{N} \frac{1}{2} \log \frac{1-\rho}{e_i} + \frac{1}{2} \log \frac{1 + (N-1)\rho}{1-\rho}.
\end{align}
\end{theorem}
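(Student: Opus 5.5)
The plan is to reduce the Loewner condition \(\Rm-\Em\succeq\zerov\) to scalar inequalities through a Schur-complement chain that peels off the peripheral block first and the central component last. Write \(\Rm-\Em\) in block form with the \((N-1)\times(N-1)\) peripheral block \(\mathsf{P}=(1-\rho_2)\mathsf{I}-\diag(e_2,\dots,e_N)+\rho_2\mathbf{1}\mathbf{1}^{\T}\), the off-diagonal column \(\rho_1\mathbf{1}\), and the corner entry \(1-e_1\).

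First I will characterize \(\mathsf{P}\succeq\zerov\). Set \(c_i=1-\rho_2-e_i\), which is non-decreasing in \(i\ge2\) because \(e_2\ge\cdots\ge e_N\). The matrix \(\diag(c_i)+\rho_2\mathbf{1}\mathbf{1}^{\T}\) is a rank-one update with \(\rho_2\ge0\). By eigenvalue interlacing it has at most one negative eigenvalue, and a negative eigenvalue is possible only if \(c_2<0\). If two of the \(c_i\) are negative, the update cannot fix both, which gives the necessity of \(c_3\ge0\), i.e.\ Eq.~\eqref{eq:e3_condition}.

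Given \(c_3,\dots,c_N>0\), I will apply a Schur complement to the block indexed by \(\{3,\dots,N\}\), which is \(\diag(c_3,\dots,c_N)+\rho_2\mathbf{1}\mathbf{1}^{\T}\succ\zerov\). By Sherman--Morrison, \(\mathbf{1}^{\T}(\cdot)^{-1}\mathbf{1}=\chi_3/(1+\rho_2\chi_3)\). The remaining scalar condition is then \(1-e_2-\rho_2^2\chi_3/(1+\rho_2\chi_3)\ge0\), which is exactly Eq.~\eqref{eq:e2_condition}.

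Next I will take the Schur complement of the whole \(\mathsf{P}\) against the \((1,1)\) entry. Again by Sherman--Morrison, \(\rho_1^2\mathbf{1}^{\T}\mathsf{P}^{-1}\mathbf{1}=\rho_1^2\chi_2/(1+\rho_2\chi_2)\), and the condition \(1-e_1-\rho_1^2\chi_2/(1+\rho_2\chi_2)\ge0\) is Eq.~\eqref{eq:e1_condition}.

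The main obstacle is the boundary cases. When \(c_2\le0\) or \(c_3=0\), the quantity \(\chi_2\) is infinite or undefined and \(\mathsf{P}\) may be singular, so the Schur-complement argument needs the generalized version (pseudo-inverse together with a range condition) or a continuity argument. I will treat these by perturbing \(\Em\to\Em-\epsilon\mathsf{I}\) and taking limits, using the explicit determinant of Lemma~\ref{lemma:determinant}. That lemma shows \(\Delta\) equals the product of the leading Schur pivots, which links the sign bookkeeping to the three inequalities. I also need to check that Eq.~\eqref{eq:e2_condition} forces \(c_2\ge0\) in the regime where it is meaningful, since \(\chi_3\rho_2^2/(1+\chi_3\rho_2)<\rho_2\). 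This makes the ordering \(e_3\) then \(e_2\) then \(e_1\) consistent.

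For the rate, the SDC gives \(\Dsm=\Em\) (Theorem~\ref{thm:det(Rm-Dsm)} and the Hadamard bound with equality). The rate is then \(\frac12\log\det\Rm/\det\Em\), and \(\det\Rm=\Delta(\rho_1,\rho_2;\zerov)=(1-\rho_2)^{N-1}\bigl(1+(N-1)\rho_{\times}\bigr)\) by Lemma~\ref{lemma:determinant} with \(\Gammam=\zerov\). Splitting the logarithm over the \(N\) components gives Eq.~\eqref{eq:rd sdc1}. Substituting \(\rho_1=\rho_2=\rho\) gives \(\rho_{\times}=\rho\), and rewriting \(\log\frac{1+(N-1)\rho}{e_1}=\log\frac{1-\rho}{e_1}+\log\frac{1+(N-1)\rho}{1-\rho}\) yields Eq.~\eqref{eq:rd sdc iso}.
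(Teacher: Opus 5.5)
Your argument is correct in substance, but it takes a different route from the paper.

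\textbf{Comparison with the paper.} The paper applies Sylvester's criterion to \emph{all} principal minors of \(\Rm-\Em\) and evaluates each one with the closed-form determinant of Lemma~\ref{lemma:determinant}. It uses the \(2\times 2\) minor \((1-e_j)(1-e_k)-\rho_2^2\) to show that at most one peripheral index (necessarily index \(2\)) can have \(e_j>1-\rho_2\). It then reads Eqs.~\eqref{eq:e3_condition}--\eqref{eq:e2_condition} off the peripheral minors and gets Eq.~\eqref{eq:e1_condition} from the largest minor containing the central component.

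Your nested Schur complements (block \(\{3,\dots,N\}\), then index \(2\), then index \(1\)) with Sherman--Morrison produce the same three inequalities as three successive pivot conditions. Interlacing replaces the \(2\times 2\)-minor uniqueness argument.

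Your route buys an exact equivalence from just three conditions. You never need the step the paper passes over when it replaces \(\mathcal{I}\) by the full index set: that nonnegativity of the full central-including minor forces all smaller central-including minors to be nonnegative. The minor route buys independence from invertible pivots. You instead need generalized Schur complements or your \(\epsilon\)-perturbation at singular points. Your rate computation matches the paper's.

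\textbf{A false claim to drop.} Eq.~\eqref{eq:e2_condition} does not force \(c_2\ge 0\). Since \(\chi_3\rho_2^2/(1+\chi_3\rho_2)<\rho_2\), the threshold in Eq.~\eqref{eq:e2_condition} exceeds \(1-\rho_2\). So \(c_2\ge 0\) implies Eq.~\eqref{eq:e2_condition}, not conversely.

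The band \(1-\rho_2<e_2\le 1-\chi_3\rho_2^2/(1+\chi_3\rho_2)\) is a genuine, non-degenerate part of the SDC region; it is the paper's ``unique \(j\)'' case. For example, take \(N=3\), \(\rho_2=\tfrac12\), \(e_3=\tfrac1{10}\), \(e_2=\tfrac35\). Then \(c_2=-\tfrac1{10}\), yet the peripheral block \(\bigl(\begin{smallmatrix}2/5 & 1/2\\ 1/2 & 9/10\end{smallmatrix}\bigr)\) is positive definite.

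For the same reason, \(c_2<0\) is not a degenerate case: \(\chi_2\) is finite there. Perturbing \(\Em\to\Em-\epsilon\mathsf{I}\) could not move it to \(c_2\ge 0\) anyway.

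You do not need the claim. Sherman--Morrison is an algebraic identity requiring only \(c_2\neq 0\) and \(1+\rho_2\chi_2\neq 0\). Suppose your block-\(\{3,\dots,N\}\) step yields \(\mathsf{P}\succ\zerov\) with \(c_2<0\). Then \(\det\mathsf{P}=\bigl(\prod_{i\ge 2}c_i\bigr)(1+\rho_2\chi_2)>0\) forces \(1+\rho_2\chi_2<0\). Hence \(\mathbf{1}^{\T}\mathsf{P}^{-1}\mathbf{1}=\chi_2/(1+\rho_2\chi_2)>0\), and your final Schur complement goes through verbatim.

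The only genuine boundary cases for your limiting argument are \(c_2=0\), \(c_3=0\), and singular \(\mathsf{P}\) (equality in Eq.~\eqref{eq:e2_condition}).

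\textbf{Minor wording fix.} Your interlacing sentence should say that a positive semidefinite rank-one update removes at most one negative eigenvalue. It should not say that the sum has at most one negative eigenvalue.
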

\begin{IEEEproof}[Proof Sketch]
   We decompose and analyze all of the principal minors of \(\Rm - \Em\) and invoke Lemma~\ref{lemma:determinant} to obtain an equivalent form of the SDC.
Subsequently, we expand the determinant of \(\Rm\) as a function of the correlations to establish the RDF. See Appendix~\ref{proof:thm:RE_psd_conditions} for the complete proof.
\end{IEEEproof}

In light of Theorem~\ref{thm:RE_psd_conditions}, we quantitatively observe the role of source correlations in reducing the optimal compression rate within the considered 2-TC covariance class.
In particular, stronger inter-component correlations lead to a lower compression rate under the same individual distortion constraints, with the reduction exhibiting a logarithmic-type dependence on the correlation strength.
Moreover, the complicated interactions between all individual distortion constraints \(\{e_i\}_{i=1}^{N}\) and correlations \(\rho_1\) and \(\rho_2\) are captured in Eqs.~\eqref{eq:e3_condition}-\eqref{eq:e1_condition}, which define the SDC region at the element-wise level.
This region can be analyzed from two perspectives: distortion constraints and source correlations.

\subsubsection{A Distortion-Constraint Perspective}
\hspace{0mm}
For a given correlated Gaussian source \(\xv \sim \mathcal{N}(\mathsf{0},\Rm)\) with \(\Rm \in \mathcal{K}_{2}\), we quantify the probability of the event \(\mathcal{A}_{0} = \{\mathbf{E} \preceq \Rm\}\) under the assumption that all diagonal entries of \(\mathbf{E}\) are independently drawn from a uniform distribution.
\begin{theorem} \label{thm:sdc exponential asymp}
For a 2-TC source covariance matrix \(\Rm \in \mathcal{K}_2\), when the individual distortion constraints \(\{e_i\}_{i=1}^{N}\), i.e., \(\mathbf{E}=\diag(e_1,\cdots,e_N)\), are independently and uniformly distributed as \(e_i \sim \mathrm{Unif}[0,1]\), the SDC satisfaction probability takes the form of an \(N\)-fold integral
\begin{align}
   P(\mathcal{A}_0) 
   = (N-1)\underbrace{\int_{0}^{1-\rho_{2}}  \cdots \int_{0}^{1-\rho_{2}}}_{N-2 \text{ times}}  \int_{\max\{e_{3},\cdots,e_{N}\}}^{1-\rho_{2}-\frac{\rho_{1}^{2}-\rho_{2}}{1-\left(\rho_{1}^{2}-\rho_{2}\right)\chi_{3}}}  \int_{0}^{1-\frac{\chi_{2}\rho_{1}^{2}}{1+\chi_{2}\rho_{2}}} \, \dd e_{1} \dd e_{2} \dd e_{3} \cdots \dd e_{N}. \label{eq:SDC prob}
\end{align} 
An explicit analytical upper bound involving a single integral is given by
\begin{align}
   P(\mathcal{A}_0) \leq{}& \left(1-\rho_{2}\right)^{N-2}\biggg[\left(1-\frac{\left(N-2\right)\rho_{1}^{2}}{1+\left(N-3\right)\rho_{2}}\right)\left(1-\rho_{2}-\frac{(N-1)\left(\rho_{1}^{2}-\rho_{2}\right)}{1-\left(\rho_{1}^{2}-\rho_{2}\right)\frac{N-2}{1-\rho_{2}}}\right) \notag \\
   & \qquad \qquad \qquad +\frac{N-1}{\left(1+\rho_{2}\frac{N-2}{1-\rho_{2}}\right)^{2}}\log\!\left(\frac{\rho_{1}^{2}-\rho_{2}}{1-\left(\rho_{1}^{2}-\rho_{2}\right)\frac{N-2}{1-\rho_{2}}}+\frac{\rho_{2}\left(1-\rho_{2}\right)}{1+\left(N-3\right)\rho_{2}}\right)\biggg] \notag \\
   &- \frac{(N-1)(N-2)\rho_{1}^{2}}{\left(1+\rho_{2}\frac{N-2}{1-\rho_{2}}\right)^{2}}\int_{0}^{1-\rho_{2}}  \log\!\left(\left(1-\rho_{2}-e_{n}\right)+\frac{\rho_{2}\left(1-\rho_{2}\right)}{1+\left(N-3\right)\rho_{2}}\right) e_{n}^{N-3}  \, \dd e_{n} \label{eq:SDC prob N-2 upper final}
\end{align} 
and its asymptotic approximation as \(N \to \infty\) is
\begin{align}
   P(\mathcal{A}_0) = e^{N\log\left(1-\rho_{2}\right)}\Bigg[2\left(1-\frac{\rho_{1}^{2}}{\rho_{2}}\right)\left(1-\rho_{2}\right)+O\!\left(\frac{\log N}{N}\right)\Bigg]. \label{eq:SDC prob N-2 upper final Harmo asym}
\end{align} 

\end{theorem}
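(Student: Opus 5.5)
The plan is to derive all three claims from the element-wise characterization of the SDC in Theorem~\ref{thm:RE_psd_conditions}. That theorem assumes the peripheral constraints are sorted, $e_2\ge\cdots\ge e_N$, whereas here they are i.i.d.\ and unsorted. So the first step is a symmetrization. Since the peripheral components $X_2,\dots,X_N$ are exchangeable under $\Rm\in\mathcal{K}_2$, the event $\mathcal{A}_0$ is invariant under permutations of $(e_2,\dots,e_N)$. Only the \emph{largest} peripheral constraint plays the special role of $e_2$, and the conditions depend on the others only through the symmetric sum $\chi_3$ and through \eqref{eq:e3_condition}, which applies to the second-largest value and hence to all of them. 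Conditioning on which of the $N-1$ peripheral indices attains the maximum gives the factor $(N-1)$. The constraints then become $e_3,\dots,e_N\in[0,1-\rho_2]$ (from \eqref{eq:e3_condition}) and $e_2\ge\max\{e_3,\dots,e_N\}$.

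For the upper limit on $e_2$, I will rewrite \eqref{eq:e1_condition} as $e_1\le \frac{1+\chi_2(\rho_2-\rho_1^2)}{1+\chi_2\rho_2}$. This interval for $e_1$ is nonempty only if $\chi_2\le 1/(\rho_1^2-\rho_2)$ (the relevant case is $\rho_1^2>\rho_2$). Writing $\chi_2=\chi_3+\frac{1}{1-\rho_2-e_2}$ and solving for $e_2$ gives exactly the limit $1-\rho_2-\frac{\rho_1^2-\rho_2}{1-(\rho_1^2-\rho_2)\chi_3}$. I then need to check that \eqref{eq:e2_condition} is implied by this limit, or is no more binding; I would verify this by comparing the two thresholds as functions of $\chi_3$. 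I also need to check that the $e_1$ bound is at most $1$, so that the uniform density integrates to the stated length. Together these give the $N$-fold integral \eqref{eq:SDC prob}.

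For the single-integral upper bound, I integrate out $e_1$ explicitly, which leaves the integrand $\frac{1+\chi_2(\rho_2-\rho_1^2)}{1+\chi_2\rho_2}=1-\frac{\chi_2\rho_1^2}{1+\chi_2\rho_2}$. This is decreasing in $\chi_2$. Each term of $\chi_3$ is at least $1/(1-\rho_2)$, so $\chi_3\ge (N-2)/(1-\rho_2)$. Substituting this minimum value both enlarges the $e_2$ upper limit, which is decreasing in $\chi_3$, and enlarges the integrand. The $e_2$-integral then becomes an explicit integral of a rational function of $1/(1-\rho_2-e_2)$. Its antiderivative is linear plus logarithmic, which should produce the product term and the $\log$ term in \eqref{eq:SDC prob N-2 upper final}. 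After this step the only remaining dependence on $e_3,\dots,e_N$ is through the lower limit $m=\max\{e_3,\dots,e_N\}$. Its distribution on $[0,1-\rho_2]^{N-2}$ has density $(N-2)m^{N-3}$. Pushing the $(N-2)$-fold integral onto $m$ yields the prefactor $(1-\rho_2)^{N-2}$ for the $m$-independent terms and the remaining single integral with weight $e_n^{N-3}$.

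For the asymptotics \eqref{eq:SDC prob N-2 upper final Harmo asym}, I factor out $(1-\rho_2)^{N}=e^{N\log(1-\rho_2)}$ and expand each bracketed term in $1/N$. The weight $e_n^{N-3}$ concentrates the last integral near $e_n=1-\rho_2$. A Laplace-type endpoint expansion, in which the logarithm's argument tends to $\rho_2(1-\rho_2)/(N\rho_2)$, produces $\log N/N$ corrections that partly cancel the explicit $\log$ term. The leading constants should combine to $2(1-\rho_1^2/\rho_2)(1-\rho_2)$.

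I expect this last step to be the main obstacle. The cancellation between the explicit $\log$ term and the endpoint contribution of the integral must be tracked to the correct order, and every substituted inequality must keep the correct direction, in particular the sign of $\rho_1^2-\rho_2$ in the denominators. If the bookkeeping becomes unwieldy, I would first evaluate the integral exactly in terms of harmonic-type sums via the substitution $u=1-\rho_2-e_n$, and then apply the standard asymptotics $H_N=\log N+O(1)$.
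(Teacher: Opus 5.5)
Your route matches the paper's: the factor $N-1$ from which peripheral index carries the largest constraint, the $e_2$ upper limit taken from the $e_1$ condition rather than from \eqref{eq:e2_condition}, replacing $\chi_3$ by its minimum $(N-2)/(1-\rho_2)$, pushing the rest onto the law of $\max\{e_3,\dots,e_N\}$, and harmonic-number asymptotics. Substituting the minimum of $\chi_3$ pointwise in the limit and the integrand is valid, and it spares you the paper's ``decoupling'' construction. Two steps, however, do not go through as you describe them.

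First, the regime. The asymptotic claim only concerns $\rho_1^2<\rho_2$. For $\rho_1^2>\rho_2$ the covariance is not positive definite once $(N-1)(\rho_1^2-\rho_2)\ge 1-\rho_2$, and the constant $2(1-\rho_1^2/\rho_2)$ would be negative.
\begin{itemize}
\item In the relevant regime your criterion ``nonempty only if $\chi_2\le 1/(\rho_1^2-\rho_2)$'' is false on the branch $e_2<1-\rho_2$. There the $e_1$-interval is always nonempty.
\item The limit $1-\rho_2+\frac{\rho_2-\rho_1^2}{1+(\rho_2-\rho_1^2)\chi_3}$ lies \emph{above} $1-\rho_2$. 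It comes from the branch $e_2>1-\rho_2$, where $1+\chi_2\rho_2<0$ and nonemptiness forces $1+\chi_2(\rho_2-\rho_1^2)\le 0$.
\item So the largest peripheral constraint may exceed $1-\rho_2$, and the $e_2$-range crosses the point where $\chi_2$ jumps from $+\infty$ to $-\infty$.
\end{itemize}
Write the $e_1$-length in $s=1-\rho_2-e_2$ as $\frac{1-(\rho_1^2-\rho_2)\chi_3}{A}-\frac{\rho_1^2}{A(As+\rho_2)}$ with $A=1+\rho_2\chi_3$. This form is continuous at $s=0$, which is what justifies one linear-plus-log antiderivative over the whole range.

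Second, the asymptotics.
\begin{itemize}
\item Your computation gives the coefficient $(N-1)\rho_1^2/\bigl(1+\rho_2\frac{N-2}{1-\rho_2}\bigr)^2$ on both logarithms. The statement's single-integral bound drops the $\rho_1^2$ on the first one; that is a typo.
\item The bound then behaves as $(1-\rho_2)^{N-2}\bigl[2(1-\rho_1^2/\rho_2)(1-\rho_2)+O(\log N/N)\bigr]$. After factoring out $e^{N\log(1-\rho_2)}$ the constant is $2(1-\rho_1^2/\rho_2)/(1-\rho_2)$, not the $2(1-\rho_1^2/\rho_2)(1-\rho_2)$ you expect.
\item More importantly, an upper bound cannot give the stated equality, and no matching lower bound exists. $\chi_3$ is a sum of i.i.d.\ terms with infinite mean, so it is typically of order $N\log N$, not $(N-2)/(1-\rho_2)$. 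Hence the excess $\frac{\rho_2-\rho_1^2}{1+(\rho_2-\rho_1^2)\chi_3}$ of the $e_2$-range over $1-\rho_2$ is negligible. The true probability is $(1-\rho_2)^{N-1}\bigl[1-\rho_1^2/\rho_2+o(1)\bigr]$, half the bound.
\item A quick check confirms this. For $\rho_1=0$, every draw with all peripheral $e_i\le 1-\rho_2$ satisfies the SDC. So $P(\mathcal{A}_0)\ge(1-\rho_2)^{N-1}$, which exceeds $2(1-\rho_2)^{N+1}$ whenever $\rho_2>1-1/\sqrt{2}$.
\end{itemize}
The paper's proof makes the same two leaps: from $(1-\rho_2)^{N-2}[\cdots]$ to $e^{N\log(1-\rho_2)}[\cdots]$, and from ``$\le$'' to ``$=$''. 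What your route actually proves is the single-integral upper bound, and hence decay at least as fast as a constant times $(1-\rho_2)^{N}$. The third display as written is only correct at the level of the exponential rate.
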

\begin{IEEEproof}[Proof Sketch]
   The \(N\)-fold integral in Eq.~\eqref{eq:SDC prob} follows from Eqs.~\eqref{eq:e3_condition}-\eqref{eq:e1_condition} in Theorem~\ref{thm:RE_psd_conditions}. By decoupling two key variables, we first upper bound the integrand and then reduce the high-dimensional integral to a single-variable form in Eq.~\eqref{eq:SDC prob N-2 upper final}. Applying the asymptotic approximation of the \(N\)-th harmonic number then yields Eq.~\eqref{eq:SDC prob N-2 upper final Harmo asym}. See Appendix~\ref{proof:thm:sdc exponential asymp} for the complete proof.
\end{IEEEproof}
Under i.i.d. uniform distortion constraints, Theorem~\ref{thm:sdc exponential asymp} provides the exact expression of \(P(\mathcal{A}_0)\) in Eq.~\eqref{eq:SDC prob}. An upper bound is given in Eq.~\eqref{eq:SDC prob N-2 upper final}, and its asymptotic form as \(N \to \infty\) is given in Eq.~\eqref{eq:SDC prob N-2 upper final Harmo asym}.
It follows that the probability of satisfying the SDC decays exponentially with the source length \(N\) at a rate of \(-\log(1-\rho_{2})\).
Fig.~\ref{fig:SDC_prob_N_2} shows the simulation of the SDC satisfaction probability \(P(\mathcal{A}_0)\) versus the source length \(N\), with central correlation \(\rho_1 = 0.45\) fixed.
For each \(N\), we conduct \(10^{6}\) Monte Carlo trials to evaluate \(P(\mathcal{A}_0)\), with all \(e_i \sim \mathrm{Unif}[0,1]\) for \(i \in [N]\).
Our analytical upper bound in Eq.~\eqref{eq:SDC prob N-2 upper final} and the asymptotic approximation in Eq.~\eqref{eq:SDC prob N-2 upper final Harmo asym} yield highly accurate results for the SDC satisfaction probability. Additionally, the approximation is significantly more concise than the \(N\)-fold integral in Eq.~\eqref{eq:SDC prob}. As the peripheral correlation increases from \(\rho_{2}=0.3\) to \(\rho_{2}=0.5\), \(P(\mathcal{A}_0)\) decays exponentially more rapidly with \(N\), dropping to extremely low levels even at relatively short source lengths.
It is evident that the SDC is rarely satisfied under general circumstances, and in such case, the Hadamard compression rate remains merely an unachievable limit. This underlines the relevance of obtaining an exact explicit form for the RDF when the SDC is not satisfied. 
In the next section, we will focus on this issue, derive the RDF and provide valuable insights under certain conditions.
\begin{figure}[t]
   \setlength{\abovecaptionskip}{0pt} 
   \centering
   \includegraphics[width=0.55\textwidth]{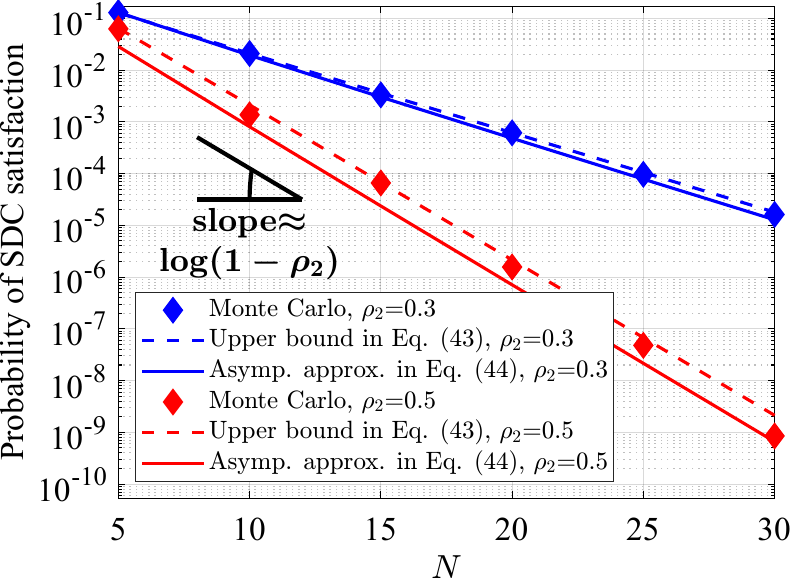}
   \caption{The probability of SDC satisfaction \(P(\mathcal{A}_0)\) versus the source length \(N\) for \(\rho_1=0.45\) and all distortion constraints \(e_i \sim \mathrm{Unif}[0,1]\).}
   \label{fig:SDC_prob_N_2}
\end{figure}

\subsubsection{A Source-Correlation Perspective}

For a given set of distortion constraints, we determine the component-wise correlations that a vector source should have to be compressed at the Hadamard rate in Eq.~\eqref{eq:hadamard bound value}.
In other words, the correlation region that satisfies the SDC is
\begin{align} \label{eq:correlation region}
   \mathcal{C}_{0} = \left\{ (\rho_{2}, \rho_1) \mid 0 \leq \rho_{1} \leq h(\rho_{2}), \, 0 \leq \rho_2 \leq \rho_2^m \right\},
\end{align}
where \(\rho_2^m\) is the zero of the following function
\begin{align} \label{eq:rho2m func}
   h(\rho_{2}) = (1 - e_1)^{\frac{1}{2}} \; \Bigg[\Bigg(\sum_{i=2}^{N} \frac{1}{1 - \rho_{2} - e_i}\Bigg)^{-1} + \rho_{2}\Bigg]^{\frac{1}{2}}.
\end{align} 
The upper bound on \(\rho_1\) in Eq.~\eqref{eq:correlation region} follows from Eq.~\eqref{eq:e1_condition}, and the maximum peripheral correlation \(\rho_2^m\) satisfies \(h(\rho_2^m) = 0\) in light of Eq.~\eqref{eq:e2_condition}.
Solving for \(\rho_2^m\) is intractable, as the equation \(h(\rho_2) = 0\)  can be written as an \((N-1)\)-th degree equation in \(\rho_2\), and such equations generally do not admit closed-form solutions for \(N-1 \geq 5\).
If there exists \(j > 2\) such that \(e_2 = e_3 = \cdots = e_j > e_{j+1} \geq \cdots \geq e_N\), then Eq.~\eqref{eq:e3_condition} directly gives \(\rho_{2}^m = 1 - e_2\).
Otherwise, we provide bounds and approximation for \(\rho_{2}^m\).
\begin{theorem} \label{thm:max rho0}
Given a distortion constraint vector \(\ev = [e_1, e_2, \cdots, e_N]^{\T}\) with \(e_2 > e_3 \geq \cdots \geq e_N\), and a 2-TC source covariance matrix \(\Rm \in \mathcal{K}_2\), the maximum peripheral correlation \(\rho_2^m\) satisfying the SDC is bounded by
\begin{align} \label{eq:max rho0 bounds}
1 - e_2 + \frac{c_l}{N} \leq \rho_{2}^m \leq 1 - e_2 + \frac{c_u}{N-1},
\end{align}
where \(c_l = 2(e_{2}-e_{3})\left(\sqrt{\frac{1-e_{3}}{1-e_{2}}}+1\right)^{-1}\)
and \(c_u = e_2 - \overline{e}_3\),
with \(\overline{e}_3 = \frac{1}{N-2}\sum_{i=3}^{N} e_i\).
For large \(N\), the quantity \(\rho_{2}^m\) satisfies
\begin{align} \label{eq:rho0m theta}
   \rho_{2}^m = 1 - e_2 + \Theta\!\left(\frac{1}{N}\right).
\end{align}
\end{theorem}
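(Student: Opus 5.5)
The plan is to reduce the SDC conditions of Theorem~\ref{thm:RE_psd_conditions} on the peripheral block to a scalar fixed-point equation for the excess $x \triangleq \rho_2 - (1-e_2)$, and then bound its root by comparison with simpler equations.

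\emph{Reduction to a scalar equation.} Since $e_2 > e_3$, condition \eqref{eq:e3_condition} allows $\rho_2$ slightly above $1-e_2$. The binding constraint is then \eqref{eq:e2_condition}, i.e.\ nonnegativity of the determinant of the peripheral block $\Rm[\{2,\dots,N\}]-\Em[\{2,\dots,N\}]$. By Lemma~\ref{lemma:determinant} applied to this block, that determinant equals $(1+\rho_2\chi_2)\prod_{i=2}^N(1-\rho_2-e_i)$. For $x\in(0,e_2-e_3)$ exactly one factor of the product, $1-\rho_2-e_2=-x$, is negative. Hence PSD-ness of the block forces $1+\rho_2\chi_2\le 0$, which is equivalent to $g(x)\ge 0$, where
\begin{align*}
g(x) \triangleq \frac{1}{x}-\frac{1}{1-e_2+x}-\sum_{i=3}^N\frac{1}{e_2-e_i-x}.
\end{align*}
I will check that $g$ is strictly decreasing on $(0,e_2-e_3)$. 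Indeed, $-1/x^2+1/(1-e_2+x)^2<0$ because $x<1-e_2+x$, and the sum term is increasing in $x$. Moreover $g(0^+)=+\infty$ and $g((e_2-e_3)^-)=-\infty$. So $\rho_2^m=1-e_2+x^\star$, where $x^\star$ is the unique root of $g$. I will also verify, via the remaining principal minors in the proof of Theorem~\ref{thm:RE_psd_conditions}, that $h(\rho_2)=0$ in \eqref{eq:rho2m func} is the same equation, so that this root is indeed the zero of $h$.

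\emph{Upper bound.} By convexity of $t\mapsto 1/t$ (Jensen), $\sum_{i\ge3}1/(e_2-e_i-x)\ge (N-2)/(c_u-x)$ with $c_u=e_2-\overline e_3$. Since $c_u-x<1$ and $1-e_2+x\le 1$, we have $1/(1-e_2+x)\ge 1\ge 1/(c_u-x)$. Thus $g(x)\le 1/x-(N-1)/(c_u-x)$, which is negative once $x> c_u/N$. Therefore $x^\star\le c_u/N\le c_u/(N-1)$.

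\emph{Lower bound.} Using $e_i\le e_3$ for $i\ge 3$ and $1-e_2+x\ge 1-e_2$, I get
\begin{align*}
g(x)\ge \frac1x-\frac{1}{1-e_2}-\frac{N-2}{d-x}, \qquad d=e_2-e_3 .
\end{align*}
I will plug in $x_0=c_l/N$ and show the right-hand side is nonnegative. This reduces to a quadratic inequality in $x_0$. Using $d/(1-e_2)+1=(1-e_3)/(1-e_2)$, its admissible root should take exactly the form $2d\big(\sqrt{(1-e_3)/(1-e_2)}+1\big)^{-1}/N$, after dropping lower-order terms in $1/N$ in the favorable direction. Monotonicity of $g$ then gives $x^\star\ge x_0$.

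This algebraic verification is the step I expect to be fiddly. If the exact constant does not fall out, I would first try rationalizing $\sqrt{(1-e_3)/(1-e_2)}-1$ to match the stated $c_l$.

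\emph{Order of growth.} The two bounds together give $\rho_2^m=1-e_2+\Theta(1/N)$. Here $c_l$ is bounded away from $0$ because $e_2>e_3$, and $c_u\le e_2$ is bounded.
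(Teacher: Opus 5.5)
The reduction is correct and matches the paper's. Dividing the paper's equation $\rho_2/(\rho_2+e_2-1)=1+\sum_{i\ge 3}\rho_2/(1-\rho_2-e_i)$ by $\rho_2$ gives exactly your $g(x)=0$, and your monotonicity argument on $(0,e_2-e_3)$ is fine. Both bounding steps, however, fail as written.

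\emph{Lower bound (the main gap).} Replacing $1/(1-e_2+x)$ by $1/(1-e_2)$ loses too much. The inequality you plan to verify is false on part of the parameter range, so this is not just fiddly algebra.

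Put $a=1-e_2$, $d=e_2-e_3$, $\delta=d/a$ and $G(x)=1/x-1/a-(N-2)/(d-x)$. For every $x\in[a,d)$ you have $G(x)\le -(N-2)/(d-x)<0$, so your comparison can never certify $x^\star\ge 1-e_2$. Yet $c_l/N=2d/\bigl(N(\sqrt{1+\delta}+1)\bigr)$ exceeds $a$ whenever $2\delta>N(\sqrt{1+\delta}+1)$. For example, $N=3$, $e_2=0.99$, $e_3=0.01$ gives $c_l/N\approx 0.060$, while $G$ vanishes at $\approx 0.0099$. The theorem still holds there, since $x^\star\approx 0.0895$.

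The bound is non-asymptotic, so ``dropping lower-order terms in $1/N$'' cannot rescue it. The missing step is to keep $1/(1-e_2+x)$ exact and replace $e_i$ by $e_3$ only inside the sum. This is the paper's choice: $f(\rho_2)$ is untouched and only $g$ is bounded by $g_1$. The resulting comparison function $1/x-1/(a+x)-(N-2)/(d-x)$ is decreasing and lies below your $g$. Its zero is the positive root of $(N-2)x^2+(N-1)ax-ad=0$, namely
\[
x_1=\frac{2d}{(N-1)+\sqrt{(N-1)^2+4(N-2)\delta}} .
\]
The inequality $x_1\ge c_l/N$ is equivalent to $\sqrt{(N-1)^2+4(N-2)\delta}\le N\sqrt{1+\delta}+1$. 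The squared right-hand side minus the squared left-hand side equals $\bigl((N-2)^2+4\bigr)\delta+2N\bigl(1+\sqrt{1+\delta}\bigr)>0$.

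\emph{Upper bound.} From $0<c_u-x<1$ you get $1/(c_u-x)>1$, not $\le 1$. So $g(x)\le 1/x-(N-1)/(c_u-x)$ does not follow. Your intermediate conclusion $x^\star\le c_u/N$ is in fact false: for $N=3$, $e_2=0.1$, $e_3=0.001$ one finds $x^\star\approx 0.0482>c_u/3\approx 0.033$.

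The stated bound is still easy to recover. After the Jensen step, drop the negative term $-1/(1-e_2+x)$ to get $g(x)<1/x-(N-2)/(c_u-x)$. The right-hand side is $\le 0$ for $x\ge c_u/(N-1)$, hence $x^\star<c_u/(N-1)$. This gives the same final bound as the paper's $g_2$ comparison. With both repairs, your $\Theta(1/N)$ conclusion follows as you describe.
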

\begin{IEEEproof}[Proof Sketch]
The core of the proof lies in splitting \(h(\rho_{2})\) in Eq.~\eqref{eq:rho2m func} into two appropriate functions and bounding these functions to obtain concise and efficient bounds. Combining the bounds yields the asymptotic approximation.
See Appendix~\ref{proof:thm:max rho0} for the complete proof.
\end{IEEEproof}

In Theorem~\ref{thm:max rho0}, we derive the finite source-length bounds as well as the asymptotic approximation for \(\rho_{2}^m\), thereby filling the correlation region \(\mathcal{C}_{0}\) in Eq.~\eqref{eq:correlation region} to characterize the SDC from the source-correlation perspective.
Within the SDC region, there is a trade-off between the source correlation and the distortion constraint.
Specifically, for a vector source with the covariance matrix in the 2-TC class, if it can be compressed at the Hadamard rate in Eq.~\eqref{eq:hadamard bound value} under distortion constraints \(\{e_i\}_{i=1}^N\), then the gap between the maximum peripheral correlation \(\rho_{2}^{m}\) and \(1-e_2\) decreases inversely with the source length \(N\), where \(e_2\) is the mildest distortion constraint on the peripheral components.
If \(e_2\) is sufficiently mild and significantly exceeds \(1 - \rho_{2}\), the source can only be compressed at an optimal rate strictly higher than the Hadamard rate.
In Fig.~\ref{fig:rho0m}, we simulate \(\rho_{2}^m\) over \(10^{4}\) Monte Carlo trials, and the numerical solution is obtained by directly solving an \((N-1)\)-th degree equation, i.e., \(h(\rho_2^m) = 0\).
In Fig.~\ref{fig:rho0m_e2_fixed}, we fix \(e_2 = 0.1\) and randomly generate the remaining distortion constraints. As \(N\) increases, \(\rho_{2}^m\) converges to \(1 - e_2\), and the bounds remain tight.
Additionally, we fix \(N = 10\) to evaluate the bounds for short source lengths. Fig.~\ref{fig:rho0m_N_fixed} shows that our bounds closely match numerical results across the full range of \(e_{2}\).
\begin{figure*}[t]
\centering
\captionsetup[subfloat]{font=scriptsize}
\subfloat[\(e_2=0.1\) is fixed.]
{
   \hspace{-0.8em}
   \includegraphics[width=0.49\textwidth]{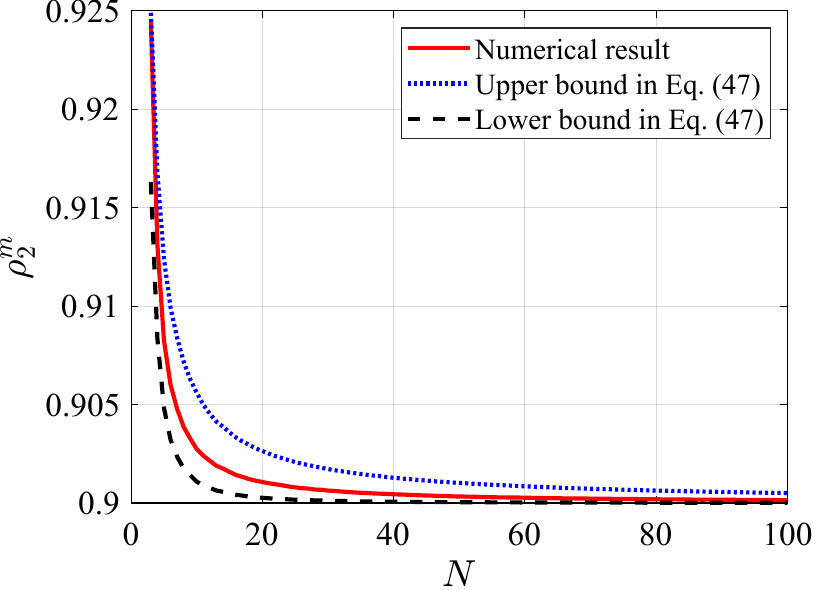}
   \label{fig:rho0m_e2_fixed}
}
\subfloat[\(N=10\) is fixed.]
{
   \hspace{0.1em}
   \includegraphics[width=0.47\textwidth]{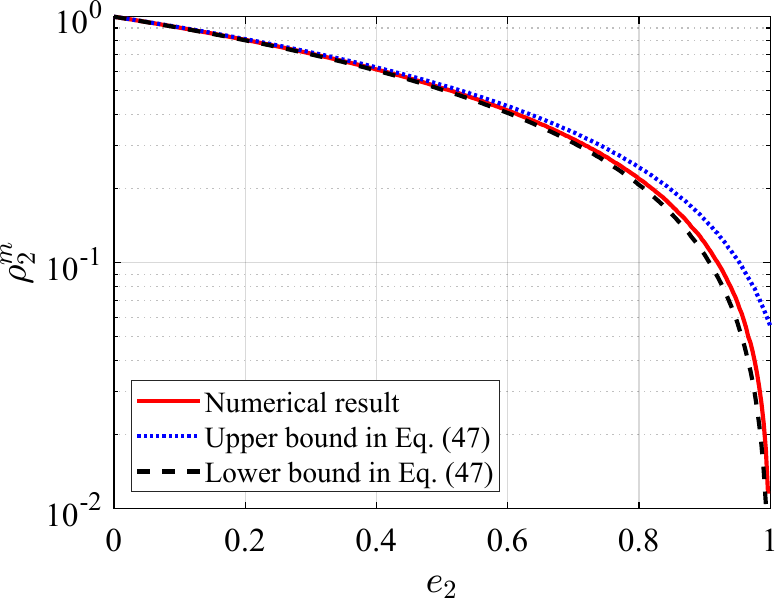}
   \hspace{0.0em} \label{fig:rho0m_N_fixed}
}
\caption{Evaluation of \(\rho_{2}^m\) in Theorem~\ref{thm:max rho0}.}
\label{fig:rho0m}
\end{figure*}

\section{Source Compression with 2-TC Covariance and 2-TD Constraints} \label{sec:Source Compression with 2-TC Covariance and 2-TD Constraints}

In this section, we investigate the Gaussian-quadratic compression with a 2-TC source covariance matrix under a specific distortion constraint.
The structure of the optimal distortion matrix is first characterized. We then present the complete and exact closed-form rate-distortion functions over a two-dimensional distortion plane that is partitioned into seven regions. We further establish the optimality of the distortion allocation in each region and provide simulation results. Subsequently, we offer further clarification and refine part of the results concerning the RDF into a more concise form, which is also proven to be highly accurate. Finally, we analyze the behavior of all regions under extreme correlation scenarios and in the asymptotic regime with respect to source length.

\subsection{Main Results} \label{subsec:Main Results}

Recall that for the covariance matrix \(\Rm\) in the 2-TC class, \(\Rm \succ \zerov\) is equivalent to \(\nu > 0\), with \(\nu\) given in Eq.~\eqref{eq:nu def}.  
Rewriting \(\nu > 0\) yields
\begin{align} \label{eq:2tc pd cond}
   \rho_{1}^{2} < \rho_{2}+\frac{1-\rho_{2}}{N-1},
\end{align}
where \(\rho_1\) denotes the correlation between the central and peripheral components, and \(\rho_{2}\) denotes the correlation among peripheral components.
Herein, we further specify the individual distortion constraints considered in this section.
For a vector Gaussian source \(\xv\), the fidelity criterion for the central component is \(\mathbb{E}[(X_1-\widehat{X}_1)^2] \le e_1\), while for each peripheral component \(X_i\) with any \(i \geq 2\), it is \(\mathbb{E}[(X_i-\widehat{X}_i)^2] \le e_2\). 
This setup, with two types of distortion constraints, is called the 2-TD constraints, and the distortion plane is \(\mathcal{U} = \{(e_1, e_2) \mid e_1, e_2 \in [0, 1]\}\). 
We now present the structure of the optimal distortion matrix in this setting.
\begin{theorem} \label{thm:dsm model}
   For the Gaussian-quadratic lossy compression with a 2-TC covariance matrix under 2-TD constraints, the optimal distortion matrix achieving the RDF has the \(2 \times 2\) block structure
\begin{align} \label{eq:dsm model}
   \Dsm =
\begin{pmatrix}
      \delta_1 & \alpha \mathbf{1}^{\T} \\[6pt]
      \alpha \mathbf{1} & \delta_2 \mathsf{I}_{N-1} + \beta \mathsf{J}_{N-1}
\end{pmatrix} \succ \zerov,
\end{align}
where \(\mathsf{J}_{N-1}=\mathbf{1}\mathbf{1}^{\T}-\mathsf{I}_{N-1}\), \(\mathbf{1}\) is the all-ones \(N-1\) dimensional vector and \(\mathsf{I}_{N-1}\) is the identity matrix with size \(N-1\). Here, the diagonal entries \(\delta_1 \in \mathbb{R}^+\) and \(\delta_2 \in \mathbb{R}^+\) represent the quadratic distortions for the central and peripheral components, respectively. 
The quantity \(\alpha \in \mathbb{R}\) denotes the correlation of distortion between the central and peripheral components and \(\beta \in \mathbb{R}\) denotes the correlation of distortion between different peripheral components.
\end{theorem}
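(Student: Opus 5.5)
The plan is a symmetry argument applied to the Max-Det formulation of Lemma~\ref{lemma:maxdet}. Let $\mathcal{S}$ be the group of $N\times N$ permutation matrices $\mathsf{P}$ that fix index $1$ and permute indices $2,\dots,N$ arbitrarily. I will show that $\Dsm$ is unique and satisfies $\mathsf{P}\Dsm\mathsf{P}^{\T}=\Dsm$ for every $\mathsf{P}\in\mathcal{S}$. I will then read off the block form \eqref{eq:dsm model} directly from this invariance.

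First I check that the problem in Lemma~\ref{lemma:maxdet} is invariant under $\mathcal{S}$. For a 2-TC matrix $\Rm\in\mathcal{K}_2$, Definition~\ref{def:nTC cov} gives $\Rm(1,j)=\rho_1$ for all $j\ge 2$ and $\Rm(i,j)=\rho_2$ for all distinct $i,j\ge2$. Hence $\mathsf{P}\Rm\mathsf{P}^{\T}=\Rm$ for every $\mathsf{P}\in\mathcal{S}$.

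Under 2-TD constraints, $\Em=\diag(e_1,e_2,\dots,e_2)$, so $\mathsf{P}\Em\mathsf{P}^{\T}=\Em$ as well. Now let $\Dm$ be feasible and set $\Dm'=\mathsf{P}\Dm\mathsf{P}^{\T}$. The diagonal of $\Dm'$ is a permutation of the diagonal of $\Dm$ that fixes entry $1$ and permutes the peripheral entries, which all share the same bound $e_2$; hence $\dv'\le\ev$. Congruence preserves the Loewner order, so $\zerov\prec\Dm'\preceq\mathsf{P}\Rm\mathsf{P}^{\T}=\Rm$. Finally $\det\Dm'=\det\Dm$, so the objective value is unchanged.

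Second, I establish uniqueness of the optimizer. The objective equals $\tfrac12\log\det\Rm-\tfrac12\log\det\Dm$. The function $\log\det$ is strictly concave on the positive definite cone, and the feasible set is convex, being an intersection of linear constraints and Loewner-order constraints. Existence of a minimizer requires a little care because the constraint $\Dm\succ\zerov$ is open. The set $\{\Dm:\zerov\preceq\Dm\preceq\Rm\}$ is compact, $-\log\det\Dm\to+\infty$ as $\Dm$ approaches the boundary $\det\Dm=0$, and a feasible point exists (for instance $\Dm=\epsilon\Rm$ with small $\epsilon>0$). Together these give existence, and strict convexity of the objective then gives uniqueness. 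This existence and uniqueness step is the only one requiring care, and I expect it to be the main (though mild) obstacle. Strong duality, already used for Theorem~\ref{thm:det(Rm-Dsm)}, is not needed here.

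Third, I combine the two facts. Since $\mathsf{P}\Dsm\mathsf{P}^{\T}$ is feasible with the same optimal value, uniqueness forces $\mathsf{P}\Dsm\mathsf{P}^{\T}=\Dsm$ for all $\mathsf{P}\in\mathcal{S}$. Alternatively, one can average $\Dsm$ over $\mathcal{S}$: the average is feasible by convexity, and Jensen's inequality for the concave $\log\det$ shows it is no worse.

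Reading off the entries of an $\mathcal{S}$-invariant symmetric matrix:
\begin{itemize}
\item all peripheral diagonal entries are equal, giving $\delta_2$;
\item all entries $(1,j)$ with $j\ge2$ are equal, giving $\alpha$;
\item all off-diagonal peripheral entries are equal, giving $\beta$, because $\mathcal{S}$ acts transitively on ordered pairs of distinct peripheral indices (for $N-1\ge2$).
\end{itemize}
Setting $\delta_1=\Dsm(1,1)$ yields exactly the form \eqref{eq:dsm model}. Positivity $\delta_1,\delta_2>0$ follows from $\Dsm\succ\zerov$, and $\alpha,\beta\in\mathbb{R}$ are unconstrained in sign.
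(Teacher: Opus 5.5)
Your proof is correct and follows essentially the same route as the paper. Both use the same group of permutations fixing the central index, the same invariance of $\Rm$, $\Em$ and the feasible set, and strict concavity of $\log\det$ to force the optimizer to be invariant, after which the block form is read off. The differences are minor: you conclude via uniqueness of the minimizer and supply the existence argument the paper takes for granted, whereas the paper averages over the group and uses the equality case of Jensen's inequality.
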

\begin{IEEEproof}[Proof Sketch]
We first introduce the permutation group that fixes the central component while arbitrarily permuting the remaining \(N-1\) peripheral components. We then demonstrate that the group average of any feasible distortion matrix remains feasible and does not increase the RDF. By a contradiction argument, we deduce that the optimal solution must be invariant under this permutation group. This invariance directly implies that the optimal distortion matrix necessarily admits the stated block structure.
See Appendix~\ref{proof:thm:dsm model} for the complete proof.
\end{IEEEproof}

The following Theorem~\ref{thm:seven regions main} provides the complete and closed-form RDF for the Gaussian-quadratic lossy compression problem with a 2-TC covariance matrix under 2-TD constraints.
Specifically, key points are first defined to establish non-trivial boundaries within the unit square \(\mathcal{U}\). Based on these boundaries, \(\mathcal{U}\) is partitioned into seven regions. 
For each region, the corresponding RDF is provided.
The \textit{optimality} of the region-wise results refers to the optimality of the solutions to the underlying rate-distortion optimization problem, which is guaranteed by the KKT conditions (see Appendix~\ref{proof:thm:det(Rm-Dsm)}).
\begin{theorem}[Closed-form RDFs on seven regions]  \label{thm:seven regions main}
Define
\begin{align} \label{eq:eta def}
   \eta = \frac{(N-1)\rho_{1}}{1+(N-2)\rho_{2}}
\end{align}
and key points:
\begin{align}
   P_1 &= \left(\frac{[\rho_{2}-\rho_{1}^{2}]^{+}}{\rho_{2}}, \ 1-\rho_{2}-(N-1)[\rho_{1}^{2}-\rho_{2}]^{+}\right), \\
   P_2 &= \left(1-\rho_{1}\eta, \ 0\right), \\
   P_3 &= \left(0, \ 1 - \rho_{1}^{2}\right), \\
   P_4 &= \left(1-\rho_{2}\eta^{2}, \ 1 - \rho_{2}\right).
\end{align}
We use the notation \(\mathcal{B}_{m,n}^{(j)}\) to denote the boundary shared by the \(m\)-th and \(n\)-th regions, where the superscript \(j\) corresponds to the order of the Lagrange multipliers (see Eq.~\eqref{problem:cvx2}
):\footnote{When the context is clear, we omit the explicit notation for the ordered pairs \((e_1, e_2)\).}

\begingroup
\fontsize{12pt}{10pt}\selectfont
\begin{align*}
&\begin{cases}
   \mathcal{B}_{4,6}^{(1)} = \Biggl\{1-e_{2}-\frac{(N-1)\left(1-e_{1}\right)}{\eta^{2}}+(N-2)\rho_{2} = 0, 
   \
   e_1 \in [x_2, x_4], e_{2} \in [0, y_4]\Biggr\}, \\
   \mathcal{B}_{1,2}^{(1)} = \Biggl\{1-e_{2}-\frac{\left(1-e_{1}\right)}{\eta^{2}} = 0, 
   \ 
   e_1 \in [x_4, 1], e_{2} \in [y_4, 1]\Biggr\};
\end{cases}\\
& \quad \; \mathcal{B}_{1,3}^{(2)} = \{1 - e_{2} - (1 - e_1) \rho_{1}^{2} = 0, 
   \ 
   e_1 \in [0, 1], e_{2} \in [y_3, 1]\}; \\
&\begin{cases}
   \mathcal{B}_{0,5}^{(3)} = \{1 - e_{2} - \rho_{2} = 0, 
   \ 
   e_1 \in [0, x_1]\}, \\
   \mathcal{B}_{2,6}^{(3)} = \{1 - e_{2} - \rho_{2} = 0, 
   \ 
   e_1 \in [x_4, 1]\}, \\
   \mathcal{B}_{1,4}^{(3)} = \Biggl\{e_1 + \frac{N - 1}{1-\rho_{2}} \omega + \rho_{1} \sqrt{\frac{1 - e_1}{1 - e_{2}}} - 1 = 0, 
   \ 
   e_1 \in [x_1, x_4], e_{2} \in [e_{2}^\star, y_4]\Biggr\};
\end{cases}\\
&\begin{cases}
   \mathcal{B}_{0,4}^{(4)} = \Biggl\{1-e_{2}-\frac{\rho_{1}^{2}}{1-e_{1}}+(N-2)\left(\rho_{2}-\frac{\rho_{1}^{2}}{1-e_{1}}\right) = 0, 
   \ 
   e_1 \in [x_1, x_2], e_{2} \in [0, y_1]\Biggr\}, \\
   \mathcal{B}_{1,5}^{(4)} = \Biggl\{1 - e_{2} - \frac{\rho_{1}^{2}}{1 - e_1} = 0, 
   \ 
   e_1 \in [0, x_1], e_{2} \in [y_1, y_3]\Biggr\}.
\end{cases}
\end{align*}
\endgroup
Due to point \(P_1\), the boundaries \(\mathcal{B}_{0,5}^{(3)}\) and \(\mathcal{B}_{1,5}^{(4)}\) exist if and only if
\begin{align} \label{eq:sqrt rho0 cond}
   \rho_1^{2} < \rho_{2}
\end{align}
holds.
For the boundary \(\mathcal{B}_{1,4}^{(3)}\), the parameter \(\omega\) is given by
\begin{align} \label{eq:omega B3}
   \omega = e_{1}(e_{2} + \rho_{2} - 1) - \left(\rho_{1} - \sqrt{(1 - e_{1})(1 - e_2)}\right)^{2},
\end{align}
and \(e_{2}^\star\) is defined as
\begin{align} \label{eq:e0star def}
   e_{2}^\star \triangleq \min\left\{e_{2} : (e_{1}, e_2) \in \mathcal{B}_{1,4}^{(3)} \cap \mathcal{U} \right\}.
\end{align}
We define the top and bottom regions based on the established boundaries above\footnote{Let \(\mathcal{B} = \{(x,y) \in \mathbb{R}^2 \mid f(x,y)=0,\, x\in[x_1,x_2],\, y\in[y_1,y_2]\}\) be the boundary.
The top and bottom regions associated with \(\mathcal{B}\) are defined as \(\mathcal{R}^{\dagger} = \{(x, y + \delta y) \mid \exists \; \delta y \in \mathbb{R}, (x, y) \in \mathcal{B}\}\), where \(\dagger = \mathrm{T}\) if \(\delta y \geq 0\) and \(\dagger = \mathrm{B}\) if \(\delta y \leq 0\).},
and then apply set operations on these regions to partition the whole distortion plane \(\mathcal{U} = \{(e_{1}, e_2) \mid e_{1}, e_2 \in [0, 1]\}\) into seven regions.
Below, the RDF and the optimality condition for each region are presented sequentially.
\begin{itemize}

   \item \(\mathcal{R}_{0}\) (SDC region):
The rate-distortion function with the parameter \(\nu\) in Eq.~\eqref{eq:nu def} is given by
\begin{align} \label{eq:sdc R0 rdf}
\Rx([N], \ev) = \frac{1}{2} \log \frac{\left(1-\rho_{2}\right)^{N-2}\nu}{e_{1}e_{2}^{N-1}}.
\end{align}
Optimality is achieved if and only if \((e_{1}, e_2)\) lies in
\begin{align} \label{eq:sdc inequality v1}
   \mathcal{R}_{0} = \left(\mathcal{R}_{0,5}^{(3), \mathrm{B}} \cup \mathcal{R}_{0,4}^{(4), \mathrm{B}}\right) \cap \mathcal{U}.
\end{align}

   \item \(\mathcal{R}_{1}\):
The rate-distortion function with the parameter \(\omega\) in Eq.~\eqref{eq:omega B3} is given by
\begin{align} \label{eq:region1 rdf}
   \Rx([N], \ev) = \frac{1}{2} \log \frac{\nu}{(N-1)\omega+(1-\rho_{2})e_{1}}.
\end{align}
Optimality is achieved if and only if \((e_{1}, e_2)\) lies in
\begin{align} \label{eq:region1 cond}
\mathcal{R}_{1} = \left(\mathcal{R}_{1,2}^{(1), \mathrm{T}} \cup \mathcal{R}_{1,4}^{(3), \mathrm{T}} \cup \mathcal{R}_{1,5}^{(4), \mathrm{T}}\right) \cap \mathcal{R}_{1,3}^{(2), \mathrm{B}} .
\end{align}

   \item \(\mathcal{R}_{2}\):
The rate-distortion function is given by
\begin{align} \label{eq:region2 rdf}
\Rx([N], \ev) = \frac{1}{2} \log \frac{1 + (N-2)\rho_{2}}{e_{2} + (N-2)\left(e_{2} + \rho_{2} - 1\right)}.
\end{align}
   Optimality is achieved if and only if \((e_{1}, e_2)\) lies in
\begin{align} \label{eq:region2 cond}
\mathcal{R}_{2} = \mathcal{R}_{1,2}^{(1), \mathrm{B}} \cap \mathcal{R}_{2,6}^{(3), \mathrm{T}}.
\end{align}

   \item \(\mathcal{R}_{3}\):
The rate-distortion function is given by
\begin{align} \label{eq:region3 rdf}
\Rx([N], \ev) = \frac{1}{2} \log \frac{1}{e_{1}}.
\end{align}
   Optimality is achieved if and only if \((e_{1}, e_2)\) lies in
\begin{align} \label{eq:region3 cond}
\mathcal{R}_{3} = \mathcal{R}_{1,3}^{(2), \mathrm{T}} \cap \mathcal{U}.
\end{align}

   \item \(\mathcal{R}_{4}\):
   The rate-distortion function is given by
\begin{align} \label{eq:region4 rdf}
\mathbb{R}_{\xv}([N], \ev) ={}& \frac{1}{2} \log \frac{ \nu }{ e_1 [e_{2}  + (N-2)\beta_\star] - (N-1)\alpha_\star^2 } 
+ \frac{N-2}{2} \log \frac{1 - \rho_{2}}{e_{2} - \beta_\star}.
\end{align}
Optimality is achieved if and only if \((e_{1}, e_2)\) lies in
\begin{align} \label{eq:region4 cond}
\mathcal{R}_{4} = \left(\mathcal{R}_{4,6}^{(1), \mathrm{T}} \cup \mathcal{R}_{0,4}^{(4), \mathrm{T}}\right) \cap \mathcal{R}_{1,4}^{(3), \mathrm{B}}.
\end{align}
The quantities \(\alpha_\star\) and \(\beta_{\star}\) are given by
\begin{align} 
      \alpha_{\star} &= \frac{t}{3\left(N-1\right)}-\frac{t^{2}-3p}{3\left(N-1\right)^{2}w}-\frac{w}{3},  \label{eq:alpha star analytical} \\
   \beta_\star &= \rho_{2}+\frac{1-e_{2}}{N-2}-\frac{\left(N-1\right)\left(\alpha_\star-\rho_{1}\right)^{2}}{\left(N-2\right)\left(1-e_{1}\right)}, \label{eq:beta star analytical}
\end{align}
respectively, where \(\alpha_\star\) is the unique root of the cubic equation
\begin{align} \label{eq:region4 cubic func}
   f(\alpha)=\left(N - 1\right)^{2}\alpha^{3} - \left(N - 1\right) t \alpha^{2} + p\alpha + \left(N-1\right)q
\end{align}
with the corresponding parameters
\begin{align} \label{eq:region4 cubic param}
   p&=\left(N-1\right)\left[\left(3N-4\right)e_{1}+1\right]\rho_{1}^{2}+\left(N-1\right)e_{2}\left(1-e_{1}\right) \notag \\
   & \quad -\left[\left(N-2\right)\rho_{2}+1\right]\left[\left(N-2\right)e_{1}+1\right]\left(1-e_{1}\right), \notag \\
   q&=\left\{\left[\left(N-2\right)\rho_{2}+1-e_{2}\right]\left(1-e_{1}\right)-\left(N-1\right)\rho_{1}^{2}\right\}e_{1}\rho_{1}, \notag \\
   t&=\left[\left(2N-3\right)e_{1}+N\right]\rho_{1}, \notag \\
   s&=\frac{27q}{2\left(N-1\right)}+\frac{9tp}{2\left(N-1\right)^{3}}-\frac{t^{3}}{\left(N-1\right)^{3}}, \notag \\
   w&=\Bigg[\Bigg(s^{2}-\frac{\left(t^{2}-3p\right)^{3}}{\left(N-1\right)^{6}}\Bigg)^{\frac{1}{2}}+s\Bigg]^{\frac{1}{3}}.
\end{align}

   \item \(\mathcal{R}_{5}\):
The rate-distortion function is given by
\begin{align} \label{eq:region5 rdf}
\Rx([N], \ev) = \frac{1}{2} \log \frac{\nu}{e_{1}\left[1-\rho_{2}+\left(N-1\right)\left(e_{2}+\rho_{2}-1\right)\right]}.
\end{align}
Optimality is achieved if and only if \((e_{1}, e_2)\) lies in
\begin{align} \label{eq:region5 cond}
\mathcal{R}_{5} = \mathcal{R}_{0,5}^{(3), \mathrm{T}} \cap \mathcal{R}_{1,5}^{(4), \mathrm{B}}.
\end{align}

   \item \(\mathcal{R}_{6}\):
The rate-distortion function is given by
\begin{align} \label{eq:region6 rdf}
\Rx([N], \ev) = \frac{1}{2} \log \frac{(1-\rho_{2})^{N-2} [1+(N-2)\rho_{2}]}{e_{2}^{N-1}}.
\end{align}
Optimality is achieved if and only if \((e_{1}, e_2)\) lies in
\begin{align} \label{eq:region6 cond}
\mathcal{R}_{6} = \left(\mathcal{R}_{4,6}^{(1), \mathrm{B}} \cup \mathcal{R}_{2,6}^{(3), \mathrm{B}}\right) \cap \mathcal{U}.
\end{align}
\end{itemize}
\end{theorem}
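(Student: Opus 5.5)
Using Theorem~\ref{thm:dsm model}, I would first reduce the Max-Det problem of Lemma~\ref{lemma:maxdet} to a convex program in the four scalars $(\delta_1,\delta_2,\alpha,\beta)$. With the block structure in Eq.~\eqref{eq:dsm model}, both $\Dsm$ and $\Rm-\Dsm$ are invariant under the group that permutes the peripheral components. Their spectra therefore split into a $(N-2)$-fold eigenvalue on the subspace orthogonal to $e_1$ and $\mathbf{1}$, namely $\delta_2-\beta$ and $1-\rho_2-\delta_2+\beta$, together with a $2\times 2$ block on $\mathrm{span}\{e_1,\mathbf{1}/\sqrt{N-1}\}$. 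This gives
\begin{align*}
\det\Dm=(\delta_2-\beta)^{N-2}\big(\delta_1[\delta_2+(N-2)\beta]-(N-1)\alpha^2\big).
\end{align*}
The problem becomes maximizing $\log\det\Dm$ subject to four multiplier-carrying constraints: $\delta_1\le e_1$, $\delta_2\le e_2$, $\delta_2-\beta\le 1-\rho_2$ (the $(N-2)$-fold eigenvalue of $\Rm-\Dm$), and PSD-ness of the $2\times 2$ block of $\Rm-\Dm$. I would write the last condition as the scalar determinant inequality
\begin{align*}
(1-\delta_1)[1+(N-2)\rho_2-\delta_2-(N-2)\beta]-(N-1)(\rho_1-\alpha)^2\ge 0,
\end{align*}
together with its diagonal conditions. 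I would label these as multipliers $1$ to $4$, matching the superscripts $(j)$ of the boundaries. Strong duality (Slater holds, since $\Dm=\epsilon\mathsf{I}$ is strictly feasible) makes the KKT conditions necessary and sufficient, as in Appendix~\ref{proof:thm:det(Rm-Dsm)}.

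Second, I would enumerate the active sets. Each region $\mathcal{R}_k$ corresponds to a pattern of which multipliers are positive:
\begin{itemize}
\item $\mathcal{R}_0$: constraints 1 and 2 active, so $\Dm=\Em$; this recovers Theorem~\ref{thm:RE_psd_conditions}.
\item $\mathcal{R}_3$: only constraint 1 and the rank-one condition (PSD block singular with $N-1$ directions) active; $\widehat{\xv}$ is a scalar function of $X_1$, giving $\frac12\log\frac1{e_1}$.
\item $\mathcal{R}_6$: only the peripheral constraint active, and $\delta_1$ is free.
\item $\mathcal{R}_1,\mathcal{R}_2,\mathcal{R}_5$: the two-constraint combinations.
\item $\mathcal{R}_4$: constraints 1, 2, 4 active with constraint 3 inactive.
\end{itemize}
In each case I solve the stationarity equations for $(\delta_1,\delta_2,\alpha,\beta)$. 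I then verify primal feasibility, and check that the multipliers are nonnegative on exactly the claimed region. The boundaries $\mathcal{B}^{(j)}_{m,n}$ are precisely where multiplier $j$ vanishes or where a slack of constraint $j$ becomes zero. For example, $\mathcal{B}^{(3)}$ is where the eigenvalue $1-\rho_2-e_2+\beta$ hits zero with $\beta=0$, giving $e_2=1-\rho_2$. Substituting the optimizers into $\frac12\log(\det\Rm/\det\Dm)$, using $\det\Rm=(1-\rho_2)^{N-2}\nu$ from Lemma~\ref{lemma:determinant}, should produce Eqs.~\eqref{eq:sdc R0 rdf}--\eqref{eq:region6 rdf}.

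Third, I would show the seven regions tile $\mathcal{U}$ and that adjacent formulas agree on shared boundaries. This follows because the KKT solution is unique by strict concavity of $\log\det$, so continuity of the optimizer across boundaries forces the pieces to glue. I would also locate the key points $P_1,\dots,P_4$ as intersections of the boundary curves. The existence of $\mathcal{B}^{(3)}_{0,5}$ and $\mathcal{B}^{(4)}_{1,5}$ only when $\rho_1^2<\rho_2$ comes from $x_1=[\rho_2-\rho_1^2]^+/\rho_2$ being positive.

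The main obstacle is $\mathcal{R}_4$. There the stationarity conditions, after eliminating $\beta$ via the active determinant constraint (which gives Eq.~\eqref{eq:beta star analytical}), reduce to the cubic $f(\alpha)=0$ in Eq.~\eqref{eq:region4 cubic func}. I would prove uniqueness of the admissible root by showing $f$ is monotone on the feasible interval for $\alpha$ between $0$ and $\rho_1$, or by using a sign change of $f$ at its endpoints combined with concavity of the reduced objective. I would then select the Cardano branch given in Eq.~\eqref{eq:alpha star analytical}. Checking the multiplier signs on $\mathcal{R}_4$ involves the implicit curve $\mathcal{B}^{(3)}_{1,4}$ with parameter $\omega$. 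For this I would parametrize by $e_1$ and argue monotonicity of multiplier $3$ along vertical lines, which justifies the definitions of the top and bottom regions.
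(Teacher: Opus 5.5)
\textbf{Verdict: genuine gap.} Your skeleton matches the paper's. You reduce via the block structure to $(\delta_1,\alpha,\delta_2,\beta)$, put multipliers on $\delta_1\le e_1$, $\delta_2\le e_2$, $\delta_2-\beta\le 1-\rho_2$ and your determinant inequality (the paper's $\Upsilon\ge 0$), and solve the KKT system one sign pattern at a time. The failure is in the active-set bookkeeping, which is the real content of the proof.

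\textbf{The active sets are wrong in four of the seven regions.} Your assignments for $\mathcal{R}_0$, $\mathcal{R}_3$ (your ``rank-one condition'' is constraints 3 and 4 together) and $\mathcal{R}_4$ are right. The others are not:
\begin{itemize}
\item In $\mathcal{R}_1$ all four constraints are tight: $\delta_1=e_1$, $\delta_2=e_2$, $\beta=e_2+\rho_2-1$, $\alpha=\rho_1-\sqrt{(1-e_1)(1-e_2)}$, which makes $\Upsilon=0$.
\item In $\mathcal{R}_2$ the active set is $\{2,3,4\}$, with $\delta_1$ slack.
\item In $\mathcal{R}_5$ the active set is $\{1,2,3\}$, with $\alpha=0$.
\item In $\mathcal{R}_6$ the active set is $\{2,4\}$, not $\{2\}$.
\end{itemize}
With your ``two-constraint'' assignments, the stationarity systems cannot return the stated formulas for $\mathcal{R}_1$, $\mathcal{R}_2$, $\mathcal{R}_5$. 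Your $\mathcal{R}_6$ pattern (only constraint 2 active, $\delta_1$ free) admits no KKT point at all. If constraint 4 carries no multiplier, stationarity in $\alpha$ gives $\partial\Xi/\partial\alpha=-2(N-1)\alpha=0$, so $\alpha=0$. Stationarity in $\delta_1$ then forces the first multiplier to equal $1/\delta_1>0$, so $\delta_1$ cannot be free.

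\textbf{Missing step: showing only seven sign patterns occur.} The computation above is an instance of it. In the paper's normalization $\lambda_i=\Xi\tilde{\lambda}_i$, the paper shows two things.
\begin{itemize}
\item $\lambda_4=0$ forces $\alpha=0$ and $\lambda_1,\lambda_2>0$; if also $\lambda_3=0$ then $\beta=0$, which is $\mathcal{R}_0$.
\item $\lambda_2=0$ forces $\alpha=\delta_1\rho_1$ and, by the first case, $\Upsilon=0$; hence $\lambda_1=\nu>0$ and $\lambda_3>0$.
\end{itemize}
So outside $\mathcal{R}_0$, neither $\lambda_2$ nor $\lambda_4$ vanishes together with another multiplier. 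That leaves exactly the seven patterns of Table~\ref{tab:region_rank_distortion}. Your substitute (uniqueness of the optimizer plus continuity) only shows that the regions do not overlap and that adjacent formulas agree. It does not show that every point of $\mathcal{U}$ is reached by one of your patterns.

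\textbf{$\mathcal{R}_4$ root selection.} Uniqueness of the admissible root is automatic from strict concavity. Your monotonicity route, however, is false in general. Take $N=8$, $\rho_1=0.45$, $\rho_2=0.3$ and $(e_1,e_2)=(0.5,0.5)\in\mathcal{R}_4$. The critical points of $f$ are about $0.18$ and $0.44$, both inside $(0,\rho_1)$, and the first lies even inside the admissible interval.

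What the closed form actually needs is to know which real root is admissible. The paper argues as follows:
\begin{itemize}
\item It traps $\alpha_\star$ below $e_1\rho_1$, the threshold for $\lambda_2>0$.
\item It uses $f(e_1\rho_1)=e_1\rho_1(N-2)(1-e_1)^2\nu>0$.
\item It uses that the local minimum of $f$ lies beyond $t/(3(N-1))>e_1\rho_1$.
\end{itemize}
Together these show $\alpha_\star$ is the smallest real root. Requiring $f$ to be negative at the lower thresholds from $\lambda_1>0$, $\lambda_4>0$ and slackness of constraint 3 then gives the boundaries of $\mathcal{R}_4$ directly. No monotonicity of $\lambda_3$ along vertical lines is needed.
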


\begin{IEEEproof}[Proof Sketch]
   Under the 2-TC covariance matrix and 2-TD constraints, we reformulate the Max-Det problem in Lemma~\ref{lemma:maxdet}. By applying Lemma~\ref{lemma:determinant}, we have
\begin{align} \label{eq:dsm det}
   \det(\Dm) = (\delta_2 - \beta)^{N-2} \Xi
\end{align}
for \(\Dm\) in Eq.~\eqref{eq:dsm model}, where \(\Xi = \delta_1\left[\delta_2 + (N-2)\beta\right] - (N-1)\alpha^2\).
Similarly, we have
\begin{align} \label{eq:Rdsm det}
   \det(\Rm-\Dm) = \left[(1 - \delta_2) - (\rho_{2} - \beta)\right]^{N-2} \Upsilon,
\end{align}
where \(\Upsilon = (1-\delta_{1})\left[(1-\delta_2)+(N-2)(\rho_{2}-\beta)\right] -(N-1)(\rho_{1}-\alpha)^{2}\).
The optimal distortion matrix \(\Dsm\) in Eq.~\eqref{eq:dsm model} is determined by solving the following problem, i.e.,
\begin{problem}  \label{problem:cvx2}
   \begin{align}
      \max_{\delta_1, \alpha, \delta_2, \beta} \quad&  \log \left[(\delta_2-\beta)^{N-2} \Xi\right], \label{eq:opt obj} \\
\text{s.t.} \;\;\; \quad & \delta_1 \leq e_1, \label{eq:opt c1} \\
   & \delta_2 \leq e_{2}, \label{eq:opt c2} \\
   & \delta_2 - \beta \leq 1-\rho_{2}, \label{eq:opt c3} \\
   & 0 \leq \Upsilon. \label{eq:opt c4}
   \end{align}
\end{problem}
For the constraints in Eqs.~\eqref{eq:opt c1}-\eqref{eq:opt c4}, we introduce the multipliers \(\lambda_1,\lambda_2,\lambda_3,\lambda_4\) respectively to form the Lagrangian.
We first derive explicit expressions for all multipliers through the stationarity condition, then analyze and exclude mutually exclusive sign combinations. The remaining sign combinations of the multipliers define distinct rate-distortion regions.
In each region, we first identify the sign combinations of the multipliers, including those that are zero and positive. We then set up polynomial equations for the zero multipliers and equality constraints for the positive ones, based on complementary slackness. Solving these equations yields the distortion parameters \(\delta_1\), \(\alpha\), \(\delta_2\), and \(\beta\) in Eq.~\eqref{eq:dsm model}.
Furthermore, using Lemma~\ref{lemma:determinant}, we derive the exact closed-form RDF, explicitly incorporating distortion constraints and source correlations. Finally, we verify primal feasibility by substituting the distortion parameters into all constraints, and dual feasibility by checking all multipliers.
Together, these establish the optimality conditions for each case, i.e., the boundaries of the corresponding regions.

Among all regions, \(\mathcal{R}_4\) is notable because the cubic equation \(f(\alpha) = 0\) in Eq.~\eqref{eq:region4 cubic func} may have multiple solutions, causing ambiguity in determining \(\alpha_\star\). To resolve this, we analyze all possible bounds on \(\alpha_\star\) based on all primal and dual feasibility conditions. While \(\alpha_\star\) can be obtained from Eq.~\eqref{eq:alpha star analytical}, its complicated form and tedious substitutions make it impractical for deriving boundary conditions directly. Instead, we derive the conditions that all bounds for \(\alpha_{\star}\) should satisfy and thereby establish the complete boundaries of \(\mathcal{R}_4\).
See Appendix~\ref{proof:thm:seven regions main} for the complete proof.
\stepcounter{equation}
\end{IEEEproof}

In Theorem~\ref{thm:seven regions main}, we first present all the boundaries arranged in the order of Lagrange multipliers (superscripts). The range of each boundary is determined by the coordinates of key points. These boundaries partition the two-dimensional distortion plane \(\mathcal{U}\) into seven regions. For each region, we sequentially present the complete and exact closed-form RDF. 
In Fig.~\ref{fig:all_regions_2D}, we plot all seven regions on the plane \(\mathcal{U}\) for \(N=8\), \(\rho_1=0.45\), and \(\rho_{2}=0.3\), with key points and boundaries clearly marked.
For all regions, Table~\ref{tab:region_rank_distortion} summarizes the signs of the Lagrange multipliers, \(\rank(\ev - \dsv)\), \(n_{+}(\Kxs)\), and the optimal distortion allocations.

\begin{figure}[t]
   \setlength{\abovecaptionskip}{0pt} 
   \centering
   \includegraphics[width=0.45\textwidth]{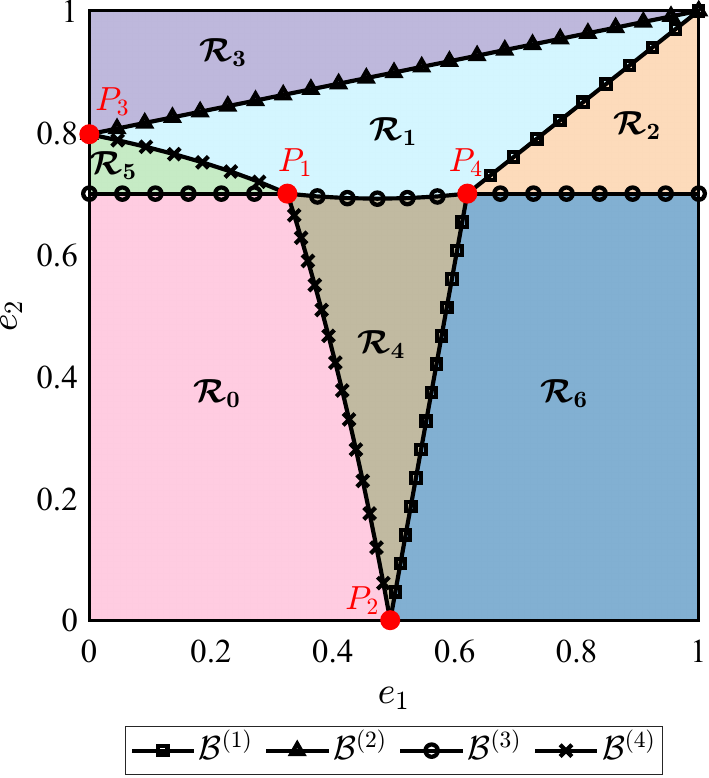}
   \caption{Seven regions on the distortion plane for \(N=8\), \(\rho_1=0.45\), and \(\rho_{2}=0.3\). The superscript on the boundaries corresponds to different Lagrange multipliers, while the subscript is omitted since the boundary-region relationships are clear.}
   \label{fig:all_regions_2D}
\end{figure}

\begin{table*}[!t]
   \footnotesize
   \caption{Properties of the Seven Regions and Optimal Distortion Allocations}
   \label{tab:region_rank_distortion}
   \renewcommand{\arraystretch}{1.3}
   \begin{threeparttable}
   \resizebox{\textwidth}{!}{
   \begin{tabular}{c|c|c|c|c|c|c|c|c|c|c}
      \hline
      Region & \(\lambda_1\)\tnote{1} & \(\lambda_2\) & \(\lambda_3\) & \(\lambda_4\) & \(\rank(\ev - \dsv)\)\tnote{4} & \(n_{+}(\Kxs)\)\tnote{5} & \(\delta_1\) & \(\alpha\) & \(\delta_2\) & \(\beta\) \\
      \hline
      \hline
      \(\mathcal{R}_{0}\)  & \(+\) & \(+\) & \(0\) & \(0\) & \(0\) & \(N\) & \(e_1\) & \(0\) & \(e_{2}\) & \(0\) \\
      \hline
      \(\mathcal{R}_1\)  & \(+\) & \(+\) & \(+\) & \(+\) & \(0\) & \(1\) & \(e_1\) & \(\rho_1 - \sqrt{(1 - e_{2})(1 - e_1)}\) & \(e_{2}\) & \(e_{2} + \rho_{2} - 1\) \\
      \hline
      \(\mathcal{R}_2\)\tnote{2} & \(0\) & \(+\) & \(+\) & \(+\) & \(1\) & \(1\) & \(1-\left(1-e_{2}\right)\eta^{2}\) & \(\rho_{1}-(1-e_{2})\eta\) & \(e_{2}\) & \(e_{2} + \rho_{2} - 1\) \\
      \hline
      \(\mathcal{R}_3\) & \(+\) & \(0\) & \(+\) & \(+\) & \(N-1\) & \(1\) & \(e_1\) & \(e_1\rho_1\) & \(1 - (1-e_1)\rho_1^{2}\) & \(\rho_{2} - (1-e_1)\rho_1^{2}\) \\
      \hline
      \(\mathcal{R}_4\)\tnote{3} & \(+\) & \(+\) & \(0\) & \(+\) & \(0\) & \(N-1\) & \(e_1\) & \(\alpha_{\star} > 0\) & \(e_{2}\) & \(\beta_{\star} < 0\) \\
      \hline
      \(\mathcal{R}_5\) & \(+\) & \(+\) & \(+\) & \(0\) & \(0\) & \(2\) & \(e_1\) & \(0\) & \(e_{2}\) & \(e_{2} + \rho_{2} - 1\) \\
      \hline
      \(\mathcal{R}_6\)\tnote{2} & \(0\) & \(+\) & \(0\) & \(+\) & \(1\) & \(N-1\) & \(1-\rho_{1}\eta+\frac{e_{2}\eta^{2}}{(N-1)}\) & \(\frac{e_{2}\eta}{\left(N-1\right)}\) & \(e_{2}\) & \(0\) \\
      \hline
   \end{tabular}
   }
      \begin{tablenotes}
      \footnotesize
      \item[1] {The ``\(+\)'' and ``\(0\)'' indicate that the Lagrange multiplier is positive and zero, respectively, in that distortion region.}
      \item[2] {The quantity \(\eta\) is given in Eq.~\eqref{eq:eta def}.}
      \item[3] {The quantities \(\alpha_{\star}\) and \(\beta_{\star}\) are given in Eqs.~\eqref{eq:alpha star analytical} and~\eqref{eq:beta star analytical}, respectively. Their bounds are detailed in Proposition~\ref{prop:Dsm bound}.}
      \item[4] {\(\rank(\ev - \dsv)\) denotes the number of non-zero entries in the vector \(\ev - \dsv\).}
      \item[5] {\(n_{+}(\Kxs)\) denotes the number of positive eigenvalues of the matrix \(\Kxs\), including multiplicities.}
      \end{tablenotes}
   \end{threeparttable}
\end{table*}

\begin{remark}
The four multipliers in Table~\ref{tab:region_rank_distortion} and the boundaries in Fig.~\ref{fig:all_regions_2D} are closely related.
For example, consider the transition between \(\mathcal{R}_0\) and \(\mathcal{R}_1\): a path that crosses the fewest boundaries from an arbitrary point in \(\mathcal{R}_0\) to an arbitrary point in \(\mathcal{R}_1\) involves either crossing the pair \(\mathcal{B}_{0,5}^{(3)}\) and \(\mathcal{B}_{1,5}^{(4)}\) or the pair \(\mathcal{B}_{0,4}^{(4)}\) and \(\mathcal{B}_{1,4}^{(3)}\). In both cases, the path crosses boundaries associated with \(\lambda_3\) and \(\lambda_4\), which exactly correspond to the sign differences between \(\mathcal{R}_0\) and \(\mathcal{R}_1\). This correspondence holds for all region transitions.
\end{remark}

Unlike in the case of sum distortion criterion, it is highly challenging to derive a unified expression for the optimal distortion allocation or RDF under individual distortion constraints. This is because both the correlations among components and the imposed distortion constraints significantly influence the optimal distortion allocation.
Nevertheless, our analytical framework advances the resolution of the open problem of determining the optimal distortion allocation and RDF, and the comprehensive results provide some fundamental new theoretical insights.

Compared to the distortion allocations under sum distortion criterion, more distortions need to be allocated under individual distortion criteria. These include not only the diagonal entries of the distortion matrix but also the off-diagonal entries, the latter of which are not considered under sum distortion criterion. These off-diagonal entries represent the correlations among the different distortions and can be viewed as the co-distortions between different components.
Under sum distortion criterion, the RDF is achieved by equally distributing distortions below the variance level for the parallel sources. However, under individual distortion criteria, equal allocation is inherently infeasible. The key to obtaining the RDF lies in carefully analyzing the correlations among the optimal distortions. This aspect is not only excluded under sum distortion criterion but also overlooked by the Hadamard bound, which assumes uncorrelated distortions across components.
Under individual distortion criteria, while certain pairs of regions exhibit closely related optimal distortion allocations, many other pairs display more intricate and less straightforward relationships.
For instance, the optimal distortion allocations in regions \(\mathcal{R}_1\) and \(\mathcal{R}_2\) differ only in the distortion \(\delta_1\) assigned to the central component \(X_{1}\). In \(\mathcal{R}_1\), the constraint is active and \(\delta_1 = e_1\). In \(\mathcal{R}_2\), the constraint on \(\delta_1\) is too mild, leaving a distortion margin such that \(\delta_1 = 1 - (1 - e_{2}) \eta^2\) with \(\eta\) in Eq.~\eqref{eq:eta def}. All other distortion allocations remain unchanged except for replacing \(\delta_1\) accordingly.
However, this simple principle does not hold in general. This is because a difference in the sign of a single multiplier can trigger constraints on multiple components. Specifically, although only the sign of the multiplier \(\lambda_3\) changes between \(\mathcal{R}_1\) and \(\mathcal{R}_4\), both \(\alpha\) and \(\beta\) in the distortion matrix are altered simultaneously.
\begin{remark}
An interesting observation is that, although all source component-wise correlations are non-negative, the correlations between the reconstruction distortions (i.e., the differences between the source and its reconstruction) are not necessarily so.
For example, in \(\mathcal{R}_4\), the optimal off-diagonal entry \(\beta_{\star} = \mathbb{E}[(X_i-\hat{X}_i^{\star})(X_j-\hat{X}_j^{\star})] < 0\) for \(2 \leq i, j \leq N\) with \(i \neq j\), indicates that the reconstruction distortions of the peripheral components are negatively correlated.
Statistically, this implies that the deviations of the reconstructions from the true source values tend to have opposite signs across different components.
For instance, an overestimate in one component is likely to be accompanied by an underestimate in another.
\end{remark}

In Fig.~\ref{fig:all_regions_rate}, we compare the closed-form RDF in Theorem~\ref{thm:seven regions main} with numerical results obtained using the interior-point method, as well as with the Hadamard compression rate in Eq.~\eqref{eq:hadamard bound value}.
The number of variables in the optimization problem grows as \(O(N^2)\), while the number of non-linear constraints grows as \(O(N)\). As a result, for longer source lengths, the numerical method becomes computationally impractical.
Regarding the Hadamard lower rate, there is a significant gap compared to our exact analytical RDF over a wide range, and the Hadamard rate even yields negative values.
In Fig.~\ref{fig:all_regions_3D_R_2D_e1}, the rate appears relatively flat because the distortion constraint pair \((e_{1}, e_2)\) lies in \(\mathcal{R}_2\) or \(\mathcal{R}_6\), where our analytical RDF in Eq.~\eqref{eq:region2 rdf} or Eq.~\eqref{eq:region6 rdf} shows that the rate is independent of \(e_1\). 
In Fig.~\ref{fig:all_regions_3D_R_2D_e0}, for larger \(e_{2}\), the analytical curves for \(\rho_{2}=0.4\) and \(\rho_{2}=0.8\) coincide because the region is \(\mathcal{R}_3\). The RDF in Eq.~\eqref{eq:region3 rdf} depends solely on \(e_1\). Only the central component is reconstructed, so the peripheral distortion constraints do not affect the optimal compression rate.

\begin{figure*}[t]
   \centering
   \captionsetup[subfloat]{font=scriptsize}
   \subfloat[\(\rho_1=0.6\) and \(\rho_{2}=0.3\) are fixed.]
   {
      \hspace{-0.0em}
      \includegraphics[width=0.46\textwidth]{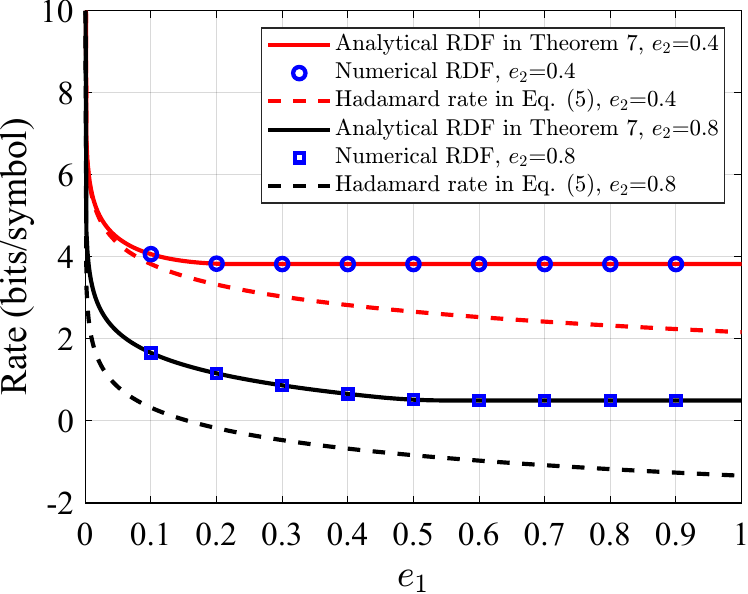}
      \hspace{0.3em} \label{fig:all_regions_3D_R_2D_e1}
   }
   \subfloat[\(\rho_1=0.6\) and \(e_1=0.4\) are fixed.]
   {
      \hspace{0.7em}
      \includegraphics[width=0.46\textwidth]{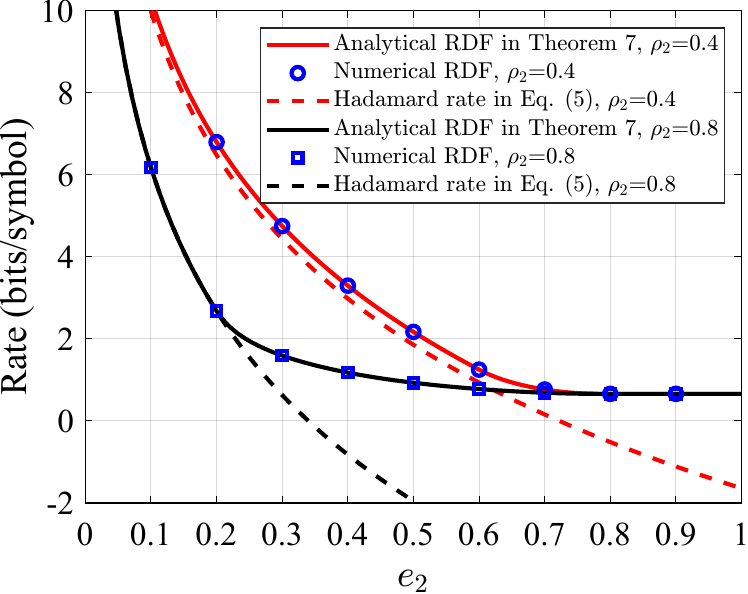}
      \hspace{1.3em} \label{fig:all_regions_3D_R_2D_e0}
   }
   \caption{Comparison of the analytical result, numerical result, and the Hadamard bound result for \(N=8\).}
   \label{fig:all_regions_rate}
\end{figure*}

Subsequently, we provide further clarification on the boundary \(\mathcal{B}_{1,4}^{(3)}\) in Theorem~\ref{thm:seven regions main}. Since \(\mathcal{B}_{1,4}^{(3)}\) is a cubic curve on the distortion plane \(\mathcal{U}\), its geometric characterization is not transparent, rendering the optimality conditions for distortion allocation analytically intractable.
To analyze it explicitly, we defined the quantity \(e_{2}^\star\) in Eq.~\eqref{eq:e0star def}, and the results are as follows.

\begin{proposition} \label{prop:region one B3 e0star}
   Recall the condition for \(\Rm \succ \zerov\) in Eq.~\eqref{eq:2tc pd cond}, and \(e_{2}^\star\) in Eq.~\eqref{eq:e0star def} is given by
\begin{numcases}{e_{2}^\star =}
   \scalebox{1.1}{$1 - \rho_{2} - \frac{\mu - \rho_{2} \sqrt{(N-1)\nu}}{2\sqrt{(N-1)\nu}}$}, \! & if \(\rho_1^{2} \leq \rho_{2} + \frac{1 - \rho_{2}}{2(N - 1)}\), \label{eq:e0star analytical}
   \\
   1 - \rho_1^2, \! & \(\text{otherwise}\),  \label{eq:e0star analytical otherwise}
\end{numcases}
with \(\mu = \sqrt{\left[(N-2)\rho_{2} + 1\right]\left[\rho_{2} \nu - \left(\rho_{2} - \rho_1^2\right)(1-\rho_{2})\right]}\) and \(\nu\) in Eq.~\eqref{eq:nu def}. In the asymptotic regime, we have
\begin{numcases}{1-e_{2}^\star-\rho_{2} =}
   \scalebox{1.1}{$\Theta\!\left(\frac{1}{N^2}\right)$}, \! & if \(\rho_1^{2} \leq \rho_{2} + \frac{1 - \rho_{2}}{2(N - 1)}\), \label{eq:e0star asymptotic}
   \\
   \scalebox{1.1}{$\Theta\!\left(\frac{1}{N}\right)$}, \! & \(\text{otherwise}\).  \label{eq:e0star asymptotic otherwise}
\end{numcases}
\end{proposition}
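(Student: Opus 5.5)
The plan is to turn the minimization in Eq.~\eqref{eq:e0star def} into a one-variable extremal problem along the cubic curve $\mathcal{B}_{1,4}^{(3)}$, using a substitution that makes the curve quadratic in one variable.

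\textbf{Substitution.} Set $u=\sqrt{1-e_1}\in[0,1]$ and $v=\sqrt{1-e_2}\in[0,1]$. Minimizing $e_2$ is then the same as maximizing $v$. Expanding $\omega$ in Eq.~\eqref{eq:omega B3} gives
\[
\omega=(1-u^2)(\rho_2-v^2)-(\rho_1-uv)^2 .
\]
The $u^2v^2$ terms cancel, so $\omega=\rho_2-v^2-\rho_1^2-\rho_2u^2+2\rho_1uv$.

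Substituting into the defining equation of $\mathcal{B}_{1,4}^{(3)}$ and multiplying by $(1-\rho_2)$ gives a polynomial equation that is \emph{quadratic in $u$} for fixed $v$:
\[
-\bigl[1+(N-2)\rho_2\bigr]u^2+\rho_1\Bigl(2(N-1)v+\tfrac{1-\rho_2}{v}\Bigr)u+(N-1)\bigl(\rho_2-\rho_1^2-v^2\bigr)=0 .
\]
Call this $A u^2+Bu+C=0$, with $A<0$.

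\textbf{Interior candidate (tangency).} A point of the curve with coordinate $v$ exists iff the discriminant is non-negative, $B^2+4AC\ge 0$ with $A<0$, and the resulting root satisfies $u\in[0,1]$. The maximal $v$ is therefore attained either where the discriminant vanishes (tangency) or where the root hits the edge $u=1$ (i.e., $e_1=0$) of $\mathcal{U}$.

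Multiplying the discriminant by $v^2$ gives a polynomial in $v^2$. I expect it to factor, or at least to reduce to a quadratic in $v^2$, after collecting terms with $\nu=(N-2)\rho_2+1-(N-1)\rho_1^2$. Solving it and choosing the admissible root should produce the expression involving $\mu$ and $\sqrt{(N-1)\nu}$ in Eq.~\eqref{eq:e0star analytical}. At tangency the double root is $u^\star=-B/(2A)$.

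\textbf{Edge case and threshold.} Requiring $u^\star\le 1$ should be algebraically equivalent to $\rho_1^2\le\rho_2+\frac{1-\rho_2}{2(N-1)}$, which gives the case split.

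Otherwise the maximum of $v$ sits on the edge $u=1$. There the equation reduces to
\[
-1-\frac{(N-1)(\rho_1-v)^2}{1-\rho_2}+\frac{\rho_1}{v}=0 ,
\]
whose admissible root is $v=\rho_1$. This gives $e_2^\star=1-\rho_1^2$, the ordinate of $P_3$, as in Eq.~\eqref{eq:e0star analytical otherwise}. I would also need to check that $e_2^\star$ lies in the stated range $[e_2^\star,y_4]$ of the boundary and is consistent with $P_1$ and $P_4$. This amounts to checking that the tangency point has $e_1\in[x_1,x_4]$, or more simply that the curve segment is connected there.

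\textbf{Asymptotics.} In the first case, write $\rho_1^2=\rho_2+\kappa/(N-1)$ with $\kappa$ bounded; the threshold allows this. Expanding $\sqrt{(N-1)\nu}$ and $\mu$ in powers of $1/N$, the leading terms of $\mu/(2\sqrt{(N-1)\nu})$ and $\rho_2/2$ should cancel, leaving a $\Theta(1/N^2)$ residual for $1-e_2^\star-\rho_2$. In the second case, $1-e_2^\star-\rho_2=\rho_1^2-\rho_2$, and this is $\Theta(1/N)$ because $\rho_1^2$ is squeezed between $\rho_2+\frac{1-\rho_2}{2(N-1)}$ and the PD bound $\rho_2+\frac{1-\rho_2}{N-1}$ from Eq.~\eqref{eq:2tc pd cond}.

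\textbf{Main obstacle.} The hard step is the discriminant algebra: getting it into the clean form with $\mu$ and $\nu$, and confirming which root is admissible. I would try first to verify the claimed $e_2^\star$ by direct substitution into $B^2+4AC=0$, rather than deriving it from scratch.
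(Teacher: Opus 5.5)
Your closed-form argument follows the paper's route. The paper also substitutes $e_1=1-t^2$ (your $u$) and treats $\mathcal{B}_{1,4}^{(3)}$ as a quadratic in $t$. It sets the discriminant to zero to get $\tilde e_2^\star$, maps the double root back to $\tilde e_1^\star$, identifies $\tilde e_1^\star\ge 0$ with the threshold, and otherwise takes the corner $(0,1-\rho_1^2)$. Two points need fixing: a sign slip, and a substantive error in the asymptotics.

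\textbf{Discriminant sign.} The discriminant of $Au^2+Bu+C$ is $B^2-4AC$, not $B^2+4AC$. With your sign, the verification you propose would fail. With the correct sign and $s=v^2=1-e_2$, the algebra you flagged as the main obstacle is short:
\[
v^2\bigl(B^2-4AC\bigr)=-4(N-1)\nu\,s^2+4(N-1)\rho_2\nu\,s+\rho_1^2(1-\rho_2)^2 .
\]
This is a concave quadratic in $s$ with positive constant term, so real roots in $u$ exist exactly for $s\le s_+$. The identity $\mu^2=(N-1)\rho_2^2\nu+\rho_1^2(1-\rho_2)^2$ then gives $s_+=\frac{\rho_2}{2}+\frac{\mu}{2\sqrt{(N-1)\nu}}$, which is the stated $e_2^\star$. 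At the threshold one finds $s_+=\rho_1^2$ and $-B/(2A)=1$, so the two cases meet continuously at $(0,1-\rho_1^2)$.

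\textbf{Asymptotics.} The first-case step fails as planned, for two reasons.
\begin{itemize}
\item The threshold only bounds $\kappa=(N-1)(\rho_1^2-\rho_2)$ from above. For fixed $\rho_1^2<\rho_2$ (the generic case), $\kappa\approx-(N-1)(\rho_2-\rho_1^2)$ is unbounded, so your parametrization does not cover it.
\item In the regime you do parametrize, the claimed rate is false. Rationalizing gives
\[
1-e_2^\star-\rho_2=\frac{\rho_1^2(1-\rho_2)^2}{2\sqrt{(N-1)\nu}\,\bigl(\mu+\rho_2\sqrt{(N-1)\nu}\bigr)} .
\]
With $\kappa$ bounded, $\nu=1-\rho_2-\kappa=\Theta(1)$ and $\mu=\Theta(\sqrt N)$. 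The denominator is then $\Theta(N)$ and the gap is $\Theta(1/N)$. For example, $\rho_1^2=\rho_2$ fixed gives $1-e_2^\star-\rho_2\approx\frac{1-\rho_2}{4(N-1)}$. The leading $\rho_2/2$ terms you expect to cancel do cancel, but the residual is first order in $1/N$, not second.
\end{itemize}

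The paper's argument is this rationalized identity combined with $\nu=(N-1)(\rho_2-\rho_1^2)+1-\rho_2=\Theta(N)$ and hence $\mu=\Theta(N)$, which makes the denominator $\Theta(N^2)$. That tacitly requires $\rho_2-\rho_1^2$ to stay bounded away from zero. You should state that regime explicitly rather than expand around $\rho_1^2=\rho_2$.

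Your $\Theta(1/N)$ argument for the second case is the same as the paper's.
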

\begin{IEEEproof}[Proof Sketch]
   We fix \( e_{2} \) in the bivariate set \( \mathcal{B}_{1,4}^{(3)} \) and solve the quadratic discriminant with respect to \( e_1 \) to derive the necessary condition for \( e_{2}^\star \). A detailed analysis then leads to the precise form of \( e_{2}^\star \) under different conditions, and subsequent algebraic manipulations yield the asymptotics. See Appendix~\ref{proof:prop:region one B3 e0star} for the complete proof.
\end{IEEEproof}
Refer to the boundaries in Theorem~\ref{thm:seven regions main} and their graphical representation in Fig.~\ref{fig:all_regions_2D}. Since \( \mathcal{B}_{0,5}^{(3)} \) and \( \mathcal{B}_{2,6}^{(3)} \) are part of the straight line \( e_{2} = 1 - \rho_{2} \), Proposition~\ref{prop:region one B3 e0star} essentially determines the gap between \( \mathcal{B}_{1,4}^{(3)} \) and the line.
Although \(\mathcal{B}_{1,4}^{(3)}\) in Theorem~\ref{thm:seven regions main} is rather cumbersome, we find that \(\mathcal{B}_{1,4}^{(3)}\) eventually converges to a straight line.  
A comparison between Eq.~\eqref{eq:e0star asymptotic} and Eq.~\eqref{eq:e0star asymptotic otherwise} reveals that a smaller \( \rho_1 \) leads to a faster convergence rate.

Next, we further refine our results in \(\mathcal{R}_4\). 
Although Theorem~\ref{thm:seven regions main} already provides the exact RDF in Eq.~\eqref{eq:region4 rdf} and the optimal distortion allocations \(\alpha_\star\) and \(\beta_\star\) in Eqs.~\eqref{eq:alpha star analytical} and \eqref{eq:beta star analytical}, their dependence on many parameters in Eq.~\eqref{eq:region4 cubic param} complicates both computation and interpretation. Therefore, in Proposition~\ref{prop:Dsm bound}, we provide relatively concise bounds for \(\alpha_\star\) and \(\beta_\star\), and also establish an achievable upper bound for the RDF. Subsequently, in Theorem~\ref{thm:iso tighter}, we demonstrate that the bounds are also highly accurate.
\begin{proposition} \label{prop:Dsm bound}
The bounds for \(\alpha_{\star}\) in Eq.~\eqref{eq:alpha star analytical} and \(\beta_{\star}\) in Eq.~\eqref{eq:beta star analytical} are given by
\begin{align}
   0 &< \alpha_{\star} < \alpha^{u}_{\star},  \label{eq:beta star bounds}\\
   \beta_{\star} &< 0,  \label{eq:alpha star bounds}
\end{align}
where
\begin{align} \label{eq:beta star upper}
   \alpha^{u}_{\star} = \rho_{1} - \sqrt{\frac{\left(1-e_{1}\right)\left[(N-2)\rho_{2}+1-e_{2}\right]}{N-1}}.
\end{align}
The upper bound of the optimal rate-distortion function is given by
\begingroup
\fontsize{12pt}{12pt}\selectfont
\begin{align} \label{eq:region4 rdf upper}
   &R^{u}_{\xv}([N], \ev) = \frac{N-2}{2} \log \frac{1 - \rho_{2}}{e_{2}} + \frac{1}{2} \log \frac{ \nu }{ e_1 e_{2} - (N-1)(\alpha^{u}_{\star})^2 },
   \end{align}
\endgroup
when
\begin{align} \label{eq:D tilde sdc}
   \alpha^{u}_{\star}<\sqrt{\frac{e_{1}e_{2}}{N-1}}.
\end{align}
\end{proposition}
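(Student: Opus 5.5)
We work inside region \(\mathcal{R}_4\), where Table~\ref{tab:region_rank_distortion} gives \(\lambda_1,\lambda_2,\lambda_4>0\) and \(\lambda_3=0\). Hence \(\delta_1=e_1\), \(\delta_2=e_2\), \(\Upsilon=0\), and \(\delta_2-\beta<1-\rho_2\). Substituting \(\Upsilon=0\) expresses \(\beta\) as a function of \(\alpha\), which is exactly Eq.~\eqref{eq:beta star analytical}. The objective in Eq.~\eqref{eq:opt obj} then becomes a univariate function
\begin{align*}
\phi(\alpha)=(N-2)\log\bigl(e_2-\beta(\alpha)\bigr)+\log\Xi(\alpha),
\end{align*}
whose stationary points are the roots of the cubic \(f\) in Eq.~\eqref{eq:region4 cubic func}. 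The plan is to locate \(\alpha_\star\) by a sign analysis of \(f\), or equivalently of \(\phi'\), at the candidate endpoints \(0\) and \(\alpha^u_\star\), and then to read off the sign of \(\beta_\star\).

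\textbf{Bounds on \(\alpha_\star\).} First note that \(\alpha^u_\star\) is the value at which \(\beta(\alpha)=0\): setting \(\beta=0\) in Eq.~\eqref{eq:beta star analytical} gives \((N-1)(\alpha-\rho_1)^2=(1-e_1)[(N-2)\rho_2+1-e_2]\). Taking the root with \(\alpha<\rho_1\) yields Eq.~\eqref{eq:beta star upper}. Since \(\beta(\alpha)\) is increasing for \(\alpha<\rho_1\), proving \(\alpha_\star<\alpha^u_\star\) is equivalent to proving \(\beta_\star<0\), so the two bounds come together. For the lower bound, evaluate \(f(0)=(N-1)q\). We have \(q>0\) because the bracket in \(q\) equals \(\Upsilon\) at \(\alpha=\beta=0\), which is positive in \(\mathcal{R}_4\): the region lies above \(\mathcal{B}_{0,4}^{(4)}\), i.e.\ outside the SDC with respect to \(\lambda_4\), so the bracket has the appropriate sign. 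Equivalently, \(\phi'(0)>0\), since \(\Xi\) is even in \(\alpha\) while \(-\beta(\alpha)\) decreases as \(\alpha\) increases from \(0\). Concavity of the log-det objective along the feasible curve \(\Upsilon=0\) then forces the maximizer to satisfy \(\alpha_\star>0\). For the upper bound, evaluate \(\phi'(\alpha^u_\star)\), equivalently the sign of \(f(\alpha^u_\star)\). At \(\beta=0\), the derivative of \(\log\Xi\) is \(-2(N-1)\alpha/\Xi\), and it must dominate the \((N-2)\)-term. I expect this step to need the condition that we lie below \(\mathcal{B}_{1,4}^{(3)}\) and above \(\mathcal{B}_{4,6}^{(1)}\), i.e.\ that \(\lambda_1,\lambda_2>0\). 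The signs of these multipliers, written through stationarity, should translate into \(\phi'(\alpha^u_\star)<0\). This step is the main obstacle, because the multiplier expressions are messy. I would first try to show \(\lambda_2>0\Leftrightarrow\phi'(\alpha^u_\star)<0\) directly by substituting \(\beta=0\).

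\textbf{Upper bound on the RDF.} The point \((\delta_1,\alpha,\delta_2,\beta)=(e_1,\alpha^u_\star,e_2,0)\) is primal feasible for Problem~\eqref{problem:cvx2}: \(\Upsilon=0\) by construction, and \(\delta_2-\beta=e_2<1-\rho_2\) in \(\mathcal{R}_4\). For this point to be a valid distortion matrix we also need \(\Dm\succ\zerov\), i.e.\ \(\Xi=e_1e_2-(N-1)(\alpha^u_\star)^2>0\), which is exactly condition Eq.~\eqref{eq:D tilde sdc}. By Lemma~\ref{lemma:maxdet}, every feasible point gives an achievable rate. Evaluating \(\frac12\log\det(\Rm)/\det(\Dm)\) using Eq.~\eqref{eq:dsm det}, with \(\det\Rm=(1-\rho_2)^{N-2}\nu\) from Lemma~\ref{lemma:determinant}, gives Eq.~\eqref{eq:region4 rdf upper}. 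This rate upper-bounds Eq.~\eqref{eq:region4 rdf} by the optimality of \(\alpha_\star\).
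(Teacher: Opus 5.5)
Your reduction to the one-variable function $\phi$ is essentially the paper's argument. With $D(\alpha)=\alpha[(N-2)e_1+1]-(N-1)e_1\rho_1$ one has the polynomial identity $f(\alpha)=(N-2)(1-e_1)D(\alpha)\bigl(g_1(\alpha)-g_2(\alpha)\bigr)$. Along $\Upsilon=0$ with $\delta_1=e_1$, $\delta_2=e_2$, a direct computation gives $\phi'(\alpha)=-\frac{2(N-1)}{(N-2)(1-e_1)^2\,\Xi\,(e_2-\beta)}\,f(\alpha)$. Your endpoint test on $\phi'$ is therefore the paper's comparison of $g_1$ and $g_2$ at $0$ and at $\alpha_{g2,1}=\alpha^u_\star$.

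The decisive step, $\phi'(\alpha^u_\star)<0$, is left open, and the route you suggest for it fails. With $\delta_1=e_1$ the stationarity formula gives $\lambda_2=(N-1)(e_1\rho_1-\alpha)/(\rho_1-\alpha)$. This only encodes $\alpha<e_1\rho_1$ (the paper's $\alpha^{(2)}$), and it stays positive on $\mathcal{B}^{(1)}_{4,6}$, which is exactly where $\phi'(\alpha^u_\star)$ changes sign. $\mathcal{B}^{(3)}_{1,4}$ plays no role in this step. The relevant multiplier is $\lambda_1$. Since $g_2(\alpha^u_\star)=0$,
\[
f(\alpha^u_\star)=(N-2)(1-e_1)\,\alpha^u_\star\bigl[(N-1)\alpha^u_\star(\alpha^u_\star-\rho_1)+e_2(1-e_1)\bigr].
\]
Using $(N-1)(\rho_1-\alpha^u_\star)^2=(1-e_1)[(N-2)\rho_2+1-e_2]$, the bracket equals $(1-e_1)[(N-2)\rho_2+1]-(N-1)\rho_1(\rho_1-\alpha^u_\star)$, which is $(1-e_1)\lambda_1$ at the test point. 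After squaring, it is positive if and only if $1-e_2-\frac{(N-1)(1-e_1)}{\eta^2}+(N-2)\rho_2<0$, i.e.\ $(e_1,e_2)$ lies strictly above $\mathcal{B}^{(1)}_{4,6}$. This holds automatically when $e_1\le x_2=1-\rho_1\eta$. It is precisely the paper's inequality $\alpha_{g2,1}\le\alpha_{g1,2}$, and it is the missing idea in your proposal.

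Your lower-bound step also contains a sign error: in $\mathcal{R}_4$ one has $q<0$, not $q>0$. The bracket in $q$ is $\Upsilon$ at $\alpha=\beta=0$, and $\det(\Rm-\Em)=(1-\rho_2-e_2)^{N-2}$ times that bracket. Since $e_2<1-\rho_2$ in $\mathcal{R}_4$, the peripheral block of $\Rm-\Em$ is positive definite. A nonnegative bracket would then give $\Rm\succeq\Em$, i.e.\ the SDC region $\mathcal{R}_0$. So $f(0)=(N-1)q<0$, and by the opposite-sign relation above this is what yields $\phi'(0)>0$. Your claims ``$q>0$'' and ``equivalently $\phi'(0)>0$'' therefore contradict each other. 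The ``$\Xi$ is even'' heuristic is also invalid along the curve, because $\beta$ depends on $\alpha$; in fact $\phi'(0)>0$ if and only if $\beta(0)=g_2(0)<0$. Note that $q<0$ is also equivalent to $\alpha^u_\star>0$, which the display above needs.

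``Concavity along $\Upsilon=0$'' is not automatic, since the curve is not affine. It does hold for $\alpha\le\alpha^u_\star$: there $\beta=g_2(\alpha)\le 0$, so the objective is nondecreasing in $\beta$ (its $\beta$-derivative is proportional to $\alpha^2-e_1\beta$), and $g_2$ is concave. Alternatively, use the paper's identification of $\alpha_\star$ as the smallest real root of $f$. Together with $f(0)<0<f(\alpha^u_\star)$, either route gives $0<\alpha_\star<\alpha^u_\star$, and hence $\beta_\star<0$.

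Your feasibility argument for the rate upper bound is fine and matches the paper.
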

\begin{IEEEproof}
   See Appendix~\ref{proof:prop:Dsm bound}.
\end{IEEEproof}

Compared with \(\alpha_{\star}\) in Eq.~\eqref{eq:alpha star analytical}, the upper bound \(\alpha^{u}_{\star}\) in Eq.~\eqref{eq:beta star upper} is remarkably concise, leading to a more compact upper bound for the RDF in Eq.~\eqref{eq:region4 rdf upper}. This compactness allows us to disregard minor contributions to the RDF arising from the parameter dependence in Eq.~\eqref{eq:region4 cubic param}, thereby focusing on the dominant terms.
More importantly, this upper bound is proven to be a tighter approximation than the Hadamard bound, demonstrating its effectiveness in capturing the essential behavior of the RDF.
To establish this, a sufficient condition for Eq.~\eqref{eq:D tilde sdc} to hold in \(\mathcal{R}_4\) is to assume an isotropic covariance matrix with \(\rho = \rho_{1} = \rho_2\).  
When substituting \(\alpha^{u}_{\star}\) into Eq.~\eqref{eq:D tilde sdc}, we obtain an elliptical region in the \((e_{1}, e_2)\) plane. Since this region is convex, it suffices to consider the vertices of \(\mathcal{R}_4\) in Eq.~\eqref{eq:region4 cond}. In particular, for \(\sqrt{\rho_{2}} \leq \rho_{1} \leq \sqrt{\frac{(N-2)\rho_{2} + 1}{N-1}}\), \(P_3\) is one of the vertices of \(\mathcal{R}_4\)
and may fail to satisfy Eq.~\eqref{eq:D tilde sdc} due to variations in correlations.
Under the isotropic correlation, Eq.~\eqref{eq:D tilde sdc} is always satisfied in \(\mathcal{R}_4\), and Theorem~\ref{thm:iso tighter} demonstrates that \(R^{u}_{\xv}([N], \ev)\) provides a tighter bound on the RDF than the Hadamard bound.

\begin{theorem} \label{thm:iso tighter}
   In \(\mathcal{R}_{4}\), when \(\rho = \rho_{1} = \rho_2\) is satisfied, the following holds for arbitrary \(N \geq 3\),
   \begin{align} \label{eq:iso tighter}
      \Rx([N], \ev) \geq \frac{1}{2}\left(R^{u}_{\xv}([N], \ev) + R^{\ell}_{\xv}([N], \ev)\right),
   \end{align}
   where \(\Rx([N], \ev)\) is given in Eq.~\eqref{eq:region4 rdf}, \(R^{u}_{\xv}([N], \ev)\) in Eq.~\eqref{eq:region4 rdf upper}, and \(R^{\ell}_{\xv}([N], \ev)\) in Eq.~\eqref{eq:hadamard bound value}.
\end{theorem}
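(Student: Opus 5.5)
Throughout, write $\rho=\rho_1=\rho_2$. All three rates share the numerator $\det(\Rm)=(1-\rho)^{N-2}\nu$. Eq.~\eqref{eq:iso tighter} is therefore equivalent to the determinant inequality
\begin{align}
\det(\Dsm)^2 \le \det(\Em)\,\det(\widetilde{\Dm}), \label{eq:plan goal}
\end{align}
where $\det(\Em)=e_1e_2^{N-1}$ and $\widetilde{\Dm}$ is the matrix of the form \eqref{eq:dsm model} with $\delta_1=e_1$, $\delta_2=e_2$, $\beta=0$, $\alpha=\alpha^u_\star$. By Eq.~\eqref{eq:dsm det} its determinant is $\det(\widetilde{\Dm})=e_2^{N-2}\bigl(e_1e_2-(N-1)(\alpha^u_\star)^2\bigr)$. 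By Table~\ref{tab:region_rank_distortion}, in $\mathcal{R}_4$ the optimal matrix has $\delta_1=e_1$ and $\delta_2=e_2$, so
\begin{align*}
\det(\Dsm)=(e_2-\beta_\star)^{N-2}\bigl(e_1(e_2+(N-2)\beta_\star)-(N-1)\alpha_\star^2\bigr).
\end{align*}
So I only need an upper bound on this determinant in terms of $\alpha_\star$, followed by a lower bound on $\alpha_\star$.

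\emph{Step 1 (remove $\beta_\star$).} For fixed $\alpha$, maximize $g(\beta)=(N-2)\log(e_2-\beta)+\log\bigl(e_1(e_2+(N-2)\beta)-(N-1)\alpha^2\bigr)$ over $\beta$, ignoring all constraints. The function $g$ is concave in $\beta$. Its stationary point is $\beta=\alpha^2/e_1$, and there both factors collapse: $e_2-\beta=e_2-\alpha^2/e_1$ and $e_1(e_2+(N-2)\beta)-(N-1)\alpha^2=e_1(e_2-\alpha^2/e_1)$. Hence
\begin{align*}
\det(\Dsm)\le e_1\bigl(e_2-\alpha_\star^2/e_1\bigr)^{N-1}.
\end{align*}
This bound is decreasing in $\alpha_\star>0$ (Proposition~\ref{prop:Dsm bound}). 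It therefore suffices to find $\alpha_\ell\le\alpha_\star$ such that
\begin{align*}
e_1^2\bigl(e_2-\alpha_\ell^2/e_1\bigr)^{2(N-1)}\le e_1 e_2^{2N-3}\bigl(e_1e_2-(N-1)(\alpha^u_\star)^2\bigr).
\end{align*}
Dividing by $e_1^2e_2^{2N-2}$ and putting $x=\alpha_\ell^2/(e_1e_2)$, $y=(N-1)(\alpha^u_\star)^2/(e_1e_2)$, this reads $(1-x)^{2(N-1)}\le 1-y$. By Eq.~\eqref{eq:D tilde sdc}, which always holds in $\mathcal{R}_4$ in the isotropic case, we have $y\in[0,1)$. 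Since $(1-x)^{2(N-1)}\le e^{-2(N-1)x}$ and $1-y\ge e^{-y/(1-y)}$, a sufficient condition is $2(N-1)x\ge y/(1-y)$. In particular, $\alpha_\ell=\alpha^u_\star/\sqrt{2}$ would suffice whenever $y$ is small. For large $y$ I would instead use the exact comparison $(1-x)^{2(N-1)}\le 1-y$ directly.

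\emph{Step 2 (lower bound on $\alpha_\star$; main obstacle).} Here I use that $\alpha_\star$ is the unique admissible root of the cubic $f$ in Eq.~\eqref{eq:region4 cubic func} on the interval $(0,\alpha^u_\star)$. I will show $f(\alpha_\ell)$ and $f(\alpha^u_\star)$ have opposite signs, and that no other root lies in $(\alpha_\ell,\alpha^u_\star)$. With $\rho_1=\rho_2=\rho$ the coefficients $p,q,t$ in Eq.~\eqref{eq:region4 cubic param} simplify considerably, and $f(\alpha_\ell)$ becomes a polynomial in $(e_1,e_2,\rho,N)$. I expect the sign check to be the hard part. My first attempt is to parametrize $\alpha_\ell=\theta\alpha^u_\star$ and use the substitution $s=\sqrt{(1-e_1)((N-2)\rho+1-e_2)/(N-1)}$, which makes $\alpha^u_\star=\rho-s$ linear. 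I would then verify the sign on the region $\mathcal{R}_4$ of Eq.~\eqref{eq:region4 cond}. Because the relevant conditions are monotone or convex in $(e_1,e_2)$, as already used for the ellipse \eqref{eq:D tilde sdc}, it suffices to check them at the vertices of $\mathcal{R}_4$, namely $P_4$, the endpoint of $\mathcal{B}_{4,6}^{(1)}$ on $e_2=0$, and the points of $\mathcal{B}_{1,4}^{(3)}$. If a fixed $\theta$ fails, I would choose $\theta$ depending on $y$ so that $x=\theta^2y/(N-1)$ meets Step~1 exactly, namely $x=1-(1-y)^{1/(2(N-1))}$. I would then check $f$ at that point.

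\emph{Step 3.} Combining Steps~1 and~2 gives \eqref{eq:plan goal}. Taking $\tfrac12\log$ of $\det(\Rm)^2/(\cdot)$ turns it into $2\Rx([N],\ev)\ge R^u_{\xv}([N],\ev)+R^\ell_{\xv}([N],\ev)$ for every $N\ge3$.
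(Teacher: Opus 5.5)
\textbf{Step 2 is missing, and it carries the whole proof.} Your reduction to $\det(\Dsm)^2\le\det(\Em)\det(\widetilde{\Dm})$ is the paper's starting point. Step~1 is correct: it is Hadamard's inequality applied to the Schur complement $e_2\mathsf{I}_{N-1}+\beta\mathsf{J}_{N-1}-\alpha^2\mathbf{1}\mathbf{1}^{\T}/e_1$, and it legitimately reduces the goal to $(1-x)^{2(N-1)}\le 1-y$. But Step~1 discards all information about $\beta_\star$. Everything then rests on a quantitative lower bound $\alpha_\star\ge\theta\,\alpha^u_\star$, which you never prove.

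The target you state for that bound is also wrong. The choice $\alpha_\ell=\alpha^u_\star/\sqrt{2}$ never suffices. It gives $x=y/(2(N-1))$, and Bernoulli's inequality then yields $(1-y/(2(N-1)))^{2(N-1)}\ge 1-y$, strictly for $y>0$. Your own criterion $2(N-1)x\ge y/(1-y)$ becomes $y\ge y/(1-y)$, which is false. What you actually need is $\theta^2\ge (N-1)\bigl(1-(1-y)^{1/(2(N-1))}\bigr)/y$, which is strictly above $1/2$. In $\mathcal{R}_4$, $\alpha^u_\star$ increases in $e_1$ and equals $\eta e_2/(N-1)$ on $\mathcal{B}_{4,6}^{(1)}$. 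Together with $e_2\le 1-\rho\le e_1$ on $\mathcal{R}_4$, this gives $y\le\eta^2/(N-1)\le 1/2$. So in the worst case $N=3$ you need $\theta\approx 0.8$.

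Your way of certifying the bound is not justified. You propose to evaluate the cubic $f$ at $\theta\alpha^u_\star$ and check signs at the corners $P_1,P_2,P_4$. However, $f(\theta\alpha^u_\star)$ is a non-convex algebraic function of $(e_1,e_2)$, and $\mathcal{R}_4$ is curvilinear with the implicit top boundary $\mathcal{B}_{1,4}^{(3)}$. Corner values therefore control nothing in the interior.

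\textbf{How the paper avoids this, and how you might close it.} The paper never locates $\alpha_\star$. It keeps $\delta_1=e_1$, $\delta_2=e_2$ and the active constraint $\Upsilon=0$, i.e.\ $\beta=g_2(\alpha)$, and parametrizes this curve by $\theta=e_2-\beta$. It then proves $h_1(\theta)\ge h_2(\theta)$ along the whole curve. The tools are convexity of $h_1\propto\theta^{-(N-2)}$, concavity of $h_2$, and a comparison of intercepts of parallel tangent lines, with the tangent to $h_2$ taken at $\widetilde{\Dm}$, i.e.\ $\beta=0$. This is followed by one-variable inequalities in $x=\sqrt{1-e_1}\in[\kappa,\phi]$, handled separately for $N=3$, $N=4$ and $N\ge5$. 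The stationarity equation $\lambda_3=0$ is never used.

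Your route must use that equation, and this is exactly where the work is missing. A plausible way to close it is as follows. On $(0,\alpha^u_\star)$,
\[
g_2(\alpha)=\tfrac{N-1}{(N-2)(1-e_1)}(\alpha-\alpha^u_\star)(\rho+s-\alpha)\le B(\alpha-\alpha^u_\star),
\]
with $s=\rho-\alpha^u_\star$ and $B=\tfrac{2(N-1)s}{(N-2)(1-e_1)}$. Also $g_1(\alpha)\ge -A\alpha$ with $A=\tfrac{e_2(1-e_1)}{(N-1)e_1\rho-\alpha^u_\star((N-2)e_1+1)}$. The equation $g_1(\alpha_\star)=g_2(\alpha_\star)$ then gives $\alpha_\star\ge\tfrac{B}{A+B}\alpha^u_\star$. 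You would still have to prove that $B/(A+B)$ meets the $y$-dependent threshold uniformly over $\mathcal{R}_4$. That is a genuine two-variable inequality, and your proposal does not address it.
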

\begin{IEEEproof}
   See Appendix~\ref{proof:thm:iso tighter}.
\end{IEEEproof}

The first fundamental difficulty in proving Theorem~\ref{thm:iso tighter} stems from the inability to determine the exact RDF \(\Rx([N], \ev)\) directly.
While Theorem~\ref{thm:seven regions main} provides an expression for \(\Rx([N], \ev)\), its dependence on numerous parameters from Eq.~\eqref{eq:region4 cubic param} renders analytical proof excessively tedious.
To circumvent this complexity, we adopt a functional perspective by treating \(\Rx([N], \ev)\) as unknown rather than building upon the results in Theorem~\ref{thm:seven regions main}.
Specifically, we establish that in \(\mathcal{R}_4\), the rate corresponding to any distortion matrix \(\Dm\) defined in Eq.~\eqref{eq:dsm model} invariably satisfies Eq.~\eqref{eq:iso tighter}.
However, the determinant of the distortion matrix contains terms with imbalanced powers, making the analysis intractable.
To overcome this, we strategically separate the high-order and low-order terms and subsequently process them on a logarithmic scale.
A deeper challenge lies in the multi-layered non-linearity involving variables, which makes analytical treatment particularly difficult. This necessitates judicious linearization at carefully selected operating points to simplify the analysis while maintaining the validity of the proof.
Finally, a subtle but important detail is the implicit boundary \(\mathcal{B}_{1,4}^{(3)}\) of \(\mathcal{R}_4\). We replace this boundary with the line segment \(\overline{P_1P_4}\) defined by \(e_{2}=1-\rho_{2}\) (see Fig.~\ref{fig:all_regions_2D}). While Theorem~\ref{thm:iso tighter} is established over a broader feasible parameter space, Proposition~\ref{prop:region one B3 e0star} shows that \(\mathcal{B}_{1,4}^{(3)}\) approaches this line segment as the source length grows large.
For an arbitrary \(N \geq 3\), we split the proof into three cases: \(N=3\), \(N=4\), and \(N \geq 5\). We extensively employ mathematical tools such as the arithmetic-geometric mean inequality, Taylor series expansion for linearization, and derivative properties to handle complicated non-linear inequalities involving multiple variables and numerous parameters.

In Fig.~\ref{fig:region4_tight_fixed_e0}, we plot the rate as a function of \(e_1\) for fixed \(e_2 = e_2^\star\), with \(N = 8\) and isotropic correlation \(\rho = 0.45\). Since the Hadamard rate provided in Eq.~\eqref{eq:hadamard bound value} assumes a diagonal distortion matrix \(\Dm = \Em\), the rate \(R^{\ell}_{\xv}([N], \ev)\) exhibits a simple logarithmic relationship with \(e_1\). However, because the SDC does not hold, the optimal rate differs significantly from the Hadamard rate. The upper bound \(R^{u}_{\xv}([N], \ev)\) in Eq.~\eqref{eq:region4 rdf upper} effectively captures this behavior in a compact form while remaining nearly identical to the true value, due to the succinct and accurate bounds on \(\alpha_\star\) and \(\beta_\star\) derived in Proposition~\ref{prop:Dsm bound}.

\begin{figure}[t]
   \setlength{\abovecaptionskip}{0pt} 
   \centering
   \includegraphics[width=0.50\textwidth]{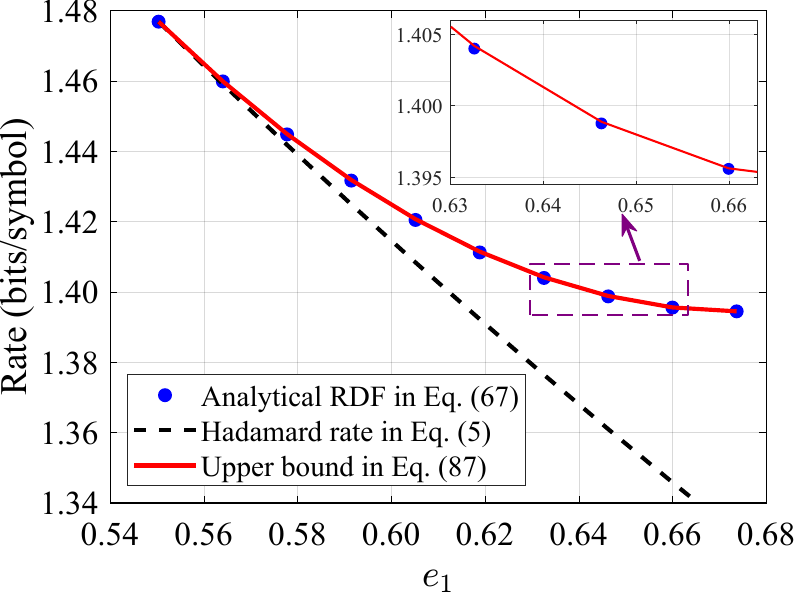}
   \caption{Relationship between the rate-distortion function in \(\mathcal{R}_4\) and the distortion constraint \(e_1\), under the parameters \(N=8\), \(\rho_{1} = \rho_2 = 0.45\), and \(e_{2} = e_{2}^\star\) in Eq.~\eqref{eq:e0star analytical}.}
   \label{fig:region4_tight_fixed_e0}
\end{figure}

\subsection{Analysis under Extreme Correlation and Asymptotics} \label{subsec:Analysis under Extreme Correlation and Asymptotics}

Under the 2-TC covariance matrix and 2-TD constraints, we examine two extreme correlation scenarios in the compression of a vector-valued Gaussian source, where some components are completely independent or dependent. Additionally, we analyze this compression problem in the asymptotic regime as \(N \to \infty\).
Before the extreme correlation analysis, we clarify the valid range of \((\rho_1, \rho_2) \in [0,1]^2\) that ensures \(\Rm \succ \zerov\) in Eq.~\eqref{eq:2tc pd cond}. For any \(\rho_2 \in [0,1]\), we have \(\rho_{1,\min} = 0\) and \(\rho_{1,\max} = \sqrt{\rho_2 + \frac{1 - \rho_2}{N - 1}}\), with \(\rho_{1,\max} = 1\) if and only if \(\rho_2 = 1\).
For any \(\rho_1 \in [0, \frac{1}{\sqrt{N - 1}}]\), we have \(\rho_{2,\min} = 0\), and for any \(\rho_1 \in [0,1]\), we have \(\rho_{2,\max} = 1\).

Let us begin with the extreme case of \(\rho_{1}\).
First, when \(\rho_1 = 0\), only \(\mathcal{R}_0\) and \(\mathcal{R}_5\) exist. Since the central component and the peripheral components are completely independent, the optimality is to compress them separately (\(\alpha = 0\)). Meanwhile, the peripheral components are isotropically correlated, and \(\beta = [e_{2} + \rho_{2} - 1]^+\) implies that their distortions exhibit non-negative correlations.
Second, when \(\rho_1 = \sqrt{\rho_{2}+\frac{1-\rho_{2}}{N-1}}\), \(\mathcal{R}_0\) and \(\mathcal{R}_5\) vanish because optimality cannot be achieved through separate compression. Furthermore, this strong correlation prevents all constraints from being active simultaneously, leading to the disappearance of \(\mathcal{R}_1\).

Next, consider the extreme case of \(\rho_{2}\).
First, when \(\rho_{2} = 0\), the vector source consists of \(N-1\) i.i.d. peripheral components with a commonly correlated central component.
In this case, \(\mathcal{R}_2\) and \(\mathcal{R}_5\) no longer exist.
The absence of \(\mathcal{R}_2\) arises from a contradiction. Recall we have \(\rank(\ev - \dsv)=1\) and \(n_{+}(\Kxs)=1\) in \(\mathcal{R}_2\) (see Table~\ref{tab:region_rank_distortion}). When all peripheral components are i.i.d., \(\rank(\ev - \dsv)=1\) implies that the distortion constraint on the unique central component is inactive, so the reconstruction can be zero almost surely. In contrast, \(n_{+}(\Kxs)=1\) shows that the central component is recovered in a non-trivial manner. This contradiction proves that \(\mathcal{R}_2\) cannot exist.
Additionally, \( \mathcal{R}_5 \) does not exist, because this scenario is impossible: peripheral component distortions are correlated (\( \beta \neq 0 \)) but independent of the central component distortion (\( \alpha = 0 \)).
Second, when \(\rho_{2} = 1\), all peripheral components are fully correlated with each other. In this case, the compression problem for an arbitrarily long vector source degenerates into a two-component source in~\cite[Fig. 1]{nayak2010successive}. The corresponding distortion region comprises \(\mathcal{R}_1\), \(\mathcal{R}_2\), \(\mathcal{R}_3\), and \(\mathcal{R}_5\), and it is symmetric with respect to the line \(e_1 = e_{2}\) (see Fig.~\ref{fig:all_regions_2D}).
Furthermore, when \(\rho_1 = \rho_2 = 1\), the source reduces to a scalar Gaussian-quadratic case. In this degenerate setting, only \(\mathcal{R}_2\) and \(\mathcal{R}_3\) are relevant, and the RDF simplifies to \(\Rx([N], \ev) = -\frac{1}{2} \log \left( \min\{e_1, e_2\}\right)\).

Finally, we analyze the asymptotics of the distortion regions. We find that \(\mathcal{R}_4\) vanishes with increasing source length, as its optimality conditions become mathematically unattainable. In contrast to the two-component source compression problem, this case is particularly interesting as the peripheral component is no longer a single entity but an asymptotically large number of identical components—a case that has not been considered in existing work.
Consequently, the distortion regions lose symmetry.
The regions \(\mathcal{R}_1\), \(\mathcal{R}_2\), \(\mathcal{R}_3\), and \(\mathcal{R}_5\) partition the part of \(\mathcal{U}\) where \(e_{2} > 1 - \rho_{2}\), while the bottom part is divided by \(\mathcal{R}_0\) and \(\mathcal{R}_6\) with the boundary \(e_1 = 1 - \rho_1^2/\rho_{2}\).

\section{Source Compression with Isotropic Correlation and Identical Constraints} \label{sec:Source Compression with Isotropic Correlation and Identical Constraints}

In this section, we consider the case of an isotropically correlated source with identical distortion constraints. Based on Theorem~\ref{thm:seven regions main}, we present the rate-distortion function and the optimal distortion allocation in the following corollary.

\begin{corollary} \label{coro:alliso rd}
   For the isotropic correlation \(\rho = \rho_{1} = \rho_2\) and identical distortion constraints \(e = e_{1} = e_2\), the rate distortion functions under both individual distortion criteria and sum distortion criterion are identical, which are given by
\begin{align} \label{eq:alliso rd v2}
   \Rx([N], \ev)={}& \frac{N}{2} \left[\log \frac{1-\rho}{e}\right]^+ +\frac{1}{2} \log \frac{1 + (N-1)\rho}{1-\rho} 
+\frac{1}{2} \log \frac{e-[e + \rho - 1]^+}{e + (N-1) [e + \rho - 1]^+}.
\end{align}
The optimal distortion allocation is
\begin{align} \label{eq:alliso Dsm}
\Dsm = \Em + [e + \rho - 1]^+ \mathsf{J},
\end{align}
where \(\Em = e \mathsf{I}\) and \(\mathsf{J}=\mathbf{1}\mathbf{1}^{\T}-\mathsf{I}\). Specifically, the average compression rate per component \(\overline{R}_{\xv}([N], \ev) = \frac{1}{N}\Rx([N], \ev)\) asymptotically behaves as  
\begin{align}  \label{eq:alliso rate per asym} 
   \overline{R}_{\xv}([N], \ev) = \begin{cases}  
\frac{1}{2} \log \frac{1-\rho}{e} + \Theta\!\left(\frac{\log N}{N}\right), & \text{if } e + \rho \leq 1, \\  
\Theta\!\left(\frac{1}{N}\right), & \text{if } e + \rho > 1.  
\end{cases}  
\end{align}
\end{corollary}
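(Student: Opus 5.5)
The plan is to avoid working through the seven-region analysis of Theorem~\ref{thm:seven regions main} and instead sandwich the RDF directly. The lower bound is Eq.~\eqref{eq:rdf vector scalar}, i.e., $\Rx^{\mathrm{ind}}([N],\ev)\ge \Rx^{\mathrm{sum}}([N],Ne)$, evaluated with Lemma~\ref{lemma:trace rdf}. The upper bound comes from a feasible point of the Max-Det problem in Lemma~\ref{lemma:maxdet}, namely the candidate $\Dsm$ in Eq.~\eqref{eq:alliso Dsm}. If the two bounds coincide, this simultaneously proves the formula, the optimality of the allocation, and the claimed equality of the two RDFs. The spectrum needed is explicit: for the isotropic $\Rm$ (the case $\rho_1=\rho_2=\rho$ of the 2-TC eigenvalues), the eigenvalues are $1-\rho$ with multiplicity $N-1$ and $1+(N-1)\rho\ge 1-\rho$.

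\emph{Case $e+\rho\le 1$.} Here $e\le 1-\rho=\lambda_N$, so Proposition~\ref{prop:weyl bound} gives the SDC and $\Dsm=\Em$. The RDF is then the Hadamard rate
\[
\tfrac12\log\frac{(1-\rho)^{N-1}(1+(N-1)\rho)}{e^N}.
\]
This equals Eq.~\eqref{eq:alliso rd v2}, because the last logarithm there vanishes when $[e+\rho-1]^+=0$. On the sum-criterion side, reverse water-filling with $d_0=Ne$ gives the water level $\delta=e\le\lambda_N$. Hence every eigen-component receives distortion $e$, and the sum-criterion rate is the same expression.

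\emph{Case $e+\rho>1$.} Set $\beta=e+\rho-1>0$ and $\Dm=e\mathsf{I}+\beta\mathsf{J}$.
\begin{itemize}
\item Feasibility: the diagonal equals $e$. Moreover, $\Rm-\Dm$ has diagonal $1-e$ and off-diagonal $\rho-\beta=1-e$, so $\Rm-\Dm=(1-e)\mathbf{1}\mathbf{1}^{\T}\succeq\zerov$ is rank one. Also $\Dm\succ\zerov$, since its eigenvalues are $e-\beta=1-\rho>0$ and $e+(N-1)\beta>0$.
\item Achieved rate: $\det\Dm=(1-\rho)^{N-1}\,[e+(N-1)\beta]$, so
\[
\tfrac12\log\frac{\det\Rm}{\det\Dm}=\tfrac12\log\frac{1+(N-1)\rho}{e+(N-1)\beta}.
\]
Using $e-\beta=1-\rho$, this equals Eq.~\eqref{eq:alliso rd v2}, whose first bracket vanishes because $e>1-\rho$.
\item Matching lower bound: apply Lemma~\ref{lemma:trace rdf} with $d_0=Ne$. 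The $N-1$ small eigenvalues $1-\rho<e$ are fully "drowned", each receiving distortion $1-\rho$. The large eigenvalue receives $Ne-(N-1)(1-\rho)=e+(N-1)\beta$, which is at most $1+(N-1)\rho$ since $e\le1$. The resulting sum-criterion rate is exactly the same expression.
\end{itemize}
Equality of the two bounds closes this case. A useful consistency check is that $\Dm=\mathsf{U}\Dsm_p\mathsf{U}^{\T}$ is precisely the water-filling distortion matrix, and its diagonal happens to be constant and equal to $e$ by isotropy.

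For the asymptotics in Eq.~\eqref{eq:alliso rate per asym}, divide by $N$.
\begin{itemize}
\item If $e+\rho\le1$, the per-component rate is $\tfrac12\log\frac{1-\rho}{e}+\frac{1}{2N}\log\frac{1+(N-1)\rho}{1-\rho}$. The correction is $\Theta(\log N/N)$ for $\rho>0$; the degenerate case $\rho=0$ should be noted separately, since there the correction vanishes.
\item If $e+\rho>1$, the ratio $\frac{1+(N-1)\rho}{e+(N-1)(e+\rho-1)}$ tends to the constant $\rho/(e+\rho-1)$, which is $>1$ when $e<1$, so the per-component rate is $\Theta(1/N)$. The edge case $e=1$ gives rate zero and should be noted separately.
\end{itemize}
I expect no real obstacle. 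The only care needed is verifying that the water-filling level in the second case lies strictly between the two distinct eigenvalues, and handling the boundary and degenerate parameters $\rho=0$ and $e=1$.
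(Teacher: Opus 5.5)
Your proof is correct, and it follows a different route from the paper.

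\textbf{The paper's route.} The paper obtains the corollary by specializing Theorem~\ref{thm:seven regions main} to the diagonal $e_1=e_2=e$, $\rho_1=\rho_2=\rho$. For $e\le 1-\rho$ the point lies in $\mathcal{R}_0$, and Eq.~\eqref{eq:sdc R0 rdf} applies. For $e>1-\rho$ it lies in $\mathcal{R}_1$, since the $\mathcal{R}_1$ multipliers reduce to $\lambda_1=1-\rho$, $\lambda_2=(N-1)(1-\rho)$, $\lambda_3=N(N-2)(e+\rho-1)$ and $\lambda_4=\rho/(1-e)-1$, all positive. Table~\ref{tab:region_rank_distortion} then gives $\alpha=\beta=e+\rho-1$. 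With $\omega=(1-\rho)(e+\rho-1)$ and $\nu=(1-\rho)(1+(N-1)\rho)$, Eq.~\eqref{eq:region1 rdf} collapses to Eq.~\eqref{eq:alliso rd v2}. The coincidence with the sum-criterion RDF is asserted without a separate argument.

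\textbf{Your route.} You sandwich the RDF between the lower bound Eq.~\eqref{eq:rdf vector scalar}, evaluated by Lemma~\ref{lemma:trace rdf}, and the feasible point $e\mathsf{I}+\beta\mathsf{J}$ of Lemma~\ref{lemma:maxdet}.
\begin{itemize}
\item It is self-contained and does not depend on the seven-region KKT analysis, which the paper develops at length and partly only in outline.
\item It proves the equality of the two RDFs as the actual mechanism rather than as an after-the-fact comparison. By permutation symmetry, the reverse water-filling matrix $\mathsf{U}\Dsm_p\mathsf{U}^{\T}$ has constant diagonal $e$, so it is already feasible under the individual constraints.
\end{itemize}

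\textbf{What the paper's route buys.} It places this special case inside the general 2-TC/2-TD picture: which constraints are active, the multiplier signs, and the collapse to $n_+(\Kxs)=1$. Its method also continues to work when $e_1\neq e_2$ or $\rho_1\neq\rho_2$. There the sum-criterion optimum generally violates some individual constraint, the two RDFs differ, and your sandwich cannot close.

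\textbf{Minor points.}
\begin{itemize}
\item Calling $e\mathsf{I}+[e+\rho-1]^+\mathsf{J}$ \emph{the} optimal allocation needs uniqueness. This follows from strict concavity of $\log\det$ on the convex feasible set; add that sentence.
\item Your first case could equally be closed by the same sandwich: water level $\delta=e$, with $\Dm=\Em$ feasible. This avoids invoking Hadamard tightness separately.
\item The edge cases you flag are genuine exceptions to the $\Theta(\cdot)$ claims in Eq.~\eqref{eq:alliso rate per asym} under the paper's definition of $\Theta$. For $\rho=0$ the correction term vanishes identically, and for $e=1$ the rate is zero. Recording them is appropriate.
\end{itemize}
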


\begin{figure*}[t]
   \centering
   \captionsetup[subfloat]{font=scriptsize}
   
   \begin{minipage}[b]{0.48\textwidth}
      \centering
      \subfloat[Solid dots mark the condition \(e = 1 - \rho\).]{
         \includegraphics[height=6.06cm]{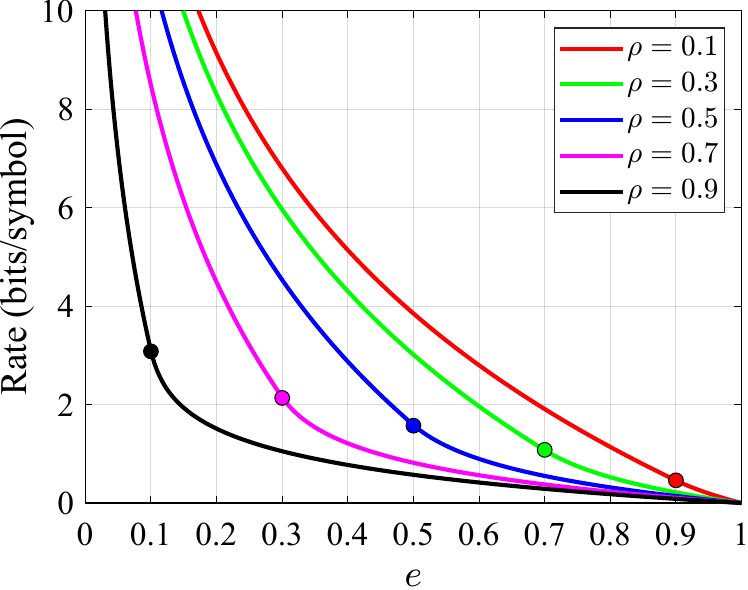}
         \label{fig:iso_uniform_2D_rho}
      }
   \end{minipage}
   \hfill
   \begin{minipage}[b]{0.48\textwidth}
      \centering
      \subfloat[Solid dots mark the condition \(\rho = 1 - e\).]{
         \includegraphics[height=6.1cm]{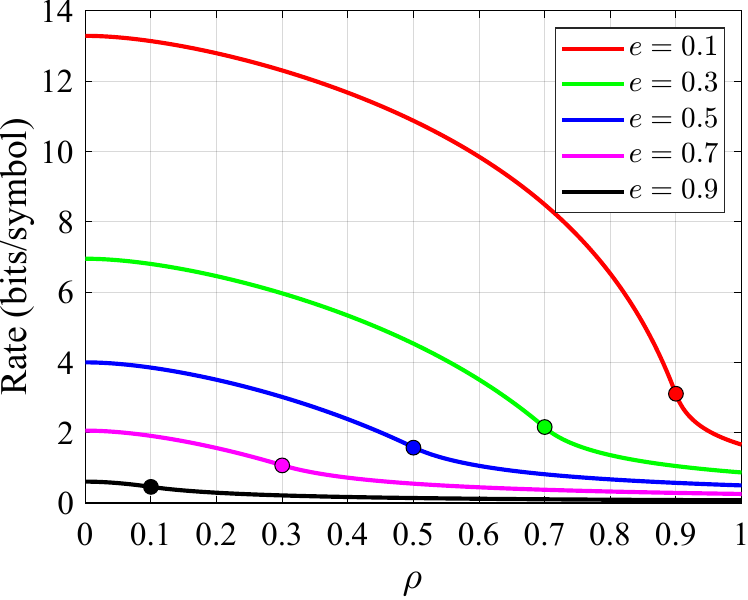}
         \label{fig:iso_uniform_2D_e}
      }
   \end{minipage}
   
   \caption{Relationship between the rate in Eq.~\eqref{eq:alliso rd v2} and distortion constraint and source correlation with \(N=8\) fixed.}
   \label{fig:iso_uniform_2D}
\end{figure*}

\begin{figure}[t]
   \setlength{\abovecaptionskip}{0pt} 
   \centering
   \includegraphics[width=0.50\textwidth]{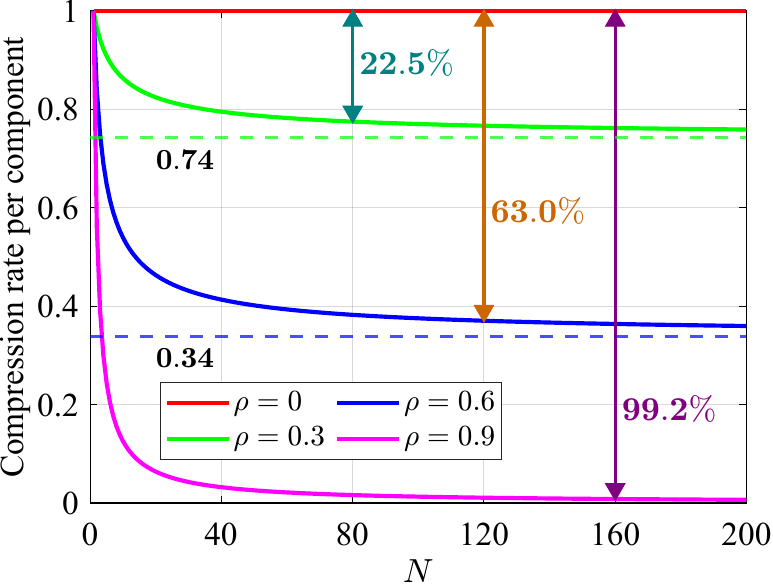}
   \caption{Compression rate per component versus source length under identical distortion constraints \(e=0.25\).}
   \label{fig:iso_uniform_2D_N}
\end{figure}

As mentioned in Eq.~\eqref{eq:rdf vector scalar} and the surrounding context, for the same source and distortion budget, the RDF under individual distortion criteria cannot be lower than that under sum distortion criterion. Corollary~\ref{coro:alliso rd} shows the case where the RDF under individual distortion constraints equals that under sum distortion criterion.
However, the distortion allocation that achieves the RDF differs significantly.
Specifically, under sum distortion criterion, the optimal distortion allocation follows the classical reverse water-filling principle applied to the parallel sources after EVD.
In contrast, under individual distortion constraints, the optimal allocation is performed directly on the original vector source with inter-component correlations. Here, it is necessary not only to meet each component's distortion constraint but also to consider the correlations between distortions across different components. We find that these correlations equal \([e + \rho - 1]^+\), elegantly linking distortion constraints and source correlations.
The SDC holds when \(e + \rho - 1 \leq 0\), and the per-component compression rate \(\overline{R}_{\xv}([N], \ev)\) in Eq.~\eqref{eq:alliso rate per asym} tends to \(\frac{1}{2} \log \frac{1 - \rho}{e}\) with a deviation of \(\Theta\!\left(\frac{\log N}{N}\right)\). If \(e + \rho - 1 > 0\), the optimal reconstruction becomes degenerate with the dimension collapsing to \(n_{+}(\Kxs) = 1\), and the rate tends to zero with a smaller deviation of \(\Theta\!\left(\frac{1}{N}\right)\).

In Fig.~\ref{fig:iso_uniform_2D}, the optimal compression rate's dependence on correlation and distortion constraint is presented for a fixed source length \(N = 8\). 
In Figs.~\ref{fig:iso_uniform_2D_rho} and \ref{fig:iso_uniform_2D_e}, the solid points on each curve correspond to the condition \(e + \rho = 1\).
Generally, for the left segments of the curves (relative to the solid points), the optimal distortion allocation is \(\Dsm = \Em\). For the right segments, the optimal allocation becomes \(\Dsm = \Em + (e + \rho - 1)\mathsf{J}\). 
Specifically, in Fig.~\ref{fig:iso_uniform_2D_rho}, the classical rate-distortion trade-off is observed: a higher tolerable distortion allows for a lower compression rate. Additionally, with overly mild distortion constraints (right segments of the curves), the reduction in compression rate becomes less effective, as the optimal reconstruction degrades.
In Fig.~\ref{fig:iso_uniform_2D_e}, we observe the gains in compression efficiency obtained by fully leveraging correlations.
If the source components exhibit a higher correlation, we can further reduce the compression rate without compromising the reconstruction quality. This is achieved by using the same reconstruction to represent multiple components while meeting their individual distortion constraints.
In Fig.~\ref{fig:iso_uniform_2D_N}, we plot the average compression rate per component versus the source length, with \(e = 0.25\) fixed. 
Compared to the independent case (red line), the compression cost per component is significantly reduced due to correlations. 
For example, under insufficiently strong correlations, i.e., \(\rho \leq 1-e\), \(\overline{R}_{\xv}([N], \ev)\) decreases by 63.0\% when \(N = 120\) and \(\rho = 0.6\), and converges to 0.34 bits per component.
Under stronger correlations, i.e., \(\rho > 1-e\), \(\overline{R}_{\xv}([N], \ev)\) decreases by 99.2\% when \(N = 160\) and \(\rho = 0.9\), which results in a massive rate savings. 
In summary, effectively leveraging correlations in data is a crucial strategy for reducing storage and processing overhead. Furthermore, we have quantitatively observed that the extent of this reduction varies depending on whether the SDC is satisfied or not.

\section{Conclusion} \label{sec:Conclusion}

In this paper, we have investigated the Gaussian-quadratic lossy compression problem under individual distortion criteria.
Theoretical results were provided based on different source covariances and imposed distortion constraints.
First, we have presented the spectral properties of optimal source reconstruction under arbitrary covariance and distortion constraints. 
A stronger version of the SDC was then derived in the form of a scalar inequality.
Next, we proposed the 2-TC class of covariance matrices and prove it generalizes to the broader \((N-1)\)-TC class. We then established the relationship between correlations, distortion constraints, and the optimal compression rate when the SDC holds.
Furthermore, we have characterized the SDC region from two perspectives: distortion constraints and source correlations. 
Subsequently, under the 2-TC covariance and 2-TD constraints, we have provided the complete and exact closed-form RDFs and established the optimality of the distortion allocation in each region. After that, we have refined our results to enhance simplicity while maintaining comparable precision.
An important insight is that the essence of pursuing the RDF under individual distortion criteria lies in thoroughly analyzing the correlations among the optimal distortions across different components.
Under isotropic correlation and identical constraints, we have examined the theoretical limits of the optimal compression rate per component. It is revealed that when the SDC is not satisfied, the system achieves substantial cost reduction at optimal rates by fully exploiting source correlations in practice.

\appendices

\section{
   Proof of Theorem~\ref{thm:det(Rm-Dsm)} \label{proof:thm:det(Rm-Dsm)}
}

To prove Theorem~\ref{thm:det(Rm-Dsm)}, we analyze the optimization problem in Lemma~\ref{lemma:maxdet}. The problem is convex, as the objective \(\log\det(\Dm)\) in Eq.~\eqref{eq:maxdet obj} is strictly concave over the PD cone, and the constraints in Eqs.~\eqref{eq:maxdet c1}-\eqref{eq:maxdet c2} involve only linear inequalities and linear matrix inequalities, which form a convex set. We first exclude the trivial case where the RDF is zero. By definition, \(\Rx([N],\ev)=0\) holds if and only if \(\ev = \mathbf{1}\) (assuming normalized variances), which implies \(\Dsm = \Em = \Rm\). Unless otherwise stated, we focus on the non-trivial case in the sequel.
For any bounded and non-zero \(\Rx([N],\ev)\), there exists a strictly feasible distortion matrix in the interior of the constraint set.
For instance, for sufficiently small \(\epsilon>0\), the matrix \(\Dm=\epsilon\mathsf{I}\) satisfies
\(\zerov\prec\Dm\prec\Rm\) and \(\dv<\ev\).
Hence, Slater's condition holds, which implies strong duality and a zero duality gap~\cite{boyd2004convex}.
Consequently, the KKT conditions are both necessary and sufficient for global optimality.

Let \(\mathsf{P}\succeq \zerov\) and diagonal \(\mathsf{Q}\succeq \zerov\) be the dual slack matrices associated with the constraints \(\Dm\preceq \Rm\) and \(\dv\le\ev\), respectively.
Ignoring the constant \(\frac{1}{2}\log\det(\Rm)\) and the factor \(\frac{1}{2}\), the Lagrangian is given by
\begin{align}
\mathcal{L}(\Dm;\mathsf{P},\mathsf{Q})= -\log\det(\Dm) - \tr\big(\mathsf{P}(\Rm-\Dm)\big) - \tr\big(\mathsf{Q}(\Em-\Dm)\big),
\label{eq:app_Lagrangian}
\end{align}
where \(\Em=\diag(e_1,\cdots,e_N)\).
Let \(\Dsm\) denote an optimal primal solution and \((\mathsf{P}^\star,\mathsf{Q}^\star)\) denote an optimal dual pair.
Under Slater's condition, \(\Dsm\) is optimal if and only if there exist dual variables \((\mathsf{P}^\star,\mathsf{Q}^\star)\) such that the following KKT conditions are satisfied.
\begin{enumerate}
\item Primal feasibility:
The optimal distortion matrix \(\Dsm\) satisfies the primal constraints in Eqs.~\eqref{eq:maxdet c1}-\eqref{eq:maxdet c2}.
\item Dual feasibility: 
The optimal dual variables satisfy the non-negativity constraints
\begin{align}
&\mathsf{P}^\star \succeq \zerov, \\
&\mathsf{Q}^\star=\diag(q_1^\star,\cdots,q_N^\star)\succeq \zerov.
\label{eq:app_KKT_dual}
\end{align}
\item Stationarity:
The stationarity condition states that the gradient of the Lagrangian with respect to \(\Dm\) vanishes at \(\Dsm\). We have \(\nabla_{\Dm}\mathcal{L}(\Dm;\mathsf{P}^\star,\mathsf{Q}^\star) = -\Dm^{-1}+(\mathsf{P}^\star+\mathsf{Q}^\star)\).
Therefore, the optimal distortion matrix satisfies 
\begin{align}
\Dsm=(\mathsf{P}^\star+\mathsf{Q}^\star)^{-1}.
\label{eq:kkt_stat}
\end{align}
\item Complementary slackness:
The vanishing duality gap implies \(\tr\big(\mathsf{P}^\star(\Rm-\Dsm)\big)=0\) and \(\tr\big(\mathsf{Q}^\star(\Em-\Dsm)\big)=0\).
Since all primal and dual slack variables are PSD, the complementary slackness conditions can be expressed compactly. In particular, we have
\begin{align}
\mathsf{P}^\star(\Rm-\Dsm)&=\zerov, \label{eq:kkt_cs1} \\
\mathsf{Q}^\star (\ev-\mathsf{d}^{\star})&=\zerov,
\label{eq:kkt_cs2}
\end{align}
where \(\ev-\mathsf{d}^{\star}\) denotes the vector formed by the diagonal entries of \(\Em-\Dsm\).
\end{enumerate}

We are now ready to establish the two cases in Eq.~\eqref{eq:det(Rm-Dsm)}.

\noindent\textbf{Case I:}
In this case, we have \(\Rm-\Em \succ \zerov\).
Then, the point \(\Dm=\Em\) is strictly feasible for the semidefinite constraint in Eq.~\eqref{eq:maxdet c2}, and satisfies the diagonal constraints in Eq.~\eqref{eq:maxdet c1} with equality.
We next verify that \(\Dm=\Em\) satisfies the KKT conditions derived above to demonstrate its optimality. To this end, let 
\begin{align}
\mathsf{P}^\star&=\zerov, \\
\mathsf{Q}^\star&=\Em^{-1}.
\label{eq:caseI_dual_choice}
\end{align}
Then dual feasibility holds since \(\mathsf{P}^\star\succeq\zerov\) and \(\mathsf{Q}^\star\succeq\zerov\) (note that \(\Em\succ\zerov\) under any bounded non-zero RDF). Moreover, the stationarity condition in Eq.~\eqref{eq:kkt_stat} is satisfied by construction. Finally, the complementary slackness conditions hold trivially: Eq.~\eqref{eq:kkt_cs1} is satisfied because \(\mathsf{P}^\star=\zerov\), and Eq.~\eqref{eq:kkt_cs2} is satisfied because the distortion constraints are active, i.e., \(\ev - \dsv = \zerov\).
Therefore, the triple \((\Dm,\mathsf{P}^\star,\mathsf{Q}^\star)=(\Em,\zerov,\Em^{-1})\) satisfies all KKT conditions.
Since Slater's condition holds, this confirms \(\Dsm=\Em\) as the unique global optimum.
Recall the optimal reconstruction covariance matrix is given by \(\Kxs = \Rm - \Dsm \succ\zerov\), \(\Kxs\) is strictly full-rank, which establishes the first case stated in Eq.~\eqref{eq:det(Rm-Dsm)}.

\noindent\textbf{Case II:} This encompasses two subcases: 
(i) \(\Rm-\Em\succeq\zerov\) and \(\det(\Rm-\Em)=0\); and
(ii) the SDC is violated, i.e., \(\Rm-\Em\nsucceq\zerov\).

\smallskip
\noindent (i) \(\Rm-\Em\succeq\zerov\) and \(\det(\Rm-\Em)=0\).
In this subcase, \(\Dm=\Em\) remains primal feasible.
The optimality proof from Case~I extends directly here: the triple \((\Dm,\mathsf{P}^\star,\mathsf{Q}^\star)=(\Em,\zerov,\Em^{-1})\) satisfies all KKT conditions.
Combining \(\Dsm=\Em\) with the condition \(\det(\Rm-\Em)=0\), we obtain \(\det(\Rm-\Dsm) = 0\). 
This implies that the resulting reconstruction covariance matrix \(\Kxs = \Rm-\Dsm\) is rank-deficient.

\noindent (ii) \(\Rm-\Em\nsucceq\zerov\).
In this subcase, \(\Dm=\Em\) is not primal feasible, hence \(\Dsm\neq \Em\).
We proceed to prove that \(\Rm-\Dsm\) must be singular. To this end, assume to the contrary that
\begin{align}
\Rm-\Dsm\succ\zerov.
\label{eq:caseII_contra_strict}
\end{align}
This implies that the semidefinite constraint in Eq.~\eqref{eq:maxdet c2} is strictly inactive at \(\Dsm\), and the complementary slackness condition in Eq.~\eqref{eq:kkt_cs1} dictates that \(\mathsf{P}^\star=\zerov\).
Substituting \(\mathsf{P}^\star=\zerov\) into the stationarity condition in Eq.~\eqref{eq:kkt_stat}, the optimal distortion matrix simplifies to the diagonal form \(\Dsm=(\mathsf{Q}^\star)^{-1}\), which implies that \(\mathsf{Q}^\star\) is strictly PD (and diagonal) since \(\Dsm \succ \zerov\).
To satisfy the component-wise complementary slackness condition in Eq.~\eqref{eq:kkt_cs2}, we have \(\ev - \mathsf{d}^\star = \mathsf{0}\).
Combined with the diagonal structure, this implies \(\Dsm=\Em\).
However, substituting \(\Dsm=\Em\) back into the initial assumption in Eq.~\eqref{eq:caseII_contra_strict} would imply \(\Rm-\Em \succ \zerov\), which contradicts the hypothesis of this subcase (\(\Rm-\Em \nsucceq \zerov\)).
This contradiction shows that the assumption in Eq.~\eqref{eq:caseII_contra_strict} is false.
Given primal feasibility \(\Rm-\Dsm \succeq \zerov\), we conclude that \(\Kxs = \Rm-\Dsm\) is rank-deficient.

\section{
   Proof of Theorem~\ref{thm:inertia relation} \label{proof:thm:inertia relation}
}

To prove Theorem~\ref{thm:inertia relation}, we simplify the notation by defining \(\mathsf{A} = \Rm-\Dsm\) and \(\mathsf{B} = \Rm-\Em\).
We observe that since \(\mathsf{A} \succeq \zerov\), its rank is identically equal to the number of its strictly positive eigenvalues.
Proceeding from the KKT conditions established in Appendix~\ref{proof:thm:det(Rm-Dsm)}, we substitute the stationarity condition in Eq.~\eqref{eq:kkt_stat} into the complementary slackness equation in Eq.~\eqref{eq:kkt_cs1}. This eliminates \(\mathsf{P}^\star\) and yields \(\bigl((\Dsm)^{-1}-\mathsf{Q}^\star\bigr)\mathsf{A}=\zerov\).
Left-multiplying by \(\Dsm\) establishes the fundamental identity
\begin{align}\label{eq:A_DQA}
\mathsf{A} = \Dsm \mathsf{Q}^\star \mathsf{A}.
\end{align} 

We are now positioned to prove the first part of the inequality in Eq.~\eqref{eq:inertia relation}.
Since \(\Dsm\) is strictly PD, multiplication by its inverse preserves rank. Using the identity in Eq.~\eqref{eq:A_DQA}, we have \(\rank(\mathsf{A}) = \rank\bigl((\Dsm)^{-1}\mathsf{A}\bigr) = \rank(\mathsf{Q}^\star \mathsf{A})\).
Applying the rank inequality \(\rank(\mathsf{XY}) \le \min\{\rank(\mathsf{X}),\rank(\mathsf{Y})\}\), we obtain
\begin{align}
\rank(\mathsf{A}) \le \rank(\mathsf{Q}^\star).
\end{align}
Furthermore, the complementary slackness condition regarding the distortion constraints in Eq.~\eqref{eq:kkt_cs2} implies
\begin{align}
\rank(\mathsf{Q}^\star) + \rank(\ev-\mathsf{d}^{\star}) \le N,
\end{align}
where \(\rank(\ev-\dv^\star)\) denotes the cardinality of the support (number of non-zero entries) of the vector \(\ev-\dv^\star\).
This leads to the following chain of inequalities:
\begin{align}
\rank(\mathsf{A}) \le \rank(\mathsf{Q}^\star) \le N - \rank(\ev-\mathsf{d}^{\star}).
\end{align}

To prove the second part of Eq.~\eqref{eq:inertia relation}, i.e., \(n_+(\mathsf{B}) \ge n_+(\mathsf{A})\), we first establish a fundamental inequality on the optimal reconstruction distortion.
From the dual feasibility condition \((\Dsm)^{-1}-\mathsf{Q}^\star = \mathsf{P}^\star \succeq \zerov\), the definition of positive semidefiniteness implies that for any vector \(\mathsf{x}\), we have \(\mathsf{x}^{\T} \bigl((\Dsm)^{-1}-\mathsf{Q}^\star\bigr) \mathsf{x} \ge 0\).
Specifically, we choose \(\mathsf{x} = \Dsm \mathsf{b}_i\), where \(\mathsf{b}_i = [0, \cdots, 1, \cdots, 0]^{\T}\) is the unit vector with the \(i\)-th entry being \(1\) and all others \(0\).
We then obtain
\begin{align}
(\Dsm \mathsf{b}_{i})^{\T} (\Dsm)^{-1} (\Dsm \mathsf{b}_{i}) - (\Dsm \mathsf{b}_{i})^{\T} \mathsf{Q}^\star (\Dsm \mathsf{b}_{i}) \ge 0.
\end{align}
The first term simplifies to \(\mathsf{b}_{i}^{\T} \Dsm \mathsf{b}_{i} = \Dsm_{ii}\), i.e., the \(i\)-th diagonal element of \(\Dsm\).
For the second term, recall that \(\mathsf{Q}^\star = \diag(q_1^\star, \cdots, q_N^\star)\) with non-negative entries. Thus, we have
\begin{align}
(\Dsm \mathsf{b}_{i})^{\T} \mathsf{Q}^\star (\Dsm \mathsf{b}_{i})
&= \sum_{j=1}^N q_j^\star |\Dsm_{ji}|^2 \\
&\ge q_i^\star |\Dsm_{ii}|^2.
\end{align}
Combining this with the first term \(\Dsm_{ii}\) yields \(\Dsm_{ii} \ge q_i^\star |\Dsm_{ii}|^2\).
Given that \(\Dsm \succ \zerov\) implies \(\Dsm_{ii} > 0\), we divide both sides by \(\Dsm_{ii}\) to establish the following key inequality for all \(i \in [N]\):
\begin{align} \label{eq:qi_bound}
q_i^\star \Dsm_{ii} \le 1.
\end{align}

Define the subspace \(\mathcal W \triangleq \mathrm{Range}(\mathsf{Q}^\star \mathsf{A})=\{\mathsf{Q}^\star \mathsf{A} \mathsf{x}:\forall\, \mathsf{x}\in\mathbb{C}^N\}\).
Invoking the identity \(\mathsf{Q}^\star \mathsf{A}=(\Dsm)^{-1}\mathsf{A}\) from Eq.~\eqref{eq:A_DQA}, and noting the invertibility of \((\Dsm)^{-1}\), we obtain \(\dim(\mathcal W)=\rank(\mathsf{A})\).
We now proceed to show that \(\mathsf{w}^{\T}(\mathsf{B}-\mathsf{A})\mathsf{w}=\mathsf{w}^{\T}(\mathsf{D}^\star-\Em)\mathsf{w}\ge 0\) for all \(\mathsf{w}\in\mathcal W\).
Fix \(\mathsf{w}\in\mathcal W\). Then there exists a vector \(\mathsf{u}\) such that \(\mathsf{w}=\mathsf{Q}^\star \mathsf{A} \mathsf{u}\). Let \(\mathsf{v}\triangleq \mathsf{A}\mathsf{u}\), we have \(\mathsf{w}=\mathsf{Q}^\star \mathsf{v}\).
Using Eq.~\eqref{eq:A_DQA} gives
\begin{align}\label{eq:v_equals_Dw}
\mathsf{v}=\mathsf{D}^\star \mathsf{w}.
\end{align} 
Recall \(\mathsf{Q}^\star=\diag(q_1^\star,\cdots,q_N^\star)\) and let the active set be \(\mathcal{S}\triangleq\{i: q_i^\star>0\}\). Since \(\mathsf{w}_i=q_i^\star \mathsf{v}_i\), we have \(\mathsf{w}_i=0\) for all \(i\notin \mathcal{S}\).
Moreover, from Eq.~\eqref{eq:kkt_cs2}, for each \(i\in \mathcal{S}\), we have \(\mathsf{D}^\star_{ii}=e_i\).
Combining this with the key inequality in Eq.~\eqref{eq:qi_bound} yields 
\begin{align}
   \frac{1}{q_i^\star}-e_i \geq 0
\end{align} 
for all \(i\in \mathcal{S}\).
Now compute
\begin{align}
\mathsf{w}^{\T}\mathsf{D}^\star \mathsf{w} &= \mathsf{v}^{\T}(\mathsf{D}^\star)^{-1}\mathsf{v} \label{eq:calc_v_inv} \\
&= \mathsf{v}^{\T}\mathsf{Q}^\star \mathsf{v} \label{eq:calc_P_zero} \\
&= \sum_{i\in \mathcal{S}} q_i^\star |\mathsf{v}_i|^2 \label{eq:calc_sum_S} \\
&=\sum_{i\in \mathcal{S}}\frac{|\mathsf{w}_i|^2}{q_i^\star}, \label{eq:calc_sub_w}
\end{align}
where Eq.~\eqref{eq:calc_v_inv} follows from the identity in Eq.~\eqref{eq:v_equals_Dw}; Eq.~\eqref{eq:calc_P_zero} follows from the identity in Eq.~\eqref{eq:A_DQA} and \(\mathsf{v} \in \text{Range}(\mathsf{A})\);
Eq.~\eqref{eq:calc_sum_S} restricts the summation to the support set \(\mathcal{S}\);
and Eq.~\eqref{eq:calc_sub_w} utilizes the relation \(\mathsf{w}_i = q_i^\star \mathsf{v}_i\) for \(i \in \mathcal{S}\).
On the other hand, since \(\Em\) is diagonal and \(\mathsf{w}\) is supported on \(\mathcal{S}\), we get \(\mathsf{w}^{\T}\Em\mathsf{w}=\sum_{i\in \mathcal{S}} e_i |\mathsf{w}_i|^2\).
Consequently, we establish the following relation for all \(\mathsf{w} \in \mathcal{W}\):
\begin{align}\label{eq:wDEw_nonneg}
\mathsf{w}^{\T}(\mathsf{B}-\mathsf{A})\mathsf{w}&=\mathsf{w}^{\T}(\mathsf{D}^\star-\Em)\mathsf{w}\\
&=\sum_{i\in \mathcal{S}}|\mathsf{w}_i|^2\Bigl(\frac{1}{q_i^\star}-e_i\Bigr)\\
&\ge 0.
\end{align} 
We then prove that \(\mathsf{w}^{\T}\mathsf{A}\mathsf{w}>0\) for all non-zero \(\mathsf{w}\in\mathcal W\).
Since \(\mathsf{A}\succeq \zerov\), we have \(\mathsf{w}^{\T}\mathsf{A}\mathsf{w}=0\) if and only if \(\mathsf{A}\mathsf{w}=\zerov\).
We claim \(\mathcal W\cap \ker(\mathsf{A})=\{\mathsf{0}\}\). Indeed, for \(\mathsf{w}\in\mathcal W\), we write \(\mathsf{w}=\mathsf{Q}^\star \mathsf{A} \mathsf{u}=(\mathsf{D}^\star)^{-1}\mathsf{A}\mathsf{u}\) in view of Eq.~\eqref{eq:A_DQA}.
If \(\mathsf{A}\mathsf{w}=\zerov\), then \(\mathsf{A}(\mathsf{D}^\star)^{-1}\mathsf{A}\mathsf{u}=\zerov\). Left-multiplying by \(\mathsf{u}^{\T}\) yields \((\mathsf{A}\mathsf{u})^{\T}(\mathsf{D}^\star)^{-1}(\mathsf{A}\mathsf{u})=\zerov\).
Because \((\mathsf{D}^\star)^{-1}\succ \zerov\), this implies \(\mathsf{A}\mathsf{u}=\zerov\), and hence \(\mathsf{w}=(\mathsf{D}^\star)^{-1}\mathsf{A}\mathsf{u}=\zerov\).
Thus we conclude \(\mathcal W\cap \ker(\mathsf{A})=\{\mathsf{0}\}\), so \(\mathsf{w}^{\T}\mathsf{A}\mathsf{w}>0\) for all non-zero \(\mathsf{w}\in\mathcal W\).

We now establish the positivity of \(\mathsf{B}\) on \(\mathcal{W}\) and deduce the conclusion via a variational characterization.
Observing the decomposition \(\mathsf{B} = \mathsf{A} + (\Dsm-\Em)\), and considering any non-zero vector \(w \in \mathcal{W}\), we combine the result in Eq.~\eqref{eq:wDEw_nonneg} with the strict positivity \(\mathsf{w}^{\T}\mathsf{A}\mathsf{w} > 0\) to obtain
\begin{align}
\mathsf{w}^{\T}\mathsf{B}\mathsf{w} > 0.
\end{align}
This implies that \(\mathsf{B}\) is PD on the subspace \(\mathcal{W}\).
Finally, we invoke the standard variational characterization of the number of positive eigenvalues of a Hermitian matrix \(\mathsf{M}\), i.e., \(n_+(\mathsf{M}) \triangleq \max\bigl\{\dim(\mathcal X):\mathsf{x}^{\T}\mathsf{M}\mathsf{x}>0, \forall \, \mathsf{x}\in\mathcal X\setminus\{\mathsf{0}\}\bigr\}\).
Since we have identified a subspace \(\mathcal{W}\) with \(\dim(\mathcal{W}) = n_+(\mathsf{A})\) on which \(\mathsf{B}\) is strictly PD, we conclude that
\begin{align}
n_+(\mathsf{B}) \ge \dim(\mathcal{W}) = n_+(\mathsf{A}).
\end{align}
This establishes the second part of the inequality in Eq.~\eqref{eq:inertia relation}, thereby completing the proof of Theorem~\ref{thm:inertia relation}.

\section{
   Proof of Lemma~\ref{lemma:determinant} \label{proof:lemma:determinant}
}

We first observe that the matrix \(\Am \triangleq \Rm_2 - \Gammam \) can be decomposed as
\(\Am = \Bm + \rho_{2} \mathbf{1}\mathbf{1}^{\T}\), where
\begin{align}
\Bm = \begin{pmatrix}
      1 - \rho_{2} - \gamma_1 & \vv^{\T} \\
      \vv & \Cm
\end{pmatrix},
\end{align}
\(\vv = [\rho_1 - \rho_{2}, \cdots, \rho_1 - \rho_{2}]^{\T}\), and \(\Cm = \diag(1 - \rho_{2} - \gamma_2, \cdots, 1 - \rho_{2} - \gamma_N) \in \mathbb{R}^{(N-1) \times (N-1)}\). The matrix \(\Bm\) is an arrowhead matrix, and its determinant is given by
\begin{align} \label{eq:determinant Bm}
\det(\Bm) = \xi \prod_{i=2}^{N} (1 - \rho_{2} - \gamma_i),
\end{align}
where \(\xi = 1 - \rho_{2} - \gamma_1 - \sum_{i=2}^{N} \frac{(\rho_1 - \rho_{2})^2}{1 - \rho_{2} - \gamma_i}\). The inverse of \(\Bm\) can be expressed as
\begin{align} \label{eq:inverse Bm}
\Bm^{-1} =
\begin{pmatrix}
      0 & \\
      & \Cm^{-1}
\end{pmatrix} + \frac{1}{\xi} \uv \uv^{\T},
\end{align}
where \(\uv = \big[-1, \frac{\rho_1 - \rho_{2}}{1 - \rho_{2} - \gamma_2}, \cdots, \frac{\rho_1 - \rho_{2}}{1 - \rho_{2} - \gamma_N}\big]^{\T}\). Applying the matrix determinant lemma~\cite{harville1998matrix}, we have
\begin{align} \label{eq:Delta determinant origin}
\Delta(\rho_{2}, \rho_{1}, \Gammam) = \det(\Bm) \cdot \Big(1 + \rho_{2} \sum_{i,j} \left(\Bm^{-1}\right)_{i,j}\Big).
\end{align}
Substituting Eqs.~\eqref{eq:determinant Bm} and~\eqref{eq:inverse Bm} into Eq.~\eqref{eq:Delta determinant origin} and simplifying yields Eq.~\eqref{eq:Delta determinant}.

\section{
   Proof of Theorem~\ref{thm:2TC_extension} \label{proof:thm:2TC_extension}
}

For simplicity, we first define \(\mathsf{A}_n \triangleq \Rm_n - \Gammam\).  
We then multiply the \((n-1)\)-th row and the \((n-1)\)-th column of \(\mathsf{A}_n\) by the factor \(\frac{\rho_n}{\rho_{n-1}}\), and obtain
\(\det(\mathsf{A}_{n})=\left(\frac{\rho_{n-1}}{\rho_{n}}\right)^2\det(\mathsf{B}_{n})\),
where the resulting matrix admits the following \(2 \times 2\) block structure:
\begin{align} \label{eq:block matrix Bn}
\mathsf{B}_{n}=
\begin{pmatrix}
\mathsf{T} & \mathsf{S} \\
\mathsf{S}^{\T} & \mathsf{Q}
\end{pmatrix}.
\end{align}
Specifically, we have \(\mathsf{T} \in \mathbb{R}^{(n-2)\times(n-2)}\) and for \(i,j \in [n-2]\),
\begin{align} \label{eq:T matrix}
\mathsf{T}(i,j) =
\begin{cases}
a_i, & \text{if } i = j, \\[-6pt]
\rho_{\min\{i,j\}}, & \text{if } i \neq j,
\end{cases}
\end{align}
\begin{align} \label{eq:S matrix}
\mathsf{S} = \boldsymbol{\rho} \mathsf{v}^{\T} \in \mathbb{R}^{(n-2)\times(N-n+2)},
\end{align}
where \(\boldsymbol{\rho} = [\rho_1, \rho_2, \cdots, \rho_{n-2}]^{\T}\), \(\mathsf{v} = \big[\frac{\rho_n}{\rho_{n-1}}, 1, 1, \cdots, 1\big]^{\T}\) is of length \(N-n+2\), and for \(i,j \in [N-n+2]\),
\begin{align} \label{eq:Q matrix}
\mathsf{Q}(i,j) =
\begin{cases}
a_{n-1} \left( \dfrac{\rho_n}{\rho_{n-1}} \right)^2, & \text{if } i = j = 1, \\[-6pt]
a_{n-2+i}, & \text{if } i = j \neq 1, \\[-6pt]
\rho_n, & \text{if } i \neq j.
\end{cases}
\end{align}
When \(\mathsf{Q}\) is invertible, using the block matrix \(\mathsf{B}_n\) in Eq.~\eqref{eq:block matrix Bn}, we have
\begin{align} 
\det(\mathsf{B}_{n}) &= \det(\mathsf{Q}) \cdot \det\!\left(\mathsf{T} - \mathsf{S} \mathsf{Q}^{-1} \mathsf{S}^{T}\right) \label{eq:det B_n} \\
&= \det(\mathsf{Q}) \cdot \det\!\left(\mathsf{T} - (\mathsf{v}^{\T} \mathsf{Q}^{-1} \mathsf{v}) \boldsymbol{\rho} \boldsymbol{\rho}^{\T}\right), \label{eq:det B_n v1}
\end{align}
where Eq.~\eqref{eq:det B_n v1} is derived using the form of \(\mathsf{S}\) in Eq.~\eqref{eq:S matrix}.
Similarly, we have
\begin{align}
\det(\mathsf{A}_{n-1}) 
&= \det(\overline{\mathsf{Q}}) \cdot \det\!\left(\overline{\mathsf{T}} - (\overline{\mathsf{v}}^{\T} \overline{\mathsf{Q}}^{-1} \overline{\mathsf{v}}) \mathop{\overline{\boldsymbol{\rho}}} \mathop{\overline{\boldsymbol{\rho}}^{\T}}\right),  \label{eq:det A_n-1 v1}
\end{align}
where \(\overline{\boldsymbol{\rho}} = [\overline{\rho}_1, \overline{\rho}_2, \cdots, \overline{\rho}_{n-2}]^{\T}\), \(\overline{\mathsf{v}} = \left[1, 1, \cdots, 1\right]^{\T}\) is of length \(N-n+2\), and \(\overline{\mathsf{T}}\) is obtained by replacing all \(a_i\) and \(\rho_i\) in \(\mathsf{T}\) in Eq~\eqref{eq:T matrix} with \(\overline{a}_i\) and \(\overline{\rho}_i\), respectively. The structure of \(\overline{\mathsf{Q}}\) is the same as that of \(\mathsf{Q}\) in Eq~\eqref{eq:Q matrix}, with diagonal entries ranging from \(\overline{a}_{n-1}\) to \(\overline{a}_N\) and off-diagonal entries \(\overline{\rho}_{n-1}\).
Comparing Eqs.~\eqref{eq:det B_n v1} and~\eqref{eq:det A_n-1 v1}, we aim to achieve  
\begin{align} \label{eq:equivalence cond}
   \det\!\left(\overline{\mathsf{T}} - (\overline{\mathsf{v}}^{\T} \overline{\mathsf{Q}}^{-1} \overline{\mathsf{v}}) \mathop{\overline{\boldsymbol{\rho}}} \mathop{\overline{\boldsymbol{\rho}}^{\T}}\right)=\det\!\left(\mathsf{T} - (\mathsf{v}^{\T} \mathsf{Q}^{-1} \mathsf{v}) \boldsymbol{\rho} \boldsymbol{\rho}^{\T}\right),
\end{align} 
so that the desired result
\begin{align}
\det(\mathsf{A}_n) = \left( \frac{\rho_{n-1}}{\rho_n} \right)^2 \frac{\det(\mathsf{Q})}{\det(\overline{\mathsf{Q}})} \det(\mathsf{A}_{n-1}) \label{eq:A_n det final}
\end{align}
can be established.
A sufficient condition for Eq.~\eqref{eq:equivalence cond} to hold is  
\begin{numcases}{}
   \overline{\mathsf{T}} = \mathsf{T}, \label{eq:equivalence T n-2} \\
   \overline{\boldsymbol{\rho}} = \boldsymbol{\rho}, \label{eq:equivalence rho n-2} \\
   \overline{\mathsf{v}}^{\mathsf{T}} \overline{\mathsf{Q}}^{-1} \overline{\mathsf{v}} = \mathsf{v}^{\mathsf{T}} \mathsf{Q}^{-1} \mathsf{v}. \label{eq:trans equival cond}
\end{numcases}
Eqs.~\eqref{eq:equivalence T n-2} and~\eqref{eq:equivalence rho n-2} are equivalent to Eqs.~\eqref{eq:extension T rho cond1} and~\eqref{eq:extension T rho cond2}.
Note that \(\overline{\mathsf{Q}}\) and \(\mathsf{Q}\) have identical off-diagonals. Taking \(\overline{\mathsf{Q}} = \overline{\mathsf{D}} + \overline{\rho}_{n-1} \mathop{\overline{\mathsf{v}}} \mathop{\overline{\mathsf{v}}^{\T}}\) as an example with \(\overline{\mathsf{D}} = \diag(\overline{a}_{n-1}, \overline{a}_n, \cdots, \overline{a}_N) - \overline{\rho}_{n-1} \mathsf{I}\), where \(\mathsf{I}\) is the identity matrix, we can directly derive its determinant \(\det(\overline{\mathsf{Q}})\) in Eq.~\eqref{eq:extension Q bar det}
from Lemma~\ref{lemma:determinant}.
Using the Sherman-Morrison-Woodbury formula, \(\overline{\mathsf{Q}}^{-1}\) is computed as  
\begin{align}
   \overline{\mathsf{Q}}^{-1} 
   = \overline{\mathsf{D}}^{-1} - \frac{\overline{\rho}_{n-1} \overline{\mathsf{D}}^{-1} \mathop{\overline{\mathsf{v}}} \mathop{\overline{\mathsf{v}}^{\T}} \overline{\mathsf{D}}^{-1}}{1 + \overline{\rho}_{n-1} \overline{\mathsf{v}}^{\T} \overline{\mathsf{D}}^{-1} \overline{\mathsf{v}}},
\end{align} 
where
\begin{align}
   \overline{\mathsf{v}}^{\T} \overline{\mathsf{D}}^{-1} \overline{\mathsf{v}} &= \sum_{k=n-1}^N \frac{1}{\overline{a}_k-\overline{\rho}_{n-1}}, \\
(\overline{\mathsf{D}}^{-1} \mathop{\overline{\mathsf{v}}} \mathop{\overline{\mathsf{v}}^{\T}} \overline{\mathsf{D}}^{-1})_{ij} &= \frac{1}{\overline{a}_{n-2+i}-\overline{\rho}_{n-1}}\frac{1}{\overline{a}_{n-2+j}-\overline{\rho}_{n-1}}.
\end{align} 
Thus, \(\overline{\mathsf{Q}}^{-1}\) and \(\mathsf{Q}^{-1}\) are given in Eq.~\eqref{eq:Q bar inverse entry} and Eqs.~\eqref{eq:Q inverse entry1}-\eqref{eq:Q inverse entry4}, respectively.
Eq.~\eqref{eq:trans equival cond} can be written in entry-wise form as Eq.~\eqref{eq:trans equival cond entry}.
In summary, for given \(\rho_{n-1}\), \(\rho_n\), \(a_{n-1}, \cdots, a_N\), we determine the right-hand side of Eq.~\eqref{eq:trans equival cond entry} through \(\mathsf{Q}^{-1}\), and then determine \(\overline{\rho}_{n-1}\), \(\overline{a}_{n-1}, \cdots, \overline{a}_N\) through \(\overline{\mathsf{Q}}^{-1}\).
Strictly speaking, solving for the \(N - n + 3\) quantities from Eq.~\eqref{eq:trans equival cond entry} should yield infinitely many solutions, but since Eq.~\eqref{eq:A_n det final} also involves \(\det(\overline{\mathsf{Q}})\), we obtain a unique determinant of \(\mathsf{A}_n\).

\section{
   Proof of Theorem~\ref{thm:RE_psd_conditions} \label{proof:thm:RE_psd_conditions}
}

The necessary and sufficient condition for \(\Rm - \Em \succeq \zerov\) is that all principal minors of \(\Rm - \Em\) are non-negative, as stated by Sylvester's criterion~\cite{horn2012matrix}.
Define two index sets: \(\mathcal{I}_{\backslash \{1\}} \subset [N]\), which represents the indices of principal minors of \(\Rm - \Em\) that exclude the central component, and \(\mathcal{I}_{\cup \{1\}} \subseteq [N]\), which represents the indices of principal minors that include the central component. We analyze these two cases separately.

First, consider the case excluding the central component. By Lemma~\ref{lemma:determinant}, the determinant of \((\Rm - \Em)[\mathcal{I}_{\backslash \{1\}}]\) is given by
\begin{align}
\Delta(\rho_{2},\ev,\mathcal{I}_{\backslash \{1\}}) =& \Bigg( 1 + \sum_{i \in \mathcal{I}_{\backslash \{1\}}} \frac{\rho_{2}}{1 - \rho_{2} - e_i} \Bigg)
\cdot \prod_{i \in \mathcal{I}_{\backslash \{1\}}} (1 - \rho_{2} - e_i).
\end{align}
If \(e_2 + \rho_{2} \leq 1\), it follows that \(e_i + \rho_{2} \leq 1\) for all \(i \in [N] \setminus \{1\}\). Under this condition, \((\Rm - \Em)[\mathcal{I}_{\backslash \{1\}}]\) is PSD for any \(\mathcal{I}_{\backslash \{1\}}\). Conversely, if there exists some \(j\) such that \(e_j + \rho_{2} > 1\), the condition
\begin{align} \label{eq:RE_psd_uniqueness}
\frac{\rho_{2}}{\rho_{2} + e_j - 1} \geq 1 + \sum_{i \in \mathcal{I}_{\backslash \{1,j\}}} \frac{\rho_{2}}{1 - \rho_{2} - e_i}
\end{align}
should hold. To prove the uniqueness of \(j\), suppose there exists another \(k \neq j\) such that both \(j\) and \(k\) satisfy Eq.~\eqref{eq:RE_psd_uniqueness}. The \(2 \times 2\) principal minor corresponding to the subset \(\{j, k\}\) is
\begin{align} \label{eq:RE_psd_2minor}
\det\big((\Rm - \Em)[\{j, k\}]\big) = (1 - e_j)(1 - e_k) - \rho_{2}^2.
\end{align}
Since \(e_j + \rho_{2} > 1\) and \(e_k + \rho_{2} > 1\), it follows that \(1 - e_j < \rho_{2}\) and \(1 - e_k < \rho_{2}\). As a result, the minor in Eq.~\eqref{eq:RE_psd_2minor} is non-positive. This contradicts the positive semidefiniteness of \(\Rm - \Em\). Therefore, \(j\) should be unique.
Moreover, since \(\{e_i\}_{i=2}^{N}\) are in non-increasing order, at most only \(e_2 > 1-\rho_{2}\) satisfies the condition in Eq.~\eqref{eq:RE_psd_uniqueness}.
For any \(\mathcal{I}_{\backslash \{1,j\}} \subset [N]\), this condition holds, and rewriting it leads to Eqs.~\eqref{eq:e3_condition} and~\eqref{eq:e2_condition}.

Next, consider the case including the central component. Based on Eq.~\eqref{eq:e3_condition}, the positive semidefiniteness of \((\Rm - \Em)[\mathcal{I}_{\cup \{1\}}]\) is equivalently expressed as
\begin{align} \label{eq:RE_psd_exclude}
   \!\!\! \Bigg( 1 - e_1 + \sum_{i \in \mathcal{I}_{\backslash \{1\}}} \frac{\rho_{2} - \rho_1^2 - \rho_{2} e_1}{1 - \rho_{2} - e_i} \Bigg) (1 - \rho_{2} - e_2) \geq 0.
\end{align}
By replacing the set \(\mathcal{I}_{\backslash \{1\}}\) with \([N]\) and performing simplifications, Eq.~\eqref{eq:RE_psd_exclude} reduces to Eq.~\eqref{eq:e1_condition}.

Finally, due to the Hadamard compression rate in Eq.~\eqref{eq:hadamard bound value} and by applying Lemma~\ref{lemma:determinant} to derive \(\det(\Rm)\), we obtain the explicit RDF in Eq.~\eqref{eq:rd sdc1}.

\section{
   Proof of Theorem~\ref{thm:sdc exponential asymp} \label{proof:thm:sdc exponential asymp}
}

The \(N\)-fold integral in Eq.~\eqref{eq:SDC prob} is obtained from Eqs.~\eqref{eq:e3_condition}-\eqref{eq:e1_condition} in Theorem~\ref{thm:RE_psd_conditions}, where the upper bound of \(e_2\) in Eq.~\eqref{eq:e2_condition} is improved by Eq.~\eqref{eq:e1_condition} to prevent the divergence of the integral at boundary points.   
The term \(N-1\) arises from the ordering of the constraints \(\{e_{i}\}_{i=2}^{N}\).

   We first analyze the inner double integral and then proceed to evaluate the outer \((N-2)\)-fold integral. 
      After computing the inner double integral, we obtain
   \begin{align}
      &P(\mathcal{A}_0) \notag \\
      ={}& (N-1)\underbrace{\int_{0}^{1-\rho_{2}}  \cdots \int_{0}^{1-\rho_{2}}}_{N-2 \text{ times}}  \left(1-\frac{\rho_{1}^{2}}{\rho_{2}}+\frac{\rho_{1}^{2}}{\rho_{2}\left(1+\rho_{2}\chi_{3}\right)}\right)\left(1-\rho_{2}-\frac{\rho_{1}^{2}-\rho_{2}}{1-\left(\rho_{1}^{2}-\rho_{2}\right)\chi_{3}}-e_{n}\right)  \, \dd e_{3} \cdots \dd e_{N} \notag \\
      &+ (N-1)\underbrace{\int_{0}^{1-\rho_{2}}  \cdots \int_{0}^{1-\rho_{2}}}_{N-2 \text{ times}}  \frac{\rho_{1}^{2}}{\left(1+\rho_{2}\chi_{3}\right)^{2}}\log\!\Bigg(\frac{\left(1+\rho_{2}\chi_{3}\right)\Big(\frac{\rho_{1}^{2}-\rho_{2}}{1-\left(\rho_{1}^{2}-\rho_{2}\right)\chi_{3}}\Big)+\rho_{2}}{\left(1+\rho_{2}\chi_{3}\right)\left(1-\rho_{2}-e_{n}\right)+\rho_{2}}\Bigg)  \, \dd e_{3} \cdots \dd e_{N}. \label{eq:SDC prob N-2} \raisetag{16pt}
   \end{align}
   Herein we have two decoupled variables, \(\chi_3\) and \(e_n = \max\{e_3, \cdots, e_N\}\), in the sense that for any given \(e_n\) with \(3 \leq n \leq N\), there exists a set \(\{e_i\}_{i=3}^N\) such that \(\chi_3 = \sum_{i=3}^N \frac{1}{1 - \rho_{2} - e_i}\) is satisfied.
   A choice could be \(e_i=1 - \rho_{2} - a_0 q^i\) for each \(i \neq n\), where \(a_0 > 0\) and \(q \geq 1\). Thus, we obtain  
   \(e_{n} = 1 - \rho_{2} - \left(\chi_{3} - \frac{1}{a_{0}} \left( \frac{q^{-2} - q^{-N}}{q - 1} - \frac{1}{q^{n}} \right)\right)^{-1}\).
To satisfy \(e_n = \max\{e_3, \cdots, e_N\}\), i.e., \(e_n + a_0 q^3 \leq 1 - \rho_{2}\), we observe that \(e_n + a_0 q^3\) admits the following upper bound
\( 
   e_n + a_0 q^3 \leq 1-\rho_{2}-\left(\chi_{3}-\frac{1}{a_{0}\left(q-1\right)}\right)^{-1}+a_{0}q 
\). 
Therefore, it suffices to have  
\(q \geq \frac{1}{2} \Big(\frac{a_0^2 \chi_3^2 + 4}{a_0^2 \chi_3^2}\Big)^{\frac{1}{2}} + \frac{a_0 \chi_3 + 2}{2 a_0 \chi_3}\), 
and \(\{e_i\}_{i=3}^N\) is constructed such that for any \(e_{n}\), \(\chi_3 = \sum_{i=3}^N \frac{1}{1 - \rho_{2} - e_i}\) holds. 

We are now ready to handle the \((N-2)\)-fold integral in Eq.~\eqref{eq:SDC prob N-2}. Since \(\chi_3 = \sum_{i=3}^{N} \frac{1}{1 - \rho_{2} - e_i}\) and \(e_n = \max\{e_3, \cdots, e_N\}\) are decoupled, we can separately analyze their impact on \(P(\mathcal{A}_0)\). First, we observe that the integrand in Eq.~\eqref{eq:SDC prob N-2} with respect to \((e_3, \cdots, e_N)\) is monotonically decreasing in \(\chi_3 \geq \frac{N-2}{1 - \rho_{2}}\).
Thus, \(P(\mathcal{A}_0)\) can be upper-bounded as
\begin{align}
   P(\mathcal{A}_0) \leq & \underbrace{\int_{0}^{1-\rho_{2}} \cdots \int_{0}^{1-\rho_{2}}}_{N-2 \text{ times}} M(e_n) \, \dd e_3 \cdots \dd e_N \\  
   ={}& (1-\rho_{2})^{N-2} \mathbb{E}[M(e_n)] \label{eq:SDC prob N-2 Ev1} \\  
   ={}& (N-2)\int_{0}^{1-\rho_{2}} M(e_n) e_{n}^{N-3} \, \dd e_n, \label{eq:SDC prob N-2 Ev2}  
\end{align} 
where the integrand is
\begin{align}
M(e_n) = {}& (N-1)  \left(1-\frac{\left(N-2\right)\rho_{1}^{2}}{1+\left(N-3\right)\rho_{2}}\right)\left(1-\rho_{2}-\frac{\rho_{1}^{2}-\rho_{2}}{1-\left(\rho_{1}^{2}-\rho_{2}\right)\frac{N-2}{1-\rho_{2}}}-e_{n}\right) \notag \\
      &+ (N-1)  \frac{\rho_{1}^{2}}{\left(1+\rho_{2}\frac{N-2}{1-\rho_{2}}\right)^{2}}\log\!\Biggg(\frac{\frac{\rho_{1}^{2}-\rho_{2}}{1-\left(\rho_{1}^{2}-\rho_{2}\right)\frac{N-2}{1-\rho_{2}}}+\frac{\rho_{2}}{1+\rho_{2}\frac{N-2}{1-\rho_{2}}}}{\left(1-\rho_{2}-e_{n}\right)+\frac{\rho_{2}}{1+\rho_{2}\frac{N-2}{1-\rho_{2}}}}\Biggg). \label{eq:SDC prob N-2 upper integrand}
\end{align} 
The step from Eq.~\eqref{eq:SDC prob N-2 Ev1} to Eq.~\eqref{eq:SDC prob N-2 Ev2} follows because \(P(e_n < t) = \prod_{i=3}^N P(e_i < t) = \left( \frac{t}{1-\rho_{2}} \right)^{N-2}\) and the probability density function (PDF) of \(e_n\) is \(p_{e_n} = \frac{(N-2)e_{n}^{N-3}}{(1-\rho_{2})^{N-2}}\).
   The single-variable integral in Eq.~\eqref{eq:SDC prob N-2 Ev2} corresponds to the upper bound in Eq.~\eqref{eq:SDC prob N-2 upper final} with \(M(e_n)\) in Eq.~\eqref{eq:SDC prob N-2 upper integrand}.
Proceeding from Eq.~\eqref{eq:SDC prob N-2 upper final}, we note that
\begin{align}
   &\int_{0}^{1-\rho_{2}}\log\left(1-\rho_{2}-e_{n}+b(1-\rho_{2})\right)e_{n}^{N-3} \, \dd e_{n} \notag \\
   ={}&\left(1-\rho_{2}\right)^{N-2}\Bigg[\frac{\log\left(1-\rho_{2}\right)}{N-2} 
+\int_{0}^{1}\log\left(1-x+b\right)x^{N-3} \, \dd x \Bigg] \label{eq:SDC prob N-2 Harmo} \\
   >{}& \frac{\left(1-\rho_{2}\right)^{N-2}}{N-2}\left(\log\left(1-\rho_{2}\right)-H_{N-2}\right), \label{eq:SDC prob N-2 Harmo v1}
\end{align} 
where \(b = \frac{\rho_{2}}{1 + (N - 3)\rho_{2}} > 0\). Eq.~\eqref{eq:SDC prob N-2 Harmo} is obtained through the variable substitution \(e_{n} = (1 - \rho_{2})x\). 
Eq.~\eqref{eq:SDC prob N-2 Harmo} can be lower-bounded by Eq.~\eqref{eq:SDC prob N-2 Harmo v1} due to the integral representation of the \(N\)-th harmonic number \(H_{N}=\sum_{i=1}^{N}1/i\)~\cite{vualean2019almost}.
From Eq.~\eqref{eq:SDC prob N-2 Harmo v1}, an alternative explicit analytical upper bound for \(P(\mathcal{A}_0)\) involving \(H_{N}\) is given by
   \begin{align}
   P(\mathcal{A}_0)<&\left(1-\rho_{2}\right)^{N-2}\Bigg[\left(1-\frac{\left(N-2\right)\rho_{1}^{2}}{1+\left(N-3\right)\rho_{2}}\right)\left(1-\rho_{2}-\frac{(N-1)\left(\rho_{1}^{2}-\rho_{2}\right)}{1-\left(\rho_{1}^{2}-\rho_{2}\right)\frac{N-2}{1-\rho_{2}}}\right) +\frac{\left(N-1\right)\rho_{1}^{2}}{\left(1+\rho_{2}\frac{N-2}{1-\rho_{2}}\right)^{2}} \notag \\
   & \cdot \left(\log\!\left(\frac{\rho_{1}^{2}-\rho_{2}}{1-\left(\rho_{1}^{2}-\rho_{2}\right)\frac{N-2}{1-\rho_{2}}}+\frac{\rho_{2}\left(1-\rho_{2}\right)}{1+\left(N-3\right)\rho_{2}}\right)+H_{N-2}-\log\left(1-\rho_{2}\right)\right)\Bigg]. \label{eq:SDC prob N-2 upper final Harmo}
   \end{align}
The harmonic number is asymptotically given by \(H_N = \log N + \gamma + O(1/N)\), where \(\gamma\) is the Euler-Mascheroni constant~\cite{havil2009gamma}. 
Thus, \(P(\mathcal{A}_0)\) can be asymptotically approximated in Eq.~\eqref{eq:SDC prob N-2 upper final Harmo asym}.

\section{
   Proof of Theorem~\ref{thm:max rho0} \label{proof:thm:max rho0}
}

We define two functions \(f(\rho_{2}) = \frac{\rho_{2}}{\rho_{2} + e_2 - 1}\) and \(g(\rho_{2}) = 1 + \sum_{i=3}^{N} \frac{\rho_{2}}{1 - \rho_{2} - e_i}\). The function \(f(\rho_{2})\) is strictly decreasing, while \(g(\rho_{2})\) is piecewise increasing in the intervals \((1 - e_i, 1 - e_{i+1})\) for \(2 \leq i \leq N-1\).  
We note that \(\rho_{2}^m\) is the solution to \(f(\rho_{2}) = g(\rho_{2})\) when \(\rho_{2} \in (1 - e_2, 1 - e_3)\).

To establish a lower bound, we define a strictly increasing function \(g_{1}(\rho_{2}) = 1 + \frac{(N-2)\rho_{2}}{1 - \rho_{2} - e_3} \geq g(\rho_{2})\) for \(\rho_{2} \in (0, 1 - e_3)\), with equality if and only if \(e_i = e_3\) for all \(i > 3\).
Due to the continuity of \(f(\rho_{2})\), there exists a lower bound \(\rho_{2}^{\ell} \in (1 - e_2, 1 - e_3)\) for \(\rho_{2}^m\) such that \(f(\rho_{2}^{\ell}) \geq f(\rho_{2}^m)\). Since \(\rho_{2}^{\ell}\) is the unique positive root of the equation \(g_{1}(\rho_{2}) = f(\rho_{2})\), we have
\begin{align}
\rho_{2}^{\ell} = (1 - e_2) \frac{(N-2) + \sqrt{(N-2)^2 + 4(N-1)\frac{1-e_3}{1-e_2}}}{2(N-1)}.
\end{align}
Through further derivation and simplifications, \(\rho_{2}^{\ell}\) can be more concisely lower-bounded by the left-hand side of Eq.~\eqref{eq:max rho0 bounds}, with equality if and only if \(N = 2\).

For the upper bound of \(\rho_{2}\), we first derive a lower bound for \(g(\rho_{2})\) using Jensen's inequality, which is given by 
\(g_2(\rho_{2}) = 1 + (N-2)\rho_{2}\Big(1 - \rho_{2} - \frac{\sum_{i=3}^{N} e_i}{N-2}\Big)^{-1}\) 
for \(\rho_{2} \in (0, 1 - e_3)\). To determine \(\rho_{2}^u\), we solve the equation \(g_2(\rho_{2}) = f(\rho_{2})\). This yields
\begin{align}
\rho_{2}^u = (1 - e_2) \frac{(N-3) + \sqrt{(N-3)^2 + 4(N-2)\frac{1 - \overline{e}_3}{1 - e_2}}}{2(N-2)},
\end{align}
where \(1 - e_{2} \leq \rho_{2}^{m} \leq \rho_{2}^{u} \leq 1 - e_{3}\) and \(\overline{e}_3 = \frac{1}{N-2}\sum_{i=3}^{N} e_i\). Furthermore, since 
\(N-1 \leq \sqrt{(N-3)^2 + 4(N-2)\frac{1 - \overline{e}_3}{1 - e_2}}\) 
holds for \(N \geq 2\), we have a more streamlined upper bound in the right-hand side of Eq.~\eqref{eq:max rho0 bounds}.
It is straightforward to derive the asymptotic approximation in Eq.~\eqref{eq:rho0m theta} from the lower and upper bounds of \(\rho_{2}^m\) in Eq.~\eqref{eq:max rho0 bounds}.

\section{
   Proof of Theorem~\ref{thm:dsm model} \label{proof:thm:dsm model}
}

Fix \(N\ge 3\) and consider a 2-TC covariance matrix \(\Rm\in\mathcal K_2\) under 2-TD distortion constraints, i.e.,
\(\ev=[e_1,e_2,\cdots,e_2]^{\T}\) of length \(N\).
To establish the structure of the optimal distortion matrix \(\Dsm\), we start from an arbitrary distortion matrix \(\Dm\) in the feasible set of the optimization problem in Lemma~\ref{lemma:maxdet}.
Without loss of generality, we describe \(\Dm\) entry-wise as
\begin{align}
\Dm_{ij} \triangleq
\begin{cases}
d_i, & i=j, \\[-6pt]
\alpha_{1j}, & i=1,\ j\ge 2, \\[-6pt]
\alpha_{1i}, & j=1,\ i\ge 2, \\[-6pt]
\beta_{ij}, & 2\le i<j, \\[-6pt]
\beta_{ji}, & 2\le j<i,
\end{cases}
\label{eq:general_D}
\end{align}
with \(d_i\in (0,e_i]\).
To facilitate the proof, we first introduce permutation matrices.
A matrix \(\Pim\in\{0,1\}^{N\times N}\) is a permutation matrix if each row and column contains exactly one entry equal to \(1\).
Equivalently, there exists a permutation \(\pi\) of \(\{1,\cdots,N\}\) such that \(\Pim_{i,\pi(i)}=1\) for \(i\in[N]\) and \(\Pim_{ij}=0\) otherwise.
Every permutation matrix is orthogonal, i.e., \(\Pim^{\T}\Pim=\Pim\Pim^{\T}=\mathsf{I}\) and \(\Pim^{-1}=\Pim^{\T}\).

To capture the exchangeability of the \(N-1\) peripheral components, let \(\mathsf{b}_1=[1,0,\cdots,0]^{\T}\) and define \(\mathcal{P}\triangleq\bigl\{\Pim\in\{0,1\}^{N\times N}:\Pim \text{ is a permutation matrix and } \Pim \mathsf{b}_1=\mathsf{b}_1\bigr\}\).
That is, for any vector \(\mathsf{x}\in\mathbb{R}^N\) and any \(\mathsf{P}\in\mathcal{P}\), we have \([\mathsf{P}\mathsf{x}]_1 = x_1\), while \([\mathsf{P}\mathsf{x}]_i = \mathsf{x}_{\pi(i)}\) for \(i\ge 2\), where \(\pi\) is a permutation of \(\{2,\cdots,N\}\).
For any permutation matrix \(\mathsf{P}\in\mathcal{P}\), define
\begin{align}
\Dm_{\mathsf{P}} \triangleq \mathsf{P} \Dm \mathsf{P}^{\T}.
\end{align}
This operation corresponds to simultaneously permuting the rows and columns of \(\Dm\) according to \(\mathsf{P}\), i.e., relabeling the peripheral components while keeping the central component fixed.
We next show that \(\Dm_{\mathsf{P}}\) remains feasible for all \(\mathsf{P}\in\mathcal{P}\), i.e., it satisfies the constraints in Eqs.~\eqref{eq:maxdet c1}-\eqref{eq:maxdet c2}.
Since \(\mathsf{P}\) is an orthogonal matrix, we have \(\Dm\succ\zerov\) if and only if \(\Dm_{\mathsf{P}}\succ\zerov\).
Moreover, the constraint \(\Dm\preceq\Rm\) implies \(\Dm_{\mathsf{P}} \preceq \mathsf{P} \Rm \mathsf{P}^{\T}\), because congruence transformations by an invertible matrix preserve positive semidefiniteness.
Under the 2-TC covariance model, \(\Rm\in\mathcal{K}_2\) is invariant under any permutation of the peripheral indices.
That is, for all \(\mathsf{P}\in\mathcal{P}\), we have \(\mathsf{P} \Rm \mathsf{P}^{\T} = \Rm\).
Consequently, since \(\Dm \preceq \Rm\), it follows that \(\Dm_{\mathsf{P}} \preceq \Rm\).
Moreover, recall that the distortion constraint is given by \(\Dm_{11}\le e_1\) and \(\max_{i\ge 2}\Dm_{ii}\le e_2\).
Since any permutation matrix \(\mathsf{P}\in\mathcal{P}\) only permutes the remaining \(N-1\) peripheral coordinates, the diagonal entries of \(\Dm_{\mathsf{P}}\) satisfy the same individual distortion constraints.
Next, we consider the objective value in Eq.~\eqref{eq:maxdet obj} under permutation.
We have \(\det(\Dm_{\mathsf{P}}) = \det(\mathsf{P})^2 \det(\Dm) = \det(\Dm)\) since any permutation matrix \(\mathsf{P}\) is orthogonal with \(\det(\mathsf{P})=\pm 1\).
Now define the averaged distortion matrix
\begin{align}
\overline{\Dm}
\triangleq \frac{1}{|\mathcal{P}|}\sum_{\mathsf{P}\in\mathcal{P}} \Dm_{\mathsf{P}}.
\label{eq:D_bar_def}
\end{align}
Moreover, the averaged matrix in Eq.~\eqref{eq:D_bar_def} is invariant under permutation \(\mathsf{P}\in\mathcal{P}\), i.e., \(\mathsf{P}\overline{\Dm}\mathsf{P}^{\T}=\overline{\Dm}\).
Specifically, for any \(\mathsf{P}_0\in\mathcal{P}\), we have
\begin{align}
\mathsf{P}_0\overline{\Dm}\mathsf{P}_0^{\T}
&= \frac{1}{|\mathcal{P}|}\sum_{\mathsf{P}\in\mathcal{P}}
\mathsf{P}_0 \mathsf{P}\Dm \mathsf{P}^{\T}\mathsf{P}_0^{\T} \\
&= \frac{1}{|\mathcal{P}|}\sum_{\mathsf{P}'\in\mathcal{P}}
\mathsf{P}'\Dm (\mathsf{P}')^{\T} \\
&= \overline{\Dm},
\end{align}
where we used the fact that \(\mathcal{P}\) is a group under matrix multiplication, i.e., \(\{\mathsf{P}_0\mathsf{P}:\mathsf{P}\in\mathcal{P}\}=\mathcal{P}\).
From the feasibility of each \(\Dm_{\mathsf{P}}\) and the convexity of the constraint set,
it follows that \(\overline{\Dm}\) is also feasible.
More importantly, since \(\log\det(\cdot)\) is concave over the PD cone,
Jensen's inequality yields
\begin{align}
\log\det(\overline{\Dm})
&\ge \frac{1}{|\mathcal{P}|}\sum_{\mathsf{P}\in\mathcal{P}} \log\det(\Dm_{\mathsf{P}}) \\
&= \log\det(\Dm),
\end{align}
where the equality holds if and only if \(\Dm_{\mathsf{P}}=\Dm\) for all \(\mathsf{P}\in\mathcal{P}\).
Consequently, \(\overline{\Dm}\) achieves a strictly larger \(\log\det(\cdot)\) value than \(\Dm\) unless \(\Dm\) is already \(\mathcal{P}\)-invariant.

We now prove that an optimal solution must be \(\mathcal{P}\)-invariant.
Let \(\Dsm\) be an optimal solution of the problem in Lemma~\ref{lemma:maxdet}, and consider its group average \(\overline{\Dsm}\triangleq \frac{1}{|\mathcal{P}|}\sum_{\mathsf{P}\in\mathcal{P}} (\Dsm)_{\mathsf{P}}\). 
As shown above, \(\overline{\Dsm}\) is feasible.
Moreover, by the strict concavity of the log-determinant function, if \(\Dsm\) is not invariant under \(\mathcal{P}\), then \(\log\det(\overline{\Dsm})>\log\det(\Dsm)\), which contradicts the optimality of \(\Dsm\).
Therefore, there exists an optimal distortion matrix satisfying \(\mathsf{P}\Dsm\mathsf{P}^{\T}=\Dsm\) for all \(\mathsf{P}\in\mathcal{P}\), which directly implies that \(\Dsm\) admits the block structure in Eq.~\eqref{eq:dsm model}.




\section{
   Proof of Theorem~\ref{thm:seven regions main} \label{proof:thm:seven regions main}
}

For the optimization problem in Eq.~\eqref{problem:cvx2}, we construct the Lagrangian
\begin{align}
   \mathcal{L} ={}& \log\!\left[(\delta_2 - \beta)^{N-2} \Xi\right]
+ \widetilde{\lambda}_1 (e_1 - \delta_1)
+ \widetilde{\lambda}_2 (e_{2} - \delta_2) 
+ \widetilde{\lambda}_3 [(1 - \rho_{2}) - (\delta_2 - \beta)]
+ \widetilde{\lambda}_4 \Upsilon, \label{eq:problem:cvx2 Lagrange}
\end{align}
where the Lagrange multipliers \(\widetilde{\lambda}_1, \widetilde{\lambda}_2, \widetilde{\lambda}_3, \widetilde{\lambda}_4 \geq 0\).
The stationarity conditions are obtained by differentiating the Lagrangian with respect to each variable and setting the derivatives to zero.
By substituting \(\lambda_i = \Xi \widetilde{\lambda}_i\) with \(\Xi > 0\), we obtain the simplified forms of the Lagrange multipliers,
\begin{align}
   \lambda_1 &= \frac{\partial \Xi}{\partial \delta_1} - \frac{\partial \Upsilon}{\partial \delta_1} \frac{\frac{\partial \Xi}{\partial \alpha}}{\frac{\partial \Upsilon}{\partial \alpha}}, \label{eq:lambda1} \\
   \lambda_2 &= \frac{\partial \Xi}{\partial \beta} + \frac{\partial \Xi}{\partial \delta_2} -\left(\frac{\partial \Upsilon}{\partial \beta}+\frac{\partial \Upsilon}{\partial \delta_2}\right)\frac{ \frac{\partial \Xi}{\partial \alpha}}{\frac{\partial \Upsilon}{\partial \alpha}}, \label{eq:lambda2} \\
   \lambda_3 &= \frac{\partial \Upsilon}{\partial \beta}\frac{ \frac{\partial \Xi}{\partial \alpha}}{\frac{\partial \Upsilon}{\partial \alpha}} + (N-2) \frac{\Xi}{\delta_2 - \beta} - \frac{\partial \Xi}{\partial \beta}, \label{eq:lambda3} \\
   \lambda_4 &= -\frac{\frac{\partial \Xi}{\partial \alpha}}{\frac{\partial \Upsilon}{\partial \alpha}}. \label{eq:lambda4}
\end{align}
The following proposition provides the conditions that these multipliers should satisfy.
\begin{proposition} \label{prop:lambda24 nonzero simul}
In the non-SDC region, 
the Lagrange multipliers satisfy \(\lambda_i^2 + \lambda_j^2 > 0\) 
for every \(i \in \{2,4\}\) and any \(1 \leq j \leq 4\) with \(j \neq i\).
\end{proposition}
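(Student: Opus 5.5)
The plan is to work directly with the explicit multiplier formulas \eqref{eq:lambda1}--\eqref{eq:lambda4}, together with Theorem~\ref{thm:inertia relation}. I first record the elementary partial derivatives of $\Xi=\delta_1[\delta_2+(N-2)\beta]-(N-1)\alpha^2$ and of $\Upsilon$:
\begin{align*}
&\partial_{\alpha}\Xi=-2(N-1)\alpha, && \partial_{\delta_1}\Xi=\delta_2+(N-2)\beta, && \partial_{\delta_2}\Xi+\partial_{\beta}\Xi=(N-1)\delta_1, && \partial_{\beta}\Xi=(N-2)\delta_1,\\
&\partial_{\alpha}\Upsilon=2(N-1)(\rho_1-\alpha), && \partial_{\beta}\Upsilon=-(N-2)(1-\delta_1), && \partial_{\delta_2}\Upsilon+\partial_{\beta}\Upsilon=-(N-1)(1-\delta_1).
\end{align*}
With these, \eqref{eq:lambda2} and \eqref{eq:lambda3} become
\begin{align*}
\lambda_2=(N-1)\bigl[\delta_1-\lambda_4(1-\delta_1)\bigr],\qquad
\lambda_3=(N-2)\Bigl[\lambda_4(1-\delta_1)-\delta_1+\frac{\Xi}{\delta_2-\beta}\Bigr].
\end{align*}
Throughout, I use $\Xi>0$ and $\delta_2-\beta>0$ (from $\Dm\succ\zerov$ and \eqref{eq:dsm det}). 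I also use $0<\delta_1<1$, since the RDF is non-trivial. If $\partial_\alpha\Upsilon=0$, so that \eqref{eq:lambda4} degenerates, I would return to the raw stationarity equation in $\alpha$. That equation forces $\alpha=0$ there as well, and the same computations go through.

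\textbf{Case $i=2$.} Suppose $\lambda_2=0$. The formula for $\lambda_2$ then gives $\lambda_4=\delta_1/(1-\delta_1)>0$, which settles $j=4$. Substituting this value into $\lambda_3$, the first two terms cancel and I obtain $\lambda_3=(N-2)\Xi/(\delta_2-\beta)>0$, which settles $j=3$. For $j=1$, suppose in addition $\lambda_1=0$. Then both the central and the peripheral distortion constraints are slack, so $\rank(\ev-\dsv)=N$. The first bound of Theorem~\ref{thm:inertia relation} then gives $n_+(\Kxs)=0$, i.e.\ $\Dsm=\Rm$. This means a zero rate, contradicting the non-trivial RDF in the non-SDC region (where $\ev\neq\mathbf 1$ is implied).

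\textbf{Case $i=4$.} Suppose $\lambda_4=0$. By \eqref{eq:lambda4} this means $\partial_\alpha\Xi=0$, so $\alpha=0$. Then $\Xi=\delta_1[\delta_2+(N-2)\beta]>0$, and the formulas give:
\begin{itemize}
\item $\lambda_2=(N-1)\delta_1>0$;
\item $\lambda_1=\delta_2+(N-2)\beta>0$;
\item $\lambda_3=(N-2)\delta_1\bigl[\tfrac{\delta_2+(N-2)\beta}{\delta_2-\beta}-1\bigr]=\frac{(N-1)(N-2)\delta_1\beta}{\delta_2-\beta}$.
\end{itemize}
So the only remaining possibility is $\lambda_3=0$, which forces $\beta=0$. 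In that case $\Dsm$ is diagonal. Since $\lambda_1,\lambda_2>0$, complementary slackness gives $\delta_1=e_1$ and $\delta_2=e_2$, so $\Dsm=\Em$. Primal feasibility then yields $\Em\preceq\Rm$, i.e.\ the SDC holds, a contradiction.

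The main obstacle I expect is bookkeeping rather than conceptual. I need to ensure that the rescaling $\lambda_i=\Xi\widetilde\lambda_i$ and the elimination of $\widetilde\lambda_4$ via the $\alpha$-stationarity equation are legitimate. In particular, the sign conventions in \eqref{eq:lambda3} must match the derivation, and the degenerate case $\alpha=\rho_1$ must be handled separately. I would verify these first by rederiving the stationarity equations from \eqref{eq:problem:cvx2 Lagrange} before running the case analysis above.
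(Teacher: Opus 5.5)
Your proposal is correct except for one step, the subcase $i=2$, $j=1$. The rest follows the paper's route: the case $\lambda_4=0$ is handled the same way, and so are the subcases $j=3,4$ when $\lambda_2=0$. Your use of $\lambda_4(1-\delta_1)=\delta_1$ to get $\lambda_3=(N-2)\Xi/(\delta_2-\beta)>0$ directly is cleaner than the paper's ``after eliminating $\beta$''.

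The gap: you argue that $\lambda_1=\lambda_2=0$ means both distortion constraints are slack, so $\rank(\ev-\dsv)=N$, and then invoke Theorem~\ref{thm:inertia relation}. Complementary slackness only gives ``multiplier positive $\Rightarrow$ constraint active''. A zero multiplier is fully compatible with $\delta_1=e_1$ and $\delta_2=e_2$, as the paper itself cautions in a footnote. In that degenerate situation $\rank(\ev-\dsv)=0$, the first bound of Theorem~\ref{thm:inertia relation} only gives $n_+(\Kxs)\le N$, and there is no contradiction. So the scenario you most need to exclude, zero multipliers on active constraints, is exactly the one your argument does not reach.

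The fix uses quantities you already have. From $\lambda_2=0$ you get $\lambda_4=\delta_1/(1-\delta_1)>0$. Comparing with $\lambda_4=\alpha/(\rho_1-\alpha)$ gives $\alpha=\delta_1\rho_1$. Since $\lambda_4>0$ (equivalently $\widetilde{\lambda}_4>0$), complementary slackness forces $\Upsilon=0$, which reads
\[
\delta_2+(N-2)\beta = 1+(N-2)\rho_2-(N-1)\rho_1^2(1-\delta_1).
\]
Substitute this into
\[
\lambda_1=\partial_{\delta_1}\Xi+\lambda_4\,\partial_{\delta_1}\Upsilon=\frac{\delta_2+(N-2)\beta-\delta_1-(N-2)\delta_1\rho_2}{1-\delta_1}.
\]
This gives $\lambda_1=\nu>0$ whether or not the constraints are active, and it is how the paper closes this case.

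A minor side point: your parenthetical claim that the non-SDC region implies $\ev\neq\mathbf{1}$ is false. $\Rm-\mathsf{I}$ has zero diagonal and nonzero off-diagonal entries, so $\ev=\mathbf{1}$ is non-SDC whenever some correlation is nonzero. The trivial zero-rate case has to be excluded by assumption, as the paper does in Appendix~\ref{proof:thm:det(Rm-Dsm)}.
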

\begin{IEEEproof}
   We separately analyze the cases \(\lambda_4 = 0\) and \(\lambda_2 = 0\). We then prove that the other multipliers cannot be zero simultaneously with one of these two multipliers.
When \(\lambda_4 = 0\), i.e., \(\frac{\partial \Xi}{\partial \alpha} = 0\), it follows that \(\alpha = 0\). Under this condition,
\begin{enumerate}
   \item For \(\lambda_1\): Since
\(
   \lambda_1 = \frac{\partial \Xi}{\partial \delta_1} = \frac{\Xi}{\delta_1}
\)
   and \(\Dsm \succ \zerov\) hold, it follows that \(\lambda_1 > 0\).
   \item For \(\lambda_2\): Since \(\delta_1 > 0\) holds,
   it follows that
\(
   \lambda_2 = \frac{\partial \Xi}{\partial \beta} + \frac{\partial \Xi}{\partial \delta_2} = (N-1) \delta_1 > 0
\).
   \item For \(\lambda_3\): We have
\begin{align}
\!\!\!\!\!\! \lambda_3 &= (N-2) \frac{\Xi}{\delta_2 - \beta} - \frac{\partial \Xi}{\partial \beta} = (N-2)\delta_1 \frac{(N-1)\beta}{\delta_2 - \beta}.
\end{align}
If \(\lambda_3 = 0\) holds, it implies \(\beta = 0\). Combined with \(\lambda_1 > 0\) and \(\lambda_2 > 0\), this corresponds to \(\Dsm = \Em\), which is the SDC region.
\end{enumerate}
When \(\lambda_2 = 0\), it implies
\(
\alpha = \delta_{1} \rho_{1}
\).
Additionally, \(\lambda_4 > 0\) leads to \(\Upsilon = 0\), which results in
\begin{align} \label{eq:upsilon sub}
   \left(N - 2\right) \beta + \delta_2 = \left(N - 2\right)\rho_{2} + \left(N - 1\right)\rho_{1}^{2} \left(\delta_{1} - 1\right) + 1.
\end{align}
\begin{enumerate}
   \item For \(\lambda_1\): Since
\(
\lambda_1 = \frac{(N-2) \delta_{1} \rho_{2} + \delta_{1} - (N-2) \beta - \delta_2}{\delta_{1} - 1}
\) holds,
and we invoke Eq.~\eqref{eq:upsilon sub} to eliminate \(\beta\), it follows that
\(
   \lambda_1 = \nu > 0
\) with \(\nu\) in Eq.~\eqref{eq:nu def}.

   \item For \(\lambda_3\): Since
\(
   \lambda_3 = (N-2)\delta_{1} \frac{(N-1)\delta_{1} \rho_{1}^{2} - (N-2) \beta - \delta_2}{\beta - \delta_2}
\) holds,
\(\lambda_3 > 0\) follows after eliminating \(\beta\).
\end{enumerate}
By combining these two cases, we complete the proof of Proposition~\ref{prop:lambda24 nonzero simul}.
\end{IEEEproof}
Accordingly, the configurations of the remaining Lagrange multiplier signs correspond to the other six non-SDC regions.\footnote{Based on the results of this paper, a Lagrange multiplier being zero is a necessary and sufficient condition for the corresponding constraint to be inactive. However, this equivalence does not generally hold mathematically within the complementary slackness condition.}
\begin{itemize}
   \item \(\mathcal{R}_{1}\):
   All constraints are active, that is, \(\prod_{i=1}^{4} \lambda_i > 0\).
   \item \(\mathcal{R}_{2}-\mathcal{R}_{5}\):
   Exactly one constraint is inactive, that is, for some \(i \in [4]\), \(\lambda_i = 0\) and \(\prod_{j \neq i} \lambda_j > 0\).
   \item \(\mathcal{R}_{6}\):
   Two constraints are inactive, that is, \(\lambda_1 = 0\) and \(\lambda_3 = 0\) while \(\lambda_2 > 0\) and \(\lambda_4 > 0\).
\end{itemize}

Next, we group the seven distortion regions into four parts to derive the optimal distortion allocations and corresponding region boundaries. Once the optimal distortion allocations are determined, Lemma~\ref{lemma:determinant} can be used to obtain the exact closed-form RDF.

\textit{1) Region Zero:} Under the 2-TC covariance matrix and 2-TD constraints, the SDC region is \(\mathcal{R}_0\). Theorem~\ref{thm:RE_psd_conditions} yields the RDF in Eq.~\eqref{eq:sdc R0 rdf} and the optimality condition in Eq.~\eqref{eq:sdc inequality v1}.

\textit{2) Region One:}
Under the complementary slackness conditions, all constraints are active in \(\mathcal{R}_1\), and this leads to \(\Dsm\).
Substituting the parameters of \(\Dsm\) into Eqs.~\eqref{eq:lambda1}-\eqref{eq:lambda4} yields the following concrete multipliers:
\begin{align}
   &\lambda_1 = (N-2)\rho_{2} + 1 - (N-1) \rho_{1}\sqrt{\frac{1-e_{2}}{1-e_{1}}}, \\
   &\lambda_2 = (N - 1)\left[e_{1} - \sqrt{\frac{1-e_{1}}{1-e_{2}}}\left(\rho_{1} - \sqrt{(1 - e_{1})(1 - e_2)}\right)\right], \\
   &\lambda_3 = (N - 2)\left(e_{1} + \frac{N - 1}{1-\rho_{2}} \omega + \rho_{1} \sqrt{\frac{1-e_{1}}{1-e_{2}}} - 1\right), \\
   &\lambda_4 = \frac{\rho_{1}}{\sqrt{(1 - e_{1})(1 - e_2)}} - 1.
\end{align}
Then, to ensure optimality, i.e., \(\prod_{i=1}^{4}\lambda_i > 0\), the four boundaries can be sequentially derived.
Since \(\mathcal{R}_{1}\) is part of the non-SDC region, the boundaries of \(\mathcal{R}_{0}\) should also be considered. Given the ranges of the boundaries in \(\mathcal{R}_1\), it suffices to analyze the relationship between \(\mathcal{R}_{0,5}^{(3), \mathrm{B}}\) and \(\mathcal{R}_{1,5}^{(4), \mathrm{T}}\), as well as \(\mathcal{R}_{0,4}^{(4), \mathrm{B}}\) and \(\mathcal{R}_{1,4}^{(3), \mathrm{T}}\).
To the left of point \(P_1\), since \(\mathcal{B}_{0,5}^{(3)}\) is a straight line, it only intersects \(\mathcal{B}_{1,5}^{(4)}\) at \(P_1\) if the condition in Eq.~\eqref{eq:sqrt rho0 cond} holds; otherwise, neither of these boundaries exists on the plane \(\mathcal{U}\), so that \(\mathcal{R}_{0,5}^{(3), \mathrm{B}} \cap \mathcal{R}_{1,5}^{(4), \mathrm{T}} = \emptyset\).
To the right of point \(P_1\), it is straightforward to conclude \(\mathcal{R}_{0,4}^{(4), \mathrm{B}} \cap \mathcal{R}_{1,4}^{(3), \mathrm{T}} = \emptyset\) by analyzing the positions of \(\mathcal{B}_{0,4}^{(4)}\) and \(\mathcal{B}_{1,4}^{(3)}\).
Therefore, all boundaries of \(\mathcal{R}_1\) are exactly the four boundaries determined by the corresponding multipliers.

\textit{3) Regions Two, Three, Five, and Six:} 
We adopt procedures similar to those outlined in \(\mathcal{R}_{1}\). In each region, based on the distinct sign combinations of the multipliers, we first determine \(\Dsm\) using the zero multipliers and the constraint equalities corresponding to the positive multipliers. These parameters are then substituted into the positive multipliers and the constraints corresponding to the zero multipliers to determine the region boundaries.

\textit{4) Region Four:}
When \(\lambda_1\lambda_2 > 0\), we have \(\delta_1 = e_1\) and \(\delta_2 = e_{2}\). From \(\lambda_3 = 0\), we obtain
\begin{align} \label{eq:alpha lambda3}
   \beta = g_{1}(\alpha),
\end{align}
where
\begin{align} \label{eq:g1beta}
   g_{1}(\alpha) = \alpha\frac{\left(N-1\right)\alpha\left(\alpha-\rho_{1}\right)+e_{2}\left(1-e_{1}\right)}{\alpha\left[\left(N-2\right)e_{1}+1\right]-\left(N-1\right)e_{1}\rho_{1}}.
\end{align}
Similarly, since \(\lambda_4 > 0\) implies that \(\Upsilon = 0\), we obtain
\begin{align} \label{eq:alpha upsilon}
   \beta = g_{2}(\alpha),
\end{align}
where
\begin{align} \label{eq:g2beta}
   g_{2}(\alpha) = \frac{\left(N-2\right)\rho_{2}+1-e_{2}}{N-2}-\frac{\left(N-1\right)\left(\alpha-\rho_{1}\right)^{2}}{\left(N-2\right)\left(1-e_{1}\right)}.
\end{align}
Combining Eq.~\eqref{eq:alpha lambda3} and Eq.~\eqref{eq:alpha upsilon} defines a cubic function of \(\alpha\), namely \(f(\alpha)\) in Eq.~\eqref{eq:region4 cubic func}.
A cubic equation inherently has at least one real root, ensuring the existence of \(\alpha_{\star}\) in \(\mathcal{R}_{4}\), but the possibility of multiple roots introduces ambiguity.

Without loss of generality, we consider the case where \( f(\alpha) \) has three real roots ordered as \( \alpha_1 \leq \alpha_2 \leq \alpha_3 \). Thus, the optimal \(\alpha_\star\) satisfies
\(
\alpha_\star \in \{\alpha_i \in \mathbb{R} : f(\alpha_i) = 0, \ i = 1, 2, 3\}
\)
and we have
\begin{align} \label{eq:beta star range}
   \max\{\alpha^{(1)}, \alpha^{(3)}, \alpha^{(4)}\} < \alpha_\star < \alpha^{(2)},
\end{align}
where
\(\alpha^{(1)} = \rho_{1} - \frac{\left(\left(N-2\right)\rho_{2}+1\right)\left(1-e_{1}\right)}{\left(N-1\right)\rho_{1}}\),
\(\alpha^{(2)} = e_{1}\rho_{1}\),
\(\alpha^{(3)} = \rho_{1} - \sqrt{\left(1-e_{2}\right) \left(1-e_{1}\right)}\)
and
\(\alpha^{(4)} = 0\). 
\(\alpha^{(i)}\) is obtained by solving \(\lambda_i > 0\) and the active constraint associated with \(\lambda_i\).
We denote the local minimum point of \(f(\alpha)\) as \(\alpha_{\mathrm{min}}\), satisfying
\begin{align} \label{eq:beta min geq e1rho1}
   \alpha_{\mathrm{min}} > \frac{t}{3\left(N-1\right)} > \alpha^{(2)},
\end{align}
with \(t\) in Eq.~\eqref{eq:region4 cubic param}.
Furthermore, we find
\begin{align}
   f(\alpha = \alpha^{(2)}) = e_{1} \rho_{1} \left(N - 2\right) \left(1-e_{1}\right)^{2} \nu \geq 0. \label{eq:beta e1rho1 geq0}
\end{align}
We conclude that the optimal solution is \(\alpha_\star = \alpha_1 \in \mathbb{R}\), which can be established by contradiction.
When \(\alpha_2, \alpha_3 \in \mathbb{R}\) and suppose \(\alpha_\star = \alpha_2\), we have \(\alpha^{(2)} > \alpha_2\) in light of Eq.~\eqref{eq:beta star range}. However, since Eq.~\eqref{eq:beta e1rho1 geq0} holds, it follows that \(\alpha^{(2)} > \alpha_3\), as \(f(\alpha) < 0\) for \(\alpha \in (\alpha_2, \alpha_3)\). This would contradict Eq.~\eqref{eq:beta min geq e1rho1} due to \(\alpha_{\mathrm{min}} < \alpha_3\).
Likewise, we have \(\alpha_\star \neq \alpha_3\), which can be deduced using the same reasoning.
Therefore, the ambiguity associated with multiple roots is eliminated. The exact \(\alpha_\star\) in Eq.~\eqref{eq:alpha star analytical} and \(\beta_\star\) directly yield the closed-form RDF.
Given the range of \(\alpha_\star\) in Eq.~\eqref{eq:beta star range}, the conditions \(f(\alpha = \alpha^{(i)}) < 0\) for \(i = 1, 3, 4\) sequentially lead to the optimality conditions \(\mathcal{B}^{(i)}\) for \(i = 1, 3, 4\) in \(\mathcal{R}^{(4)}\). It is straightforward to verify that \(\mathcal{B}_{0,5}^{(3)}\) is not one of the boundaries of \(\mathcal{R}^{(4)}\).

\section{
   Proof of Proposition~\ref{prop:region one B3 e0star} \label{proof:prop:region one B3 e0star}
}

To determine \(e_{2}^\star\), we treat \(\mathcal{B}_{1,4}^{(3)}\) as a function of \(e_1\) and apply an appropriate variable substitution. Using \(e_1 = 1 - t^2\), we transform \(\mathcal{B}_{1,4}^{(3)}\) into a quadratic equation \(q(t)\) defined for \(t \in [-1,1]\).
The discriminant of \(q(t)\) is given by \(\frac{4}{\left(1-\rho_{2}\right)^{2}}\pi(e_{2})\), where \(\pi(e_{2})\) is given by
\begin{align}
   \pi(e_{2}) = {} & \psi + \left(e_{2} + \rho_{2} - 1\right) 
\cdot \left[N^{2} \left(\rho_{2} - \rho_{1}^{2}\right) - N \left(3 \rho_{2} - 2 \rho_{1}^{2} - 1\right)\right],
\end{align}
and
\begin{align}
   \psi ={}& \frac{\left(1-\rho_{2}\right)^{2}\left(2e_{2}+\rho_{2}-2\right)^{2}}{4(1-e_{2})} 
\cdot \left[1-\frac{\rho_{2}^{2}-\rho_{1}^{2}}{\left(1-\rho_{2}\right)^{2}}\left(1 + \frac{1-2\rho_{2}}{\left(2e_{2}+\rho_{2}-2\right)^{2}}\right)\right].
\end{align}
For the unbounded \(\mathcal{B}_{1,4}^{(3)}\), 
the extreme value \(\tilde{e}_2^\star \in (-\infty, 1]\) satisfies \(\pi(\tilde{e}_2^\star) = 0\). Solving this yields
\begin{align} \label{eq:e0star tilde analytical}
   \tilde{e}_2^\star = 1 - \rho_{2} - \frac{\mu - \rho_{2} \sqrt{(N-1)\nu}}{2\sqrt{(N-1)\nu}}
\end{align}
with \(\mu = \sqrt{\left[(N-2)\rho_{2} + 1\right]\left[\rho_{2} \nu - \left(\rho_{2} - \rho_1^2\right)(1-\rho_{2})\right]}\) and \(\nu\) in Eq.~\eqref{eq:nu def}. For a given \(N\), \(\rho_{2}\), and \(\rho_1\), \(\tilde{e}_1^\star\) is obtained by mapping \(\tilde{e}_2^\star\) through the unbounded \(\mathcal{B}_{1,4}^{(3)}\),
\begin{align}
   \tilde{e}_1^\star = 1 - \frac{\rho_{1}^{2}}{1 - \tilde{e}_2^\star} \left(\frac{2\left(N-1\right)\left(1 - \tilde{e}_2^\star\right) + \left(1 - \rho_{2}\right)}{2\left(\left(N-2\right)\rho_{2} + 1\right)}\right)^{2}.
\end{align}
The condition \(\tilde{e}_1^\star \geq 0\) is equivalent to the condition in Eq.~\eqref{eq:e0star analytical}, and then
\(
e_{2}^\star = \tilde{e}_2^\star
\) is determined in Eq.~\eqref{eq:e0star tilde analytical}.
Otherwise, when \(\tilde{e}_1^\star < 0\) holds, the point \((\tilde{e}_1^\star, \tilde{e}_2^\star)\) does not lie on the bounded segment \(\mathcal{B}_{1,4}^{(3)}\), and it reduces to
\(
e_{2}^\star = 1 - \rho_1^2
\)
due to point \(P_1\). It should be noted that Eq.~\eqref{eq:e0star analytical otherwise} is obtained under the condition \(\rho_{1}^{2} < \rho_{2}+\frac{1-\rho_{2}}{N-1}\) in Eq.~\eqref{eq:2tc pd cond}.

To analyze the convergence of \(e_{2}^\star\) to \(1-\rho_{2}\), we rewrite Eq.~\eqref{eq:e0star analytical} as
\begin{align}
   1 - e_{2}^\star - \rho_{2} = \frac{\rho_1^2 (1-\rho_{2})^2}{2\sqrt{(N-1)\nu}\left(\rho_{2} \sqrt{(N-1)\nu} + \mu\right)}.
\end{align}
Given the orders of \(\mu\) and \(\nu\), the asymptotics of the denominator lead to Eq.~\eqref{eq:e0star asymptotic}. 
Then, based on the condition in Eq.~\eqref{eq:e0star analytical otherwise} (cf. Eq.~\eqref{eq:2tc pd cond}) and replacing \(\rho_1^{2}\) with \(1 - e_{2}^\star\), we obtain Eq.~\eqref{eq:e0star asymptotic otherwise}.

\section{
   Proof of Proposition~\ref{prop:Dsm bound} \label{proof:prop:Dsm bound}
}

Based on the analysis of \(\alpha_\star\) in \(\mathcal{R}_4\) as detailed in Appendix~\ref{proof:thm:seven regions main}, herein we provide bounds for \(\alpha_\star\), \(\beta_\star\) and the RDF.
All roots of \(g_1(\alpha)=0\) in Eq.~\eqref{eq:g1beta} are given as
\(\alpha_{g1,1} = 0\), 
\(\alpha_{g1,2} = \frac{\rho_{1}}{2}-\sqrt{\frac{\rho_{1}^{2}}{4}-\frac{e_{2}(1-e_{1})}{N-1}}\) 
and
\(\alpha_{g1,3} = \frac{\rho_{1}}{2}+\sqrt{\frac{\rho_{1}^{2}}{4}-\frac{e_{2}(1-e_{1})}{N-1}}\),
where \(\alpha_{g1,2}\) and \(\alpha_{g1,3}\) are real if
\begin{align} \label{eq:g1beta real}
   e_{2} \leq \frac{(N-1)\rho_{1}^{2}}{4(1-e_{1})}.
\end{align}
If Eq.~\eqref{eq:g1beta real} holds, \(g_1(\alpha) < 0\) for \(\alpha \in (\alpha_{g1,1}, \alpha_{g1,2})\); otherwise, \(g_1(\alpha)\) has only one root \(\alpha_{g1,1}\), and satisfies \(g_1(\alpha) < 0\) for \(\alpha \in (\alpha_{g1,1}, \infty)\). Furthermore, the following inequality
\begin{align} \label{g1g2 root relation}
   \alpha_{g2,1} \leq \alpha_{g1,2}
\end{align}
holds, where
\(\alpha_{g2,1} = \rho_{1} - \sqrt{\frac{\left(1-e_{1}\right) \left[(N-2) \rho_{2} + 1 - e_{2}\right]}{N - 1}}\) 
is the smaller real root of \(g_2(\alpha)=0\) in Eq.~\eqref{eq:g2beta}.
Eq.~\eqref{g1g2 root relation} holds because \(\alpha_{g1,2} - \alpha_{g2,1}\) is a decreasing function of \(e_1\) and reaches equality at the boundary \(\mathcal{B}_{4,6}^{(1)}\).
Thus, we have \(g_2(\alpha_{g1,1}) < g_1(\alpha_{g1,1}) = 0\) and \(g_1(\alpha_{g2,1}) < g_2(\alpha_{g2,1}) = 0\). Since \(g_1(\alpha)\) and \(g_2(\alpha)\) are continuous, it follows that \(\alpha_\star \in (0, \alpha_{g2,1})\) and \(\beta_\star < 0\). Let \(\alpha^{u}_{\star} = \alpha_{g2,1}\) in Eq.~\eqref{eq:beta star upper}. Thus, the bounds for \(\alpha_\star\) and \(\beta_\star\) are provided in Eqs.~\eqref{eq:beta star bounds} and \eqref{eq:alpha star bounds}.
Since \(\lambda_1, \lambda_2, \lambda_4 > 0\) hold, all primal feasibility conditions are satisfied, and Eq.~\eqref{eq:alpha lambda3} is violated. Therefore, the solution \(\delta_1 = e_1\), \(\alpha = \alpha_{g2,1}\), \(\delta_2 = e_{2}\), \(\beta = 0\) remains feasible but is strictly suboptimal in \(\mathcal{R}_{4}\). Furthermore, the upper bound of the RDF is given by Eq.~\eqref{eq:region4 rdf upper} when Eq.~\eqref{eq:D tilde sdc} holds.

\section{
   Proof of Theorem~\ref{thm:iso tighter} \label{proof:thm:iso tighter}
}

Due to the objective in Eq.~\eqref{eq:maxdet obj} and disregarding the constant \(\det(\Rm)\), Eq.~\eqref{eq:iso tighter} is equivalent to
\begin{align} \label{eq:Dsm det upper}
   [\det (\Dsm)]^{2} \leq [\det (\widetilde{\Dm})][\det (\Em)],
\end{align}
where \(\Dsm\), \(\widetilde{\Dm}\), and \(\Em\) are the distortion matrices corresponding to the rates \(\Rx([N], \ev)\), \(R^{u}_{\xv}([N], \ev)\), and \(R^{\ell}_{\xv}([N], \ev)\), respectively.
From Eq.~\eqref{eq:dsm det}, we have
\(
\det (\Dsm) = (e_{2} - \beta_\star)^{N-2}\big[ e_1 (e_{2}  + (N-2)\beta_\star) \allowbreak - (N-1)\alpha_\star^2 \big]
\)
where \(\beta_\star = g_2(\alpha_\star)\) with \(g_2(\alpha)\) in Eq.~\eqref{eq:g2beta}.
From Proposition~\ref{prop:Dsm bound}, we have
\(
   \det (\widetilde{\Dm}) = e_{2}^{N-2} \left[ e_1 e_{2} \allowbreak - (N-1)(\alpha_{g2,1}^{\iso})^2 \right]
\),
where \(\alpha_{g2,1}^{\iso}\) is the isotropic correlation version of \(\alpha_{\star}^{u}\) in Eq.~\eqref{eq:beta star upper}.
We will prove that in \(\mathcal{R}_{4}\), any \(\Dm\) in Eq.~\eqref{eq:dsm model} with \(\delta_1 = e_1\), \(\delta_2 = e_{2}\) and \(\beta = g_2(\alpha)\) that 
is feasible
also satisfies Eq.~\eqref{eq:Dsm det upper}. This directly implies that \(\Dsm\) satisfies Eq.~\eqref{eq:Dsm det upper}.
To simplify, let \(\theta = e_{2} - \beta\). Since \(\Dsm\) satisfies \(\beta = g_2(\alpha)\), \(\alpha\) can also be expressed in terms of \(\theta\).
We then construct two functions as follows,
\begin{align}
   h_1(\theta) &= \frac{\sqrt{[\det (\widetilde{\Dm})][\det (\Em)]}}{\theta^{N-2}}, \label{eq:h1}
\end{align}
\begin{align}
   h_2(\theta) &= e_{1}\left[e_{2}+(N-2)(e_{2}-\theta)\right]-(N-1) 
   \Bigg[\rho-\sqrt{\frac{1-e_{1}}{N-1}}\sqrt{(N-2)(\theta+\rho-e_{2})+1-e_{2}}\Bigg]^2. \label{eq:h2}
\end{align}
For any \(\theta \in [0, e_{2}]\), Eq.~\eqref{eq:Dsm det upper} that needs to be proved can be written as
\begin{align} \label{eq:h1 geq h2}
   h_1(\theta) \geq h_2(\theta).
\end{align}
Since \(h_2''(\theta) < 0\), \(h_2(\theta)\) is concave down. A first-order Taylor expansion at \(\theta = e_{2}\) gives \(L_{h2}(\theta)\), which satisfies
\begin{align} \label{eq:h2 taylor}
   L_{h2}(\theta) \geq h_2(\theta)
\end{align}
with 
\begin{align}
   L_{h2}(\theta) &= \left(N-2\right)\left(\phi\sqrt{1-e_{1}}-1\right)\theta +\phi\sqrt{1-e_{1}}\left\{2\left[\left(N-2\right)\rho+1\right]-Ne_{2}\right\} \notag \\
   & \quad +\left(N-1\right)\left(e_{2}-\rho^{2}\right)-\left[\left(N-2\right)\rho+1\right]\left(1-e_{1}\right), \label{eq:Lh2}
\end{align}
where
\(
\phi = \frac{\sqrt{N - 1}\rho}{\sqrt{\left(N - 2\right)\rho + 1 - e_{2}}}
\).
Next, we judiciously adopt different proofs for \(N \geq 3\) to ensure efficiency and simplicity in the derivations.

We first prove Eq.~\eqref{eq:h1 geq h2} for \( N = 3 \). Now we aim to prove that this inequality
\begin{align} \label{eq:N=3 inequality origin}
   h_{1}(\theta) \geq L_{h2}(\theta)
\end{align}
holds, which is explicitly given in 
   \begin{align} \label{eq:N=3 inequality}
&\left[e_{1}e_{2}^{3}\left(e_{1}e_{2}-2(\alpha_{g2,1}^{\iso})^2\right)\right]^{\frac{1}{2}}\theta^{-1}+ \left(1-\bar{\phi}\sqrt{1-e_{1}}\right)\theta \notag \\
\geq {} &\bar{\phi}\sqrt{1-e_{1}}\left[2\left(\rho+1\right)-3e_{2}\right]+2\left(e_{2}-\rho^{2}\right)-\left(\rho+1\right)\left(1-e_{1}\right).
   \end{align}
When Eq.~\eqref{eq:N=3 inequality origin} holds, combined with Eq.~\eqref{eq:h2 taylor}, it directly leads to Eq.~\eqref{eq:h1 geq h2}.
To distinguish it, we denote the value of \(\phi\) when \(N = 3\) by \(\bar{\phi}\), where \(\bar{\phi} = \frac{\sqrt{2}\rho}{\sqrt{1+\rho-e_{2}}}\).
For the left-hand side (LHS) of Eq.~\eqref{eq:N=3 inequality}, we obtain a lower bound by applying the arithmetic-geometric mean (AM-GM) inequality. By substituting \(\alpha_{g2,1}^{\iso}\), letting \(x = \sqrt{1 - e_1}\) for simplicity, and examining the boundaries \(\mathcal{B}_{4,6}^{(1)}\) and \(\mathcal{B}_{0,4}^{(4)}\), the domain of \(x\) for any \(N\) is 
\begin{align} \label{eq:x domain}
   \kappa \leq x \leq \phi,
\end{align}
where \(\kappa=\frac{\left(N-1\right)\rho^{2}}{\phi\left(\left(N-2\right)\rho+1\right)}\). Squaring reformulates Eq.~\eqref{eq:N=3 inequality} after lower-bounding the LHS, and through algebraic simplifications, we obtain 
\begin{align}
& -16e_{2}^{3}\left(\rho+1-e_{2}\right)\left(x-\bar{\phi}\right)^{2}\left(1-\bar{\phi}x\right)^{2}\left(1-x^{2}\right) \notag \\
& +16e_{2}^{4}\left[\left(1-x^{2}\right)^{2}\left(1-\bar{\phi}x\right)^{2}-\left(1-\bar{\phi}^{2}\right)^{4}\right]+16e_{2}^{4}\left(1-\bar{\phi}^{2}\right)^{4} \notag \\
\geq& \left\{2e_{2}\left(1-\bar{\phi}x\right)+\left[-\left(1+\rho\right)x+\left(1+\rho-e_{2}\right)\bar{\phi}\right]\left(x-\bar{\phi}\right)\right\}^{4}. \label{eq:N=3 inequality v1}
\end{align}
Equality is verified to hold at \(x = \bar{\phi}\).
Define the difference between the LHS and right-hand side (RHS) of Eq.~\eqref{eq:N=3 inequality v1} as \(\Delta(x)\), with its derivative
\(\frac{\mathrm{d} \Delta(x)}{\mathrm{d}x} = -4(x - \bar{\phi})P_{6}(x)\).\footnote{In the notation, we have introduced \(P_{m}(x)\) as a polynomial function of degree \(m\).} 
The fifth derivative \(P_{6}^{(5)}(x) > 0\) is a linear function under the condition in Eq.~\eqref{eq:x domain}, which indicates that \(P_{6}^{(4)}(x)\) is increasing. Since the sign of the product \(P_{6}^{(4)}(\phi)P_{6}^{(4)}(\kappa)\) is indeterminate, \(P_{6}^{(4)}(x)\) may have a root, corresponding to a local minimum of \(P_{6}^{(3)}(x)\). As \(P_{6}^{(3)}(\phi) \leq 0\) and \(P_{6}^{(3)}(\kappa) \leq 0\), \(P_{6}^{(2)}(x)\) decreases and may also have a root, corresponding to a local maximum of \(P_{6}'(x)\). Since \(P_{6}'(\phi) \geq 0\) and \(P_{6}'(\kappa) \geq 0\), \(P_{6}(x)\) is an increasing function. With \(P_{6}(\phi) \leq 0\), it follows that \(\Delta'(x) \leq 0\), which implies \(\Delta(x) \geq \Delta(\phi) = 0\). Therefore, Eq.~\eqref{eq:N=3 inequality v1} holds, completing the proof for the case \(N = 3\).

When \(N > 3\), the above method becomes impractical for proving Eq.~\eqref{eq:N=3 inequality origin} for two main reasons. First, the AM-GM inequality is no longer directly applicable because the exponents of \(\theta\) do not match. Second, operations such as raising to integer powers (e.g., squaring) become computationally demanding as \(N\) increases, making subsequent processing overly tedious.
The proof of Eq.~\eqref{eq:N=3 inequality origin} is complicated by the non-linear terms in \(h_1(\theta)\) from Eq.~\eqref{eq:h1}. Therefore, the argument proceeds by judiciously selecting a point for the linear approximation of \(h_1(\theta)\).
For \(\theta \in [0, e_{2}]\), let \(\theta_0\) denote the point satisfying
\(
h_1'(\theta = \theta_0) = m_{h2}
\),
where \(m_{h2}\) is the slope of \(L_{h2}(\theta)\) in Eq.~\eqref{eq:Lh2}. A first-order Taylor expansion of \(h_1(\theta)\) at \(\theta = \theta_0\) yields \(L_{h1}(\theta)\). Since \(h_1''(\theta) > 0\), \(h_1(\theta)\) is concave up, leading to
\begin{align} \label{eq:h1 taylor}
   L_{h1}(\theta) \leq h_1(\theta).
\end{align}
As \(L_{h1}(\theta)\) and \(L_{h2}(\theta)\) are parallel, using Eq.~\eqref{eq:h2 taylor} and Eq.~\eqref{eq:h1 taylor}, a sufficient condition for Eq.~\eqref{eq:h1 geq h2} to hold is that
\begin{align} \label{eq:intercept e1 b}
b_{h1} \geq b_{h2}
\end{align}
is satisfied, where \(b_{h1}\) and \(b_{h2}\) are the intercepts of \(L_{h1}(\theta)\) and \(L_{h2}(\theta)\), respectively. Eq.~\eqref{eq:intercept e1 b} is explicitly given in 
   \begin{align}
      &\left(N-1\right)e_{1}^{\frac{1}{2\left(N-1\right)}}e_{2}^{\frac{\left(2N-3\right)}{2\left(N-1\right)}}\zeta^{\frac{1}{N-1}}\left(1-\phi\sqrt{1-e_{1}}\right)^{\frac{N-2}{N-1}} \notag \\
   \geq& \left(2\phi\sqrt{1-e_{1}}+e_{1}-1\right)\left[\left(N-2\right)\rho+1\right]+e_{2}\left[N\left(1-\phi\sqrt{1-e_{1}}\right)-1\right]-\left(N-1\right)\rho^{2}, \label{eq:intercept e1}
   \end{align}
   with
   \begin{align} \label{eq:zeta def}
      \zeta&=\Big[e_{1}\left(\left(N-1\right)\rho+1\right)\left(1-\rho\right)-\left(\sqrt{\left(N-2\right)\rho+1-e_{2}}-\rho\sqrt{1-e_{1}}\sqrt{N-1}\right)^{2}\Big]^{\frac{1}{2}}.
   \end{align}
Using \(x=\sqrt{1-e_{1}}\), Eqs.~\eqref{eq:intercept e1} and~\eqref{eq:zeta def} can be rewritten by introducing \(x\), \(\tau=\frac{\left(N-2\right)\rho+1-2e_{2}}{\left(N-2\right)\rho+1}\phi\), and \(\upsilon=\frac{\left(N-2\right)\rho+1-Ne_{2}}{\left(N-2\right)\rho+1-2e_{2}}\tau\). Specifically, the RHS of Eq.~\eqref{eq:intercept e1} becomes
\begin{align}
   &-\left[\left(N-2\right)\rho+1\right]\left(x-\phi\right)\left(x-\upsilon\right) + e_{2}\left(N-1\right)\left(1-\phi^{2}\right). \label{eq:rhs simplify}
\end{align}
Similarly, the LHS is further simplified to
\begin{align}
   &\left(N-1\right)e_{2}^{\frac{\left(2N-3\right)}{2\left(N-1\right)}}(1-x^{2})^{\frac{1}{2\left(N-1\right)}}\left(1-\phi x\right)^{\frac{N-2}{N-1}}\zeta^{\frac{1}{N-1}}. \label{eq:lhs simplify}
\end{align}
For \(\zeta\) in Eq.~\eqref{eq:zeta def}, it can be rewritten as
\begin{align}
   \zeta=&\left[e_{2}\left(1-\phi^{2}\right)-\left(\left(N-2\right)\rho+1\right)\left(x-\phi\right)\left(x-\tau\right)\right]^{\frac{1}{2}}.
   \label{eq:a trans}
\end{align}
Thus, by inserting Eq.~\eqref{eq:a trans} into Eq.~\eqref{eq:lhs simplify} and combining Eq.~\eqref{eq:rhs simplify}, Eq.~\eqref{eq:intercept e1} can be reformulated on a logarithmic scale in\footnote{For simplicity, we use the natural logarithm.}.
   \begin{align}
      &\log\left(N-1\right)+\frac{\left(2N-3\right)}{2\left(N-1\right)}\log\left(e_{2}\right)+\frac{\log\left(1-x^{2}\right)}{2\left(N-1\right)}+\frac{N-2}{N-1}\log\left(1-\phi x\right)+\frac{\log\left(\zeta\right)}{N-1} \notag \\
      \geq& \log\!\left[-\left(\left(N-2\right)\rho+1\right)\left(x-\phi\right)\left(x-\upsilon\right) + e_{2}\left(N-1\right)\left(1-\phi^{2}\right)\right] \label{eq:intercept x v1}
   \end{align}
with equality satisfied at \(x = \phi\).
Let \(\Delta(x) \triangleq \text{LHS} - \text{RHS}\) of the inequality in Eq.~\eqref{eq:intercept x v1}.
After eliminating the common factor \(\frac{\left(N-2\right)\rho+1}{\left(N-1\right)\left(1-\phi^{2}\right)e_{2}} \geq 0\) and removing the root \(x = \phi\) by factoring out \((x - \phi)\), we obtain \(\bar{\Delta}(x)\), a simplified and sign-reversed version of \(\Delta(x)\) within \(\kappa \leq x \leq \phi\).
Its derivative is
\begin{align} \label{eq:derivative inequality}
   \bar{\Delta}'(x)=q_{1}(x) - q_{2}(x) - q_{3}(x) - q_{4}(x),
\end{align}
where
\begin{align}
   q_{1}(x) &= \frac{C\left(\phi-\upsilon\right)\left(x-\upsilon\right)+2\left(N-1\right)\left(1-\phi^{2}\right)e_{2}}{C\left(\phi-x\right)\left(x-\upsilon\right)+\left(N-1\right)\left(1-\phi^{2}\right)e_{2}}, \\
   q_{2}(x) &= \frac{C\left(\tau-\phi\right)\left(x-\tau\right)-2\left(1-\phi^{2}\right)e_{2}}{2C\left(x-\phi\right)\left(x-\tau\right)-2\left(1-\phi^{2}\right)e_{2}}, \\
   q_{3}(x) &= \frac{e_{2}}{C}\frac{\left(N-2\right)\phi^{2}}{1-\phi x}, \\
   q_{4}(x) &= \frac{e_{2}}{C}\frac{1+\phi x}{1-x^{2}}
\end{align}
and \(C=\left(N-2\right)\rho+1\).
To prove Eq.~\eqref{eq:intercept e1 b}, we establish its equivalent form Eq.~\eqref{eq:intercept x v1} by showing the difference \(\Delta(x) \geq 0\) or \(\bar{\Delta}(x) \leq 0\). Noting that \(\bar{\Delta}(\phi) \leq 0\) and \(\kappa \leq x \leq \phi\), a sufficient condition is to establish that its first derivative satisfies \(\bar{\Delta}'(x) \geq 0\),
and we derive the required inequality
\begin{align} \label{eq:derivative inequality v1}
   q_{1}(x)-q_{3}(x) \geq q_{2}(x)+q_{4}(x)
\end{align}
based on Eq.~\eqref{eq:derivative inequality}.
For the LHS of Eq.~\eqref{eq:derivative inequality v1}, written as \(F_{\text{LHS}}\), the sign of its derivative is determined by the numerator, which is a quartic function \(P_{4}(x)\).
The third derivative \(P_{4}^{(3)}(x)\) satisfies \(P_{4}^{(3)}(x) \geq 0\) within Eq.~\eqref{eq:x domain}, and the second derivative \(P_{4}^{(2)}(x)\) increases. Since \(P_{4}^{(2)}(x) \leq P_{4}^{(2)}(\phi) \leq 0\) holds, the first derivative \(P_{4}'(x)\) decreases. With \(P_{4}'(x) \geq P_{4}'(\phi) \geq 0\), \(P_{4}(x)\) increases. Similarly, the RHS is also an increasing function with respect to \(x\).
Thus, the LHS of Eq.~\eqref{eq:derivative inequality v1} satisfies
\(F_{\text{LHS}}(x) \geq F_{\text{LHS}}(x = \kappa)\),
where \(F_{\text{LHS}}(x = \kappa) = 2\), while for the RHS of Eq.~\eqref{eq:derivative inequality v1}, \(F_{\text{RHS}}\), we have
\(F_{\text{RHS}}(x) \leq F_{\text{RHS}}(x = \phi)\),
where
\begin{align}
   &F_{\text{RHS}}(x = \phi) 
   = 1-\frac{e_{2}\left[3\left(N-1\right)\rho^{2}+\left(N-2\right)\rho+1-e_{2}\right]}{\left[\left(N-2\right)\rho+1\right]\left[e_{2}+\left(N-1\right)\rho^{2}-\left(N-2\right)\rho-1\right]}.
\end{align}
A sufficient condition for the validity of \(\bar{\Delta}'(x) \geq 0\) or Eq.~\eqref{eq:derivative inequality v1} is given by
\begin{align} \label{eq:sufficient cond1}
   F_{\text{LHS}}(x = \kappa) \geq F_{\text{RHS}}(x = \phi),
\end{align}
which is true if and only if \((e_{1}, e_2) \in \widetilde{\mathcal{R}}_{4}^\mathrm{B}\). The corresponding boundary is
\begin{align}
   \widetilde{\mathcal{B}}_{4} \!=\! {} &\Bigg\{(e_{1}, e_2) \in [0,1]^{2} : e_{2} = \frac{3\left(N-1\right)\rho^{2}}{2} + \left(N-2\right)\rho + 1  \notag \\
   &\;\;- \frac{\rho\sqrt{\left(N-1\right)\left\{9\left(N-1\right)\rho^{2} + 16\left[\left(N-2\right)\rho + 1\right]\right\}}}{2}\Bigg\}.
\end{align}
To relate \(\widetilde{\mathcal{R}}_{4}^\mathrm{B}\) to \(\mathcal{R}_{4}\), 
it is necessary to examine the topmost boundary \(\mathcal{B}_{1,4}^{(3)}\) of \(\mathcal{R}_{4}\). Due to its implicitness, we expand it to the line segment \(\overline{P_1P_4}\): \(e_{2}=1-\rho_{2}\) based on Proposition~\ref{prop:region one B3 e0star}.
For \(\widetilde{\mathcal{R}}_{4}^\mathrm{B}\), the minimum \(e_{2}\) on \(\widetilde{\mathcal{B}}_{4}\) satisfies \(\min_{e_{2}} \widetilde{\mathcal{B}}_{4} \geq 1 - \rho\), with equality when \(N = 5\). Hence, it follows that \(\mathcal{R}_{4} \subset \widetilde{\mathcal{R}}_{4}^\mathrm{B}\) when \(N \geq 5\).
Thus, we have proven Eq.~\eqref{eq:sufficient cond1}, which leads to Eq.~\eqref{eq:derivative inequality v1}.

For the case \(N = 4\), Eq.~\eqref{eq:derivative inequality v1} can be rewritten as
\begin{align} \label{eq:derivative inequality v2}
   q_{1}(x) - q_{2}(x) \geq q_{3}(x) + q_{4}(x).
\end{align}
Similarly to Eq.~\eqref{eq:derivative inequality v1}, it can be verified that both the LHS and RHS of Eq.~\eqref{eq:derivative inequality v2} are increasing functions with respect to \(x\). For the LHS, we have
\(G_{\text{LHS}}(x) \geq G_{\text{LHS}}(x = \kappa)\),
and for the RHS, we have
\(G_{\text{RHS}}(x) \leq G_{\text{RHS}}(x = \phi)\).
Therefore, we derive another sufficient condition for the validity of \(\bar{\Delta}'(x) \geq 0\) or Eq.~\eqref{eq:derivative inequality v2}
\begin{align} \label{eq:sufficient cond2}
   G_{\text{LHS}}(x = \kappa) \geq G_{\text{RHS}}(x = \phi),
\end{align}
which defines the region \(\bar{\mathcal{R}}_{4}^\mathrm{B}\) in the \((e_{1},e_2)\) plane. The corresponding boundary \(\bar{\mathcal{B}}_{4}\) is given in 
\begin{align*}
& \bar{\mathcal{B}}_{4} = \Big\{ (e_{1}, e_2) \in [0,1]^{2} : \\
& \left(1-\rho\right)^{2}\left[\left(N-2\right)\rho+1\right]^{2}\left[\left(N-1\right)\rho+1\right]^{2}+e_{2}^{3}\left[\left(N-1\right)\rho^{2}-\left(N-2\right)\rho-1\right] \\
&+ e_{2}^{2}\left(\left[\left(N-2\right)\rho+1\right]\left[\left(\left(N-2\right)\rho+1\right)\left(3-\left(N+1\right)\rho^{2}\right)+\rho\left(1-\rho\right)\left(3\rho-1\right)\right]+\rho\left(1-\rho\right)^{3}\right) \\
\refstepcounter{equation}
&-3e_{2}\left(1-\rho\right)\left[\left(N-2\right)\rho+1\right]^{2}\left[\left(N-1\right)\rho+1\right] = 0 \Big\}.\tag{\theequation} \label{eq:bar B4}
\end{align*}
When solving \(\bar{\mathcal{B}}_{4}\) for \(e_{2}\), we encounter a cubic equation. The roots of this cubic equation are tedious, but they are more practical for verification in finite cases, which is why we adopt this method for \(N = 4\).  
By converting the boundary equation into a functional form, it is straightforward to verify that \(\bar{\mathcal{B}}_{4}^{N=4}\) decreases over \(e_{2} \in [0, 1-\rho]\). 
We then have \(\bar{\mathcal{B}}_{4}^{N=4} \geq \bar{\mathcal{B}}_{4}^{N=4}(e_{2} = 1-\rho) = 9 \rho^{3} \left(1-\rho\right)^{2} \left(1+\rho\right) \geq 0\). Therefore, we also have \(\mathcal{R}_{4} \subset \bar{\mathcal{R}}_{4}^\mathrm{B}\) when \(N = 4\). Finally, we complete the proof of Eq.~\eqref{eq:derivative inequality v2} for \(N = 4\) by establishing Eq.~\eqref{eq:sufficient cond2}. Notably, \(\bar{\mathcal{R}}_{4}^\mathrm{B}\) contains a larger feasible region than \(\widetilde{\mathcal{R}}_{4}^\mathrm{B}\), so Eq.~\eqref{eq:sufficient cond2} provides a weaker sufficiency compared to Eq.~\eqref{eq:sufficient cond1}. However, as mentioned earlier, proving Eq.~\eqref{eq:sufficient cond2} for arbitrary \(N > 3\) is exceptionally cumbersome.

When \( N \geq 5 \), Eq.~\eqref{eq:derivative inequality v1} holds, and when \( N = 4 \), Eq.~\eqref{eq:derivative inequality v2} holds. We conclude that for \( N > 3 \), \( \bar{\Delta}'(x) \geq 0 \) holds, and thus for \( \kappa \leq x \leq \phi \), the difference in Eq.~\eqref{eq:intercept x v1} satisfies \( \Delta(x) \geq 0 \). This verifies the validity of Eq.~\eqref{eq:intercept e1 b}. Finally, combining the result for \( N = 3 \), Eq.~\eqref{eq:Dsm det upper} is proved for \( N \geq 3 \).


\bibliographystyle{IEEEtran}
\bibliography{references}

\end{document}